\theoremstyle{plain}
\newtheorem{theorem}{Theorem}[section]
\newtheorem{proposition}{Proposition}[section]
\newtheorem{corollary}{Corollary}[section]
\newtheorem{lemma}{Lemma}[section]
\theoremstyle{definition}
\newtheorem{definition}{Definition}[section]
\theoremstyle{remark}
\newtheorem{example}{Example}[section]
\newtheorem{remark}{Remark}[section]
\title{Optimal Observables for Minimum-Error State Discrimination in General Probabilistic Theories}
\author{Koji Nuida, Gen Kimura, Takayuki Miyadera}
\date{Research Center for Information Security (RCIS), National Institute of Advanced Industrial Science and Technology (AIST)\\%
Akihabara-Daibiru Room 1003, 1-18-13 Sotokanda, Chiyoda-ku, Tokyo 101-0021, Japan\\%
E-mail: \{k.nuida, gen-kimura, miyadera-takayuki\}[at]aist.go.jp%
}
\begin{document}
\maketitle
\begin{abstract}
General Probabilistic Theories provide the most general mathematical framework for the theory of probability in an operationally natural manner, and generalize classical and quantum theories.
In this article, we study state-discrimination problems in general probabilistic theories using a Bayesian strategy.
After re-formulation of the theories with mathematical rigor, we first prove that an optimal observable to discriminate any (finite) number of states always exists in the most general setting.
Next, we revisit our recently proposed geometric approach for the problem and show that, for two-state discrimination, this approach is indeed effective in arbitrary dimensional cases.
Moreover, our method reveals an operational meaning of Gudder's \lq\lq intrinsic metric'' by means of the optimal success probability, which turns out to be a generalization of the trace distance for quantum systems.
As its by-product, an information-disturbance theorem in general probabilistic theories is derived, generalizing its well known quantum version.
\end{abstract}
\section{Introduction}
\label{sec:intro}

\subsection{Background}
\label{subsec:intro_background}
Among many attempts to understand quantum theory axiomatically, an operationally natural approach for the general theory of probability, recently referred to as \emph{general probabilistic theories} (or \emph{generic probabilistic models}), has been studied \cite{Gud_book88,Gud_book79,Hol_book,Oza80} and has attracted much attention in the recent development of quantum information theory (e.g., \cite{BBLW07,ref:Dariano,KMI08}). 
Such an approach provides a unified mathematical framework that involves not only classical and quantum theories but also more general settings that would be candidates of possible future extensions of the present quantum theory.
One of the motivations of such an approach is to understand quantum mechanics better by introducing various viewpoints especially with information theoretic point of view. 
Another motivation to investigate such a general theory has arisen recently from research on quantum information theory including quantum information security.
Among recent development of information theory and information security, one of the greatest impacts was provided by Shor's discovery \cite{Sho94} of an efficient (i.e., polynomial-time) integer factoring algorithm for quantum computers that reveals a future practical threat against several standard cryptosystems in the present time, such as RSA cryptosystem \cite{RSA78}.
This history suggests a non-negligible possibility that any cryptosystem with security based on the present physical theory, even quantum theory, may fall into insecure once a further advanced physical theory is discovered and applied to information technology.
Hence a study of possible extensions of the present physical theory is of importance and interest from not only theoretical but also practical standpoints.

One of the most important aims of studying general probabilistic theories is to determine which characteristics are typical for classical or quantum systems and which are not.
For example, in a recent article \cite{BBLW07} Barnum et al.\ investigated cloning and broadcasting of states in a general probabilistic theory.
They proved (in finite-dimensional cases) that universal cloning or universal broadcasting is possible only for classical systems, which generalizes the No-Cloning Theorem and the No-Broadcasting Theorem for quantum systems \cite{BCFJS96,Die82,WZ82,Yue86}.
Another example relevant to our present work is our recent study \cite{KMI08} on minimum-error state discrimination problems in general probabilistic theories (in this article the word \lq\lq minimum-error'' is omitted since we do not discuss other kinds of discrimination problems such as unambiguous state discrimination).
State discrimination problems have been well investigated for quantum systems (e.g., \cite{BKMH97,Hel_book,JRF02,YKL75}), but optimal success probabilities to discriminate given states and the corresponding optimal measurements were determined only in very restricted cases such as two-state cases.
In \cite{KMI08} we gave a formulation of state discrimination problems in finite-dimensional general probabilistic theories, and introduced from a geometric viewpoint a class of special ensembles of states called \emph{Helstrom families}: We showed that the optimal success probability can be determined by a Helstrom family if it exists. 
For the existence, we have discussed only for two-state cases and some other cases of states with symmetric configuration, and it has been shown that a Helstrom family always exists for both classical and quantum systems in any \lq\lq generic'' case (specified in a certain well-defined manner).
However, existence of Helstrom families in more general (neither classical nor quantum) cases has not been clarified.
The main aim of this article is to study the existence problem of the Helstrom family in general probabilistic theories with arbitrary dimension that are neither classical nor quantum. 

\subsection{Our contributions and organization of the article}
\label{subsec:intro_contribution}

In Sect.\ \ref{sec:formulation_model}, we summarize a mathematical framework for general probabilistic theories.
Following several preceding works for general probabilistic theories (e.g., \cite{BBLW07,Gud_book79,Hol_book,KMI08,Mac_book,Oza80}), our formulation is based on the notions of \emph{states}, \emph{effects} and \emph{observables}, as well as the notion of probabilistic state ensembles.
Namely, we regard the state space as a \lq\lq convex structure'' \cite{Gud_book79}.
A standard argument shows that the associated \lq\lq separated'' state space is embedded as a convex subset $\mathcal{S}$ in a real vector space $V$.
For the sake of minimality, we assume that $V$ is the affine hull of $\mathcal{S}$ and the topology of $V$ is the weak topology generated by all effects on $\mathcal{S}$.
We emphasize that $\mathcal{S}$ is usually assumed to be compact with respect to this topology, but in the present article compactness is not assumed to keep the most generality of our setting. 
In fact, when $\mathcal{S}$ is not compact with respect to this topology, we further take a \lq\lq virtual state space'' $\widetilde{\mathcal{S}} \supset \mathcal{S}$ and a \lq\lq virtual underlying space'' $\widetilde{V} \supset V$ such that $\widetilde{\mathcal{S}}$ is a \emph{compact} convex subset of $\widetilde{V}$ and some additional conditions are satisfied (see Theorem \ref{thm:state_space_embedded} for the precise statement):
\begin{displaymath}
\begin{matrix}
& \mathrm{cl}_{\widetilde{V}}(\mathcal{S}) = & \widetilde{\mathcal{S}} & \quad \subset \quad & \widetilde{V} \\
& & \cup & & \cup \\
\overline{\mathcal{S}_0} = \mathcal{S}_0/{\sim} & \quad \simeq \quad & \mathcal{S} & \quad \subset \quad & V 
\end{matrix}
\end{displaymath}
By those properties, the objects $V$, $\widetilde{\mathcal{S}}$ and $\widetilde{V}$ are uniquely determined by $\mathcal{S}$, called the \emph{minimal framework}.
See Appendices \ref{sec:appendix_S_V}--\ref{sec:appendix_uniqueness} for further technical details.
Now the effects on the \lq\lq real'' state space $\mathcal{S}$ are in one-to-one correspondence to their continuous extensions to $\widetilde{\mathcal{S}}$, called \lq\lq virtual effects''.
A similar correspondence exists between observables on $\mathcal{S}$ and \lq\lq virtual observables'' on $\widetilde{\mathcal{S}}$.
Moreover, for each \lq\lq virtual state'' $\widetilde{s} \in \widetilde{\mathcal{S}} \setminus \mathcal{S}$, any $\varepsilon > 0$ and any observables $\mathbf{O}_1,\dots,\mathbf{O}_k$, there exists a \lq\lq real state'' $s \in \mathcal{S}$ such that the results of measurements of these $\mathbf{O}_i$ at $\widetilde{s}$ are within $\varepsilon$-error from the results at $s$; physically, this means that virtual states and real states are indistinguishable by experiments.
Note that, in finite-dimensional cases, the underlying space $V$ is always isomorphic to a finite-dimensional Euclidean space and now $\mathcal{S}$ is nothing but a bounded convex subset of the Euclidean space $V$.

In Sect.\ \ref{sec:discrimination_problem}, we give a natural formulation of state discrimination problems in general probabilistic theories by following our preceding work \cite{KMI08}.
Our present formulation coincides with the preceding one when $\mathcal{S}$ is compact.
Moreover, we show that an optimal observable always exists for discrimination of any (finite) number of given states with arbitrary a priori occurrence probabilities (see Theorem \ref{thm:optimal_observable_exist}).
Although it would be possible to interpret this result as a special case of a general theorem by Ozawa \cite{Oza80}, we include the proof in this article for the reader's convenience because of its simplicity.  
(The proof uses only the existence theorem of maximum values of continuous functions on compact spaces and some elementary arguments for topological spaces.) 
Note that the argument in Sect.\ \ref{sec:discrimination_problem} is closed within the real state space $\mathcal{S}$, therefore the additional notions such as virtual states and virtual observables are not yet needed.

In Sect.\ \ref{sec:Helstrom}, we introduce the notion of (weak) Helstrom families by translating the definition given in \cite{KMI08} to our minimal framework.
A weak Helstrom family yields an upper bound of the optimal success probability for discriminating given states, while a Helstrom family yields the tight bound.
A sufficient condition for a weak Helstrom family to be a Helstrom family has been given \cite{KMI08}.
As a consequence of the above-mentioned existence theorem of an optimal observable, we show that the above sufficient condition is also necessary, except for the meaningless cases called \emph{non-generic cases}.
(By definition, \emph{generic cases} are the cases where there exists a discrimination strategy better than simply outputting the candidate state with highest a priori probability.)
In two-state cases, the above necessary and sufficient condition turns out to be \lq\lq distinguishability'' of two (possibly virtual) states $t_1,t_2$ associated to a given weak Helstrom family, therefore the problem of finding a Helstrom family is reduced to a study of distinguishable (virtual) states. 

Finally, in Sect.\ \ref{sec:existence} we prove that a Helstrom family for two-state discrimination always exists in generic cases (see Theorem \ref{thm:HE_exist}), hence in such a case the optimal success probability can be determined (at least in principle) by just finding a Helstrom family.
Our argument works in a general case of arbitrary dimension that may be neither classical nor quantum.
Owing to the result, we also give a simple criterion for generic cases among all two-state cases (see Theorem \ref{thm:criterion_generic}): Given two distinct candidate states $s_1,s_2 \in \mathcal{S}$ with positive a priori probabilities $p_1,p_2$, the case is non-generic if and only if we have $p_1 \neq p_2$ and an element $s^{\ast} = (p_1 s_1 - p_2 s_2)/(p_1 - p_2)$ of $V$ lies outside the state space $\mathcal{S}$.
In particular, the equiprobable cases $p_1 = p_2 = 1/2$ are always generic, therefore in such cases we are always able to discriminate (at least in principle) given states with probability higher than $1/2$.
Moreover, our result also reveals a relation of Gudder's distance between two states $s_1,s_2 \in \mathcal{S}$ \cite{Gud_book79} with the optimal success probability of discriminating $s_1$ and $s_2$ in equiprobable cases, and also an operational meaning of Gudder's intrinsic metric \cite{Gud_book79} that gives an operationally natural generalization of the trace distance for quantum systems to general probabilistic theories (see Remark \ref{rem:Gudder_distance_optimal_prob}). 
As an application, a simple (qualitative) version of the information disturbance theorem in general probabilistic theories is shown to be hold that generalizes the corresponding theorem in quantum theory.   

\section{A Mathematical Framework for General Probabilistic Theories}
\label{sec:formulation_model}

In this section, we introduce a mathematical framework for general probabilistic theories.
In this article, any vector space is defined over the real field $\mathbb{R}$ unless otherwise specified.

Following the preceding works \cite{BBLW07,Gud_book79,Hol_book,KMI08,Mac_book,Oza80}, we start with a set $\mathcal{S}_0$ of \emph{states}, called a \emph{state space}, that is a \emph{convex structure} \cite{Gud_book79} in the following sense: For two states $s,t \in \mathcal{S}_0$ and two weights $\lambda,\mu \geq 0$ such that $\lambda + \mu = 1$, a state $\langle \lambda,\mu;s,t \rangle \in \mathcal{S}_0$ called an \emph{ensemble} of $s,t$ with weights $\lambda,\mu$ is uniquely determined.
Physically, $\langle \lambda,\mu;s,t \rangle$ means the probabilistic state ensemble of $s$ and $t$ with a priori probabilities $\lambda$ and $\mu$.
We regard any convex subset of a vector space as a convex structure with a natural operation $\langle \lambda,\mu;s,t \rangle = \lambda s + \mu t$.
Note that any other postulate for the operation $\langle \lambda,\mu;s,t \rangle$ is not required; some natural properties of state ensembles will be induced by construction of the associated \lq\lq separated'' state space presented below.

For any convex structure $C$, we say that a functional $f:C \to \mathbb{R}$ on $C$ is \emph{affine} if we have $f(\langle \lambda,\mu;s,t \rangle) = \lambda f(s) + \mu f(t)$ for any $s,t \in C$.
Let $\mathcal{E}(C)$ denote the set of all affine functionals $e$ on $C$ with image $e(C)$ contained in the unit interval $\left[0,1\right]$ in $\mathbb{R}$.
Then we call each $e \in \mathcal{E}(\mathcal{S}_0)$ an \emph{effect} on $\mathcal{S}_0$.
Now we define an equivalence relation $\sim$ on $\mathcal{S}_0$ by setting $s \sim t$ if and only if $e(s) = e(t)$ for every $e \in \mathcal{E}(\mathcal{S}_0)$.
Let $\overline{s}$ denote the equivalence class of $s \in \mathcal{S}_0$.
Then the quotient set $\overline{\mathcal{S}_0} = \mathcal{S}_0 /{\sim}$ is also a convex structure with $\langle \lambda,\mu;\overline{s},\overline{t} \rangle = \overline{\langle \lambda,\mu;s,t \rangle}$ for any $\overline{s},\overline{t} \in \overline{\mathcal{S}_0}$.
A physical interpretation is that, as two equivalent states (in the above sense) are statistically indistinguishable for any effect, we would have no physical way to distinguish those states.
(See below for the definition of observables composed of effects.)
Now each $e \in \mathcal{E}(\mathcal{S}_0)$ induces an effect $\overline{e} \in \mathcal{E}(\overline{\mathcal{S}_0})$ on $\overline{\mathcal{S}_0}$ by $\overline{e}(\overline{s}) = e(s)$ for each $\overline{s} \in \overline{\mathcal{S}_0}$, and this defines a one-to-one correspondence between $\mathcal{E}(\mathcal{S}_0)$ and $\mathcal{E}(\overline{\mathcal{S}_0})$.
Moreover, the definition of the set $\overline{\mathcal{S}_0}$ implies the following property (see e.g., \cite{Gud_book79,Hol_book,Oza80}):
\begin{lemma}
\label{lem:separated}
The convex structure $\overline{\mathcal{S}_0}$ is separated, in the sense that for any distinct $s,t \in \overline{\mathcal{S}_0}$, there exists an effect $e \in \mathcal{E}(\overline{\mathcal{S}_0})$ such that $e(s) \neq e(t)$.
\end{lemma}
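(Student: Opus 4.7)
The plan is to observe that the statement follows almost tautologically from the very definition of the equivalence relation $\sim$, the only thing to check carefully being that the induced map $\overline{e}$ is really a well-defined effect on the quotient convex structure.

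First, I take two distinct equivalence classes $\overline{s}, \overline{t} \in \overline{\mathcal{S}_0}$ and fix representatives $s, t \in \mathcal{S}_0$. Since $\overline{s} \neq \overline{t}$, by definition of the quotient we have $s \not\sim t$, and by the contrapositive of the definition of $\sim$ on $\mathcal{S}_0$, there exists an effect $e \in \mathcal{E}(\mathcal{S}_0)$ with $e(s) \neq e(t)$. The candidate separating effect on $\overline{\mathcal{S}_0}$ is then the induced functional $\overline{e}$ described in the paragraph just above the lemma, defined by $\overline{e}(\overline{u}) = e(u)$ for any representative $u$.

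Second, I would check that $\overline{e}$ is indeed an element of $\mathcal{E}(\overline{\mathcal{S}_0})$. Well-definedness of $\overline{e}$ on equivalence classes is immediate: if $u \sim u'$, then by definition of $\sim$ every effect, and in particular $e$ itself, takes the same value at $u$ and $u'$. The image of $\overline{e}$ lies in $[0,1]$ because the image of $e$ does. Affineness follows from the definition of the convex structure on $\overline{\mathcal{S}_0}$ via representatives: for $\overline{u}, \overline{v} \in \overline{\mathcal{S}_0}$ and weights $\lambda + \mu = 1$,
\begin{equation*}
\overline{e}(\langle \lambda,\mu; \overline{u}, \overline{v} \rangle) = \overline{e}(\overline{\langle \lambda,\mu;u,v \rangle}) = e(\langle \lambda,\mu;u,v \rangle) = \lambda e(u) + \mu e(v) = \lambda \overline{e}(\overline{u}) + \mu \overline{e}(\overline{v}),
\end{equation*}
using that $e$ itself is affine on $\mathcal{S}_0$.

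Finally, I conclude by evaluating: $\overline{e}(\overline{s}) = e(s) \neq e(t) = \overline{e}(\overline{t})$, which exhibits the required effect separating $\overline{s}$ and $\overline{t}$.

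There is no real obstacle; the content of the lemma is simply that the passage to the quotient $\mathcal{S}_0 / {\sim}$ is exactly designed to produce a separated convex structure, and the only routine verifications are the three properties (well-definedness, affineness, range in $[0,1]$) used to promote $e$ to $\overline{e}$. These were already implicit in the paragraph preceding the lemma, so the argument amounts to spelling them out.
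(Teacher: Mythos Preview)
Your proof is correct and follows exactly the approach the paper indicates: the lemma is stated as an immediate consequence of the definition of $\overline{\mathcal{S}_0}$ (with references to standard sources rather than an explicit proof), and you have simply spelled out that tautological argument together with the routine verification that $\overline{e}\in\mathcal{E}(\overline{\mathcal{S}_0})$.
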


The next theorem presents our framework involving the separated state space $\overline{\mathcal{S}_0}$.
To our framework we intend to introduce as few mathematical structures as possible subject to physically natural requirements; we call the resulting framework a \emph{minimal framework}.
Here we use the notion of topological vector spaces; we refer to the book \cite{SW_book} for theory of topological vector spaces together with some relevant topics in general topology.
In what follows, we abbreviate \lq\lq topological vector space'' to \lq\lq t.v.s.'', and \lq\lq locally convex'' to \lq\lq l.c.''.
For any t.v.s.\ $W$, let $\mathcal{L}_c(W)$ denote the set of all continuous linear functionals $W \to \mathbb{R}$.
Moreover, let $\mathcal{T}(X)$ denote the topology on a set $X$ if it is clear from the context.
Then the above-mentioned theorem on our minimal framework is the following:
\begin{theorem}
\label{thm:state_space_embedded}
Given a separated convex structure $\overline{\mathcal{S}_0}$ as above, there exist the following objects:
\begin{itemize}
\item a l.c.\ Hausdorff t.v.s.\ $\widetilde{V}$ (over $\mathbb{R}$);
\item a convex subset $\widetilde{\mathcal{S}}$ of $\widetilde{V}$ such that $\widetilde{V}$ is the affine hull $\mathrm{Aff}(\widetilde{\mathcal{S}})$ of $\widetilde{\mathcal{S}}$;
\item a topological vector subspace $V$ of $\widetilde{V}$ that is dense in $\widetilde{V}$;
\item a convex subset $\mathcal{S}$ of $V$ such that $\mathrm{Aff}(\mathcal{S}) = V$,
\end{itemize}
satisfying the following conditions:
\begin{itemize}
\item $\mathcal{S}$ is isomorphic to $\overline{\mathcal{S}_0}$, in the sense that there exists a bijection $\varphi:\overline{\mathcal{S}_0} \to \mathcal{S}$ such that $\varphi(\langle \lambda,\mu;s,t \rangle) = \lambda \varphi(s) + \mu \varphi(t)$ for any $s,t \in \overline{\mathcal{S}_0}$;
\item the topology $\mathcal{T}(\widetilde{V})$ of $\widetilde{V}$ is a weak topology, i.e.,  the topology with minimal family of open subsets to make every $f \in \mathcal{L}_c(\widetilde{V})$ continuous;
\item the induced topology on $\mathcal{S}$ is the weakest to make every $e \in \mathcal{E}(\mathcal{S})$ continuous;
\item the induced topology on $V$ is the weakest to make every linear functional $f:V \to \mathbb{R}$, such that $f(\mathcal{S}) \subset \mathbb{R}$ is bounded, a continuous map;
\item $\widetilde{\mathcal{S}}$ is the closure $\mathrm{cl}_{\widetilde{V}}(\mathcal{S})$ of $\mathcal{S}$ in $\widetilde{V}$, and $\widetilde{\mathcal{S}}$ is compact and complete.
\end{itemize}
\begin{displaymath}
\begin{matrix}
& \mathrm{cl}_{\widetilde{V}}(\mathcal{S}) = & \widetilde{\mathcal{S}} & \quad \subset \quad & \widetilde{V} \\
& & \cup & & \cup \\
\overline{\mathcal{S}_0} = \mathcal{S}_0/{\sim} & \quad \simeq \quad & \mathcal{S} & \quad \subset \quad & V 
\end{matrix}
\end{displaymath}
Moreover, these objects are unique; namely, for another collection $\mathcal{S}'$, $V'$, $\widetilde{\mathcal{S}}'$ and $\widetilde{V}'$ of such objects, there exists an affine isomorphism $\widetilde{V} \to \widetilde{V}'$ that is a homeomorphism and maps each of $\mathcal{S}$, $V$ and $\widetilde{\mathcal{S}}$ onto the corresponding object.
\end{theorem}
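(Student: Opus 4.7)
The plan is to build the entire tower concretely by embedding $\overline{\mathcal{S}_0}$ into a product of copies of $\mathbb{R}$ indexed by effects, and then to read off each listed property. Set $\mathcal{E} := \mathcal{E}(\overline{\mathcal{S}_0})$ and define the evaluation map $\Phi: \overline{\mathcal{S}_0} \to \mathbb{R}^{\mathcal{E}}$ by $\Phi(s)(e) = e(s)$. Affineness of each effect makes $\Phi$ preserve ensembles, and Lemma \ref{lem:separated} makes $\Phi$ injective; take $\mathcal{S} := \Phi(\overline{\mathcal{S}_0}) \subset [0,1]^{\mathcal{E}}$. Equipping $\mathbb{R}^{\mathcal{E}}$ with the product topology makes it a l.c.\ Hausdorff t.v.s., and by Tychonoff's theorem $[0,1]^{\mathcal{E}}$ is compact. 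Setting $\widetilde{\mathcal{S}} := \mathrm{cl}(\mathcal{S}) \subset [0,1]^{\mathcal{E}}$ then yields a compact (hence complete) convex subset, convexity passing from $\mathcal{S}$ via continuity of the ensemble operation $(s,t) \mapsto \lambda s + \mu t$.

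Next, I would define $V$ and $\widetilde{V}$ as vector-space realizations of the affine hulls of $\mathcal{S}$ and $\widetilde{\mathcal{S}}$, each equipped with the subspace topology from $\mathbb{R}^{\mathcal{E}}$; then $\widetilde{V}$ is l.c.\ Hausdorff as a subspace of one. Density of $V$ in $\widetilde{V}$ follows by approximating each finite affine combination $\sum_i \lambda_i \widetilde{s}_i$ by $\sum_i \lambda_i s_{i,n}$ with $s_{i,n} \in \mathcal{S}$ converging to $\widetilde{s}_i$. The topology characterizations come from the fact that the product topology on $\mathbb{R}^{\mathcal{E}}$ is by definition the weak topology generated by the coordinate projections: on $\mathcal{S}$ these projections correspond precisely to effects in $\mathcal{E}(\mathcal{S})$; on $\widetilde{V}$ the continuous linear functionals are, by Hahn--Banach and the standard description of the dual of a product with the product topology, generated by finite linear combinations of coordinate projections, matching the claim that $\mathcal{T}(\widetilde{V})$ is the weak topology from $\mathcal{L}_c(\widetilde{V})$; and on $V$ the linear functionals bounded on $\mathcal{S}$ correspond (after an affine normalization) exactly to effects, so the subspace topology agrees with the stated intrinsic description.

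For uniqueness, an isomorphism $\psi: \mathcal{S} \to \mathcal{S}'$ of convex structures (obtained by composing the two $\varphi$'s) extends uniquely to an affine bijection $V \to V'$ by the universal property of the affine hull. Because effects on $\mathcal{S}$ correspond bijectively through $\psi$ to effects on $\mathcal{S}'$, and the topologies on both sides are characterized identically in terms of these effects, this extension is a homeomorphism. It then extends further to an affine homeomorphism $\widetilde{V} \to \widetilde{V}'$ by first extending $\psi$ to a homeomorphism $\widetilde{\mathcal{S}} \to \widetilde{\mathcal{S}}'$ (using density of $\mathcal{S}$ in $\widetilde{\mathcal{S}}$ and compactness of both targets) and then passing to affine combinations.

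The main obstacle I anticipate is the careful identification of topologies at the level of $V$: verifying that the abstract characterization as the weakest topology making every linear functional bounded on $\mathcal{S}$ continuous coincides with the subspace topology inherited from $\widetilde{V}$ requires showing that every bounded affine functional on $\mathcal{S}$ arises, up to affine normalization, from an effect (equivalently, from a coordinate projection), and this involves some care with affine-versus-linear extensions since $\mathcal{S}$ need not contain the origin of $V$. Once this correspondence is pinned down, the remaining verifications reduce to standard facts about weak topologies and subspaces of products.
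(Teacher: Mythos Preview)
Your approach is correct and genuinely different from the paper's. The paper embeds $\overline{\mathcal{S}_0}$ into the algebraic dual $\mathcal{A}(\overline{\mathcal{S}_0})^{\ast}$, defines the topology on $V=\mathrm{Aff}(\mathcal{S})$ \emph{intrinsically} as $\sigma(V,\mathcal{L}^b_{\mathcal{S}}(V))$, takes the uniform completion $\widetilde{V}$ of $V$, and then---because this completion need not carry a weak topology and because $\mathrm{Aff}(\mathrm{cl}(\mathcal{S}))$ may be strictly smaller than $\widetilde{V}$---performs two further corrections: pass to the weak topology $\sigma(\widetilde{V})$, quotient by $\bigcap_{f\in\mathcal{L}_c(\widetilde{V})}\ker f$, and restrict to the affine hull of the image of $\widetilde{\mathcal{S}}$. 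Compactness of $\widetilde{\mathcal{S}}$ is obtained via the general fact that bounded sets in a weak l.c.\ topology are precompact. Your route sidesteps all of this machinery: by embedding into $\mathbb{R}^{\mathcal{E}}$ with the product topology you are already sitting inside a space whose topology is weak, so no quotient is needed; you build $\widetilde{V}$ as the affine hull of $\widetilde{\mathcal{S}}$ from the start, so no restriction is needed; and compactness drops out of Tychonoff immediately (exactly as the paper itself does later when proving Theorem~\ref{thm:optimal_observable_exist}). The obstacle you flag---that every $\mathcal{S}$-bounded linear functional on $V$ is, up to affine normalization, a single coordinate projection $\pi_e$---is precisely the point that makes the two descriptions of $\mathcal{T}(V)$ match, and it holds because $\mathcal{E}(\mathcal{S})\simeq\mathcal{E}(\overline{\mathcal{S}_0})=\mathcal{E}$ under $\Phi$. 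What the paper's route buys is that the topology is characterized without reference to an ambient space, which makes its uniqueness argument (via the universal property of completions) slightly more canonical; your uniqueness sketch is also fine, but the extension of $\psi$ from $\mathcal{S}$ to $\widetilde{\mathcal{S}}$ should be justified by uniform continuity of the continuous affine map $V\to V'$ (automatic in a t.v.s.) rather than by ``compactness of both targets,'' which by itself does not guarantee that a homeomorphism of dense subsets extends.
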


A proof of Theorem \ref{thm:state_space_embedded} will be given in Appendices \ref{sec:appendix_S_V}--\ref{sec:appendix_uniqueness}.
\begin{remark}
\label{rem:finite_dim}
In finite-dimensional cases ($\dim\mathcal{S} = n < \infty$), the space $V$ above is isomorphic to an $n$-dimensional Euclidean space $\mathbb{R}^n$ (cf., Theorem \ref{thm:fin_dim_is_Euclidean}), and we have $\widetilde{V} = V$ and $\widetilde{\mathcal{S}} = \mathrm{cl}_V(\mathcal{S})$.
Hence in such cases, the state space $\mathcal{S}$ is nothing but a bounded convex subset of $\mathbb{R}^n$.
Moreover, in this case every $e \in \mathcal{E}(\widetilde{\mathcal{S}})$ is continuous by the definition of $\mathcal{T}(V) = \mathcal{T}(\widetilde{V})$; however, the continuity is not guaranteed in a general case.
\end{remark}
\begin{definition}
\label{defn:state_terminology}
We call the sets $\mathcal{S}$, $\widetilde{\mathcal{S}}$, $V$, and $\widetilde{V}$ a \emph{(real) state space}, a \emph{virtual state space}, a \emph{(real) underlying space}, and a \emph{virtual underlying space}, respectively.
We call $s \in \mathcal{S}$ a \emph{(real) state} and $\widetilde{s} \in \widetilde{\mathcal{S}} \setminus \mathcal{S}$ a \emph{virtual state}.
Moreover, we call each $e \in \mathcal{E} = \mathcal{E}(\mathcal{S})$ a \emph{(real) effect} on $\mathcal{S}$, and each $\widetilde{e} \in \mathcal{E}(\widetilde{\mathcal{S}})$ a \emph{virtual effect} on $\widetilde{\mathcal{S}}$ if it is continuous.
Let $\widetilde{\mathcal{E}}$ denote the set of the virtual effects on $\widetilde{\mathcal{S}}$, i.e., $\widetilde{\mathcal{E}} = \{\widetilde{e} \in \mathcal{E}(\widetilde{\mathcal{S}}) \mid \widetilde{e} \mbox{ is continuous}\}$.
\end{definition}
The choice of $\mathcal{T}(\mathcal{S})$ is motivated by a physical intuition that any available information on the state space $\mathcal{S}$ would be obtained via statistical properties of effects on $\mathcal{S}$.
On the other hand, the continuity of virtual effects are required to ensure the following correspondence between effects and virtual effects:
\begin{lemma}
\label{lem:effect_correspondence}
Each effect $e \in \mathcal{E}$ on $\mathcal{S}$ has a unique continuous affine extension $\widetilde{e}:\widetilde{\mathcal{S}} \to \mathbb{R}$, and we have $\widetilde{e} \in \widetilde{\mathcal{E}}$.
This gives a bijection $e \mapsto \widetilde{e}$ from $\mathcal{E}$ to $\widetilde{\mathcal{E}}$.
\end{lemma}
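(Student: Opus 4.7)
The plan is to extend $e$ in stages: first affinely to the underlying space $V$, then continuously to the virtual underlying space $\widetilde{V}$, and finally to restrict to $\widetilde{\mathcal{S}}$. Uniqueness of the extension is immediate from density of $\mathcal{S}$ in $\widetilde{\mathcal{S}}$ together with the Hausdorff property of $\widetilde{V}$, so the real work is in existence.

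Since $V = \mathrm{Aff}(\mathcal{S})$, the affine functional $e$ has a unique affine extension $\hat{e}: V \to \mathbb{R}$. Writing $c := \hat{e}(0)$ and $L(v) := \hat{e}(v) - c$, a short computation from the fact that affine maps respect all affine combinations (applied to identities like $u + v = 1 \cdot u + 1 \cdot v + (-1)\cdot 0$) shows that $L$ is linear. Because $\hat{e}(\mathcal{S}) \subseteq [0,1]$, the image $L(\mathcal{S})$ is contained in $[-c, 1-c]$ and is in particular bounded, so by the defining characterization of $\mathcal{T}(V)$ as the weakest topology making every such linear functional continuous, $L \in \mathcal{L}_c(V)$; hence $\hat{e}$ is continuous on $V$.

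The next step is to extend $L$ to a continuous linear functional $\widetilde{L} \in \mathcal{L}_c(\widetilde{V})$. Here I use that $\mathcal{T}(V)$ is the subspace topology from $\widetilde{V}$, so it coincides with the weak topology induced by the linear family $\mathcal{F} := \{\widetilde{f}|_V : \widetilde{f} \in \mathcal{L}_c(\widetilde{V})\}$, a subspace of the algebraic dual of $V$ that separates points since $\widetilde{V}$ is Hausdorff. A standard duality fact for weak topologies — continuous linear functionals on a vector space endowed with the weak topology from a point-separating linear subspace $\mathcal{F}$ of its algebraic dual are exactly the elements of $\mathcal{F}$ — identifies $\mathcal{F}$ with $\mathcal{L}_c(V)$, so $L$ is the restriction of some $\widetilde{L} \in \mathcal{L}_c(\widetilde{V})$. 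Setting $\widetilde{e} := (\widetilde{L} + c)|_{\widetilde{\mathcal{S}}}$ produces a continuous affine functional on $\widetilde{\mathcal{S}}$ extending $e$; that $\widetilde{e}(\widetilde{\mathcal{S}}) \subseteq [0,1]$ follows from continuity combined with $\widetilde{\mathcal{S}} = \mathrm{cl}_{\widetilde{V}}(\mathcal{S})$, since $\widetilde{e}(\widetilde{\mathcal{S}}) \subseteq \mathrm{cl}_{\mathbb{R}}(e(\mathcal{S})) \subseteq [0,1]$. Hence $\widetilde{e} \in \widetilde{\mathcal{E}}$.

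Uniqueness of $\widetilde{e}$ follows because two continuous extensions must agree on the dense subset $\mathcal{S}$ of the Hausdorff space $\widetilde{\mathcal{S}}$. The assignment $e \mapsto \widetilde{e}$ is then injective as $\widetilde{e}|_{\mathcal{S}} = e$, and for surjectivity, given any $\widetilde{e} \in \widetilde{\mathcal{E}}$ the restriction $e := \widetilde{e}|_{\mathcal{S}}$ is an affine map into $[0,1]$ (hence in $\mathcal{E}$) whose unique continuous extension is $\widetilde{e}$ itself. The main obstacle is the middle step — ensuring that the continuous linear $L$ on $V$ extends continuously to $\widetilde{V}$; this is precisely where the careful construction of $\widetilde{V}$ in the appendices pays off, since its topology has been arranged so that $\mathcal{T}(V)$ is the subspace topology and so that the continuous linear duals of $V$ and $\widetilde{V}$ correspond bijectively under restriction. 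Without such compatibility, continuous linear functionals on a dense subspace of a locally convex Hausdorff t.v.s.\ need not extend continuously to the ambient space.
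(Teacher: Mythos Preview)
Your proof is correct and follows the same overall architecture as the paper's: extend $e$ affinely to $V$, split off the constant to get a linear functional whose boundedness on $\mathcal{S}$ forces continuity on $V$, extend continuously to $\widetilde{V}$, then restrict to $\widetilde{\mathcal{S}}$ and check the range using density.

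The one substantive difference is in the extension step from $V$ to $\widetilde{V}$. The paper invokes the Hahn--Banach extension theorem for locally convex spaces (its Theorem~\ref{thm:extension_continuous_to_continuous}) directly. You instead exploit the additional hypothesis that $\mathcal{T}(\widetilde{V})$ is a weak topology: the subspace topology on $V$ is then $\sigma(V,\mathcal{F})$ for $\mathcal{F}=\{\widetilde{f}|_V:\widetilde{f}\in\mathcal{L}_c(\widetilde{V})\}$, and the standard identification $(V,\sigma(V,\mathcal{F}))'=\mathcal{F}$ gives the extension for free. Your route is a bit more elementary (it avoids the full Hahn--Banach machinery and the axiom of choice) and makes explicit use of the weak-topology clause in Theorem~\ref{thm:state_space_embedded}; the paper's route is more robust in that it would go through for any l.c.\ $\widetilde{V}$, weak or not. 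Either way the conclusion is the same, and your added explicit check of injectivity and surjectivity of $e\mapsto\widetilde{e}$ is a welcome clarification the paper leaves implicit.
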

\begin{proof}
Only the nontrivial part is the existence of a continuous affine extension $\widetilde{e}$ of $e$ with $\widetilde{e} \in \widetilde{\mathcal{E}}$; the uniqueness then follows since $\mathcal{S}$ is dense in $\widetilde{\mathcal{S}}$.
First, since $\mathrm{Aff}(\mathcal{S}) = V$, the effect $e$ extends to an affine functional $f:V \to \mathbb{R}$.
Let $\alpha$ be the value of $f$ at the origin of $V$; therefore $f' = f - \alpha:V \to \mathbb{R}$ is linear.
Note that $f'(\mathcal{S}) \subset \left[-\alpha,1-\alpha\right]$, therefore $f'$ is continuous on $V$ by the property of $V$ in Theorem \ref{thm:state_space_embedded}.
By a consequence of Hahn-Banach's Theorem (Theorem \ref{thm:extension_continuous_to_continuous}), this $f'$ extends to a continuous linear functional $g$ on $\widetilde{V}$.
Now $g(\widetilde{\mathcal{S}}) \subset \mathrm{cl}_{\mathbb{R}}(g(\mathcal{S}))$ since $\widetilde{\mathcal{S}} = \mathrm{cl}_{\widetilde{V}}(\mathcal{S})$, while $g(\mathcal{S}) \subset \left[-\alpha,1-\alpha\right]$ since $g$ is an extension of $f'$.
Thus the restriction $\widetilde{e} = (g + \alpha)|_{\widetilde{\mathcal{S}}}$ of $g + \alpha$ to $\widetilde{\mathcal{S}}$ is a continuous affine functional such that $\widetilde{e}(\widetilde{\mathcal{S}}) \subset \left[0,1\right]$, therefore $\widetilde{e} \in \widetilde{\mathcal{E}}$.
This $\widetilde{e}$ is the desired extension of $e$.
\end{proof}
Moreover, the sets $\mathcal{S}$ and $\widetilde{\mathcal{S}}$ have the following properties:
\begin{lemma}
\label{lem:separated_generalized}
Both $\mathcal{S}$ and $\widetilde{\mathcal{S}}$ are separated, which (for $\widetilde{\mathcal{S}}$) means that for any distinct $s,t \in \widetilde{\mathcal{S}}$, there exists an $e \in \widetilde{\mathcal{E}}$, not just $e \in \mathcal{E}(\widetilde{\mathcal{S}})$, such that $e(s) \neq e(t)$.
\end{lemma}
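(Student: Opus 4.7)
The plan is to handle the two spaces separately, with the statement for $\mathcal{S}$ following almost formally from what is already established, and the statement for $\widetilde{\mathcal{S}}$ requiring only a rescaling argument based on compactness.

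For $\mathcal{S}$, I would just transport Lemma~\ref{lem:separated} along the affine bijection $\varphi:\overline{\mathcal{S}_0}\to\mathcal{S}$ given in Theorem~\ref{thm:state_space_embedded}. Concretely, given distinct $s,t\in\mathcal{S}$, the preimages $\varphi^{-1}(s)\neq\varphi^{-1}(t)$ are separated by some $\overline{e}\in\mathcal{E}(\overline{\mathcal{S}_0})$; then $e:=\overline{e}\circ\varphi^{-1}$ is an effect in $\mathcal{E}(\mathcal{S})$ with $e(s)\neq e(t)$. No topology is needed here.

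For $\widetilde{\mathcal{S}}$, the main step is to exploit that $\mathcal{T}(\widetilde V)$ is the weak topology generated by $\mathcal{L}_c(\widetilde V)$. Given distinct $\widetilde s,\widetilde t\in\widetilde{\mathcal{S}}$, since $\widetilde V$ is Hausdorff and its topology is generated by $\mathcal{L}_c(\widetilde V)$, a standard fact on weak topologies produces a continuous linear functional $g\in\mathcal{L}_c(\widetilde V)$ with $g(\widetilde s)\neq g(\widetilde t)$. The remaining task is to convert $g$ into a continuous affine functional $\widetilde e:\widetilde{\mathcal{S}}\to[0,1]$ that still separates $\widetilde s$ and $\widetilde t$. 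Here I would use compactness of $\widetilde{\mathcal{S}}$ from Theorem~\ref{thm:state_space_embedded}: $g(\widetilde{\mathcal{S}})$ is a compact subset of $\mathbb{R}$, hence contained in some bounded interval $[m,M]$ with $m\leq g(\widetilde s),g(\widetilde t)\leq M$ and $m<M$ (the latter because $g$ is not constant on $\widetilde{\mathcal{S}}$, as it distinguishes $\widetilde s$ from $\widetilde t$). Then $\widetilde e:= (g-m)/(M-m)$ is continuous, affine, takes values in $[0,1]$ on $\widetilde{\mathcal{S}}$, and satisfies $\widetilde e(\widetilde s)\neq\widetilde e(\widetilde t)$; in particular $\widetilde e\in\widetilde{\mathcal{E}}$ by Definition~\ref{defn:state_terminology}.

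I do not expect a real obstacle: the only subtle point is ensuring that $\widetilde e$ genuinely lies in $\widetilde{\mathcal{E}}$ (i.e., is a \emph{continuous} virtual effect, not merely an element of $\mathcal{E}(\widetilde{\mathcal{S}})$), which is precisely what forces the use of compactness, rather than allowing an arbitrary affine construction on $\widetilde{\mathcal{S}}$. With that handled the conclusion is immediate, and the separation of $\mathcal{S}$ is a one-line consequence of Lemma~\ref{lem:separated} via $\varphi$.
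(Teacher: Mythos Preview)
Your proposal is correct and essentially matches the paper's proof: for $\widetilde{\mathcal{S}}$ both arguments use that $\widetilde{V}$ is Hausdorff with a weak topology to obtain a separating $g\in\mathcal{L}_c(\widetilde{V})$, then rescale via compactness of $\widetilde{\mathcal{S}}$ to land in $\widetilde{\mathcal{E}}$. The only cosmetic difference is for $\mathcal{S}$: the paper deduces separation from the Hausdorff property of $V$ together with the fact that $\mathcal{T}(\mathcal{S})$ is generated by effects, whereas you transport Lemma~\ref{lem:separated} directly along $\varphi$---both are one-line arguments.
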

\begin{proof}
Since $V$ is Hausdorff, $\mathcal{S}$ is separated (in the sense of Lemma \ref{lem:separated}) by the definition of $\mathcal{T}(\mathcal{S})$; see Theorem \ref{thm:state_space_embedded}.
On the other hand, let $s,t$ be distinct elements of $\widetilde{\mathcal{S}}$.
Then, since $\mathcal{T}(\widetilde{V})$ is Hausdorff and a weak topology (see Theorem \ref{thm:state_space_embedded}), there exists a continuous linear functional $f$ on $\widetilde{V}$ such that $f(s) \neq f(t)$.
Now $f(\widetilde{\mathcal{S}}) \subset \mathbb{R}$ is bounded since $\widetilde{\mathcal{S}}$ is compact, therefore the restriction $e$ of an appropriate affine transformation $\alpha f + \beta$ of $f$ to $\widetilde{\mathcal{S}}$, where $\alpha,\beta \in \mathbb{R}$, is a virtual effect such that $e(s) \neq e(t)$.
Hence Lemma \ref{lem:separated_generalized} holds.
\end{proof}
\begin{definition}
\label{defn:observable}
An $N$-valued \emph{(real) observable} (or \emph{virtual observable}, respectively) is a collection $\mathbf{O} = (e_i)_{i = 1}^{N}$ of $N$ effects $e_i \in \mathcal{E}$ (or $N$ virtual effects $e_i \in \widetilde{\mathcal{E}}$, respectively) such that $\sum_{i = 1}^{N} e_i = 1$.
Let $\mathcal{O}_N$ and $\widetilde{\mathcal{O}}_N$ denote the sets of all $N$-valued observables and of all $N$-valued virtual observables, respectively. 
\end{definition}
Physically, for each observable $\mathbf{O} = (e_i)_i$ and each $s \in \mathcal{S}$, the quantity $e_i(s)$ means the probability to obtain $i$-th output when measuring $\mathbf{O}$ at the state $s$; the condition $\sum_i e_i = 1$ is required by a property of probability.
On the other hand, the affine property of each $e_i$ is motivated by a natural expectation that the output probabilities for a probabilistic state ensemble would be weighted sums of those probabilities for each of the original state.
The same also holds for virtual observables.
Now we have the following correspondence:
\begin{lemma}
\label{lem:observable_correspondence}
We have $\widetilde{\mathbf{O}} = (\widetilde{e_i})_i \in \widetilde{\mathcal{O}}_N$ for any $\mathbf{O} = (e_i)_i \in \mathcal{O}_N$.
This gives a bijection $\mathbf{O} \mapsto \widetilde{\mathbf{O}}$ from $\mathcal{O}_N$ to $\widetilde{\mathcal{O}}_N$.
\end{lemma}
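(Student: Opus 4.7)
The plan is to leverage Lemma \ref{lem:effect_correspondence} throughout: the bijection $e \mapsto \widetilde{e}$ already gives a one-to-one correspondence between the components, so all that remains is to verify that the normalization condition $\sum_i e_i = 1$ transports faithfully along this correspondence in both directions.

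First I would check that the forward map is well-defined. Given $\mathbf{O} = (e_i)_i \in \mathcal{O}_N$, Lemma \ref{lem:effect_correspondence} provides $\widetilde{e_i} \in \widetilde{\mathcal{E}}$ extending each $e_i$ continuously and affinely to $\widetilde{\mathcal{S}}$. The finite sum $\sum_{i=1}^N \widetilde{e_i}$ is then a continuous affine function on $\widetilde{\mathcal{S}}$, and by hypothesis it equals the constant function $1$ on $\mathcal{S}$. Since $\mathcal{S}$ is dense in $\widetilde{\mathcal{S}}$ by Theorem \ref{thm:state_space_embedded}, continuity forces the equality on all of $\widetilde{\mathcal{S}}$, so $\widetilde{\mathbf{O}} = (\widetilde{e_i})_i \in \widetilde{\mathcal{O}}_N$.

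Injectivity of $\mathbf{O} \mapsto \widetilde{\mathbf{O}}$ is immediate: if $(\widetilde{e_i})_i = (\widetilde{f_i})_i$, then $\widetilde{e_i} = \widetilde{f_i}$ for each $i$, and the injectivity asserted by Lemma \ref{lem:effect_correspondence} yields $e_i = f_i$. For surjectivity, I take an arbitrary $\widetilde{\mathbf{O}} = (\widetilde{e_i})_i \in \widetilde{\mathcal{O}}_N$, and use Lemma \ref{lem:effect_correspondence} once more to obtain unique $e_i \in \mathcal{E}$ whose continuous extensions are the $\widetilde{e_i}$. Restricting the identity $\sum_i \widetilde{e_i} = 1$ on $\widetilde{\mathcal{S}}$ to $\mathcal{S}$ (via $\varphi$, equivalently, using $\widetilde{e_i}|_{\mathcal{S}} = e_i$) gives $\sum_i e_i = 1$ on $\mathcal{S}$, so $\mathbf{O} = (e_i)_i \in \mathcal{O}_N$ and maps to the given $\widetilde{\mathbf{O}}$.

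There is no real obstacle here; the only point requiring any care is the density argument used to propagate the normalization condition from $\mathcal{S}$ to $\widetilde{\mathcal{S}}$ in the forward direction. This is genuinely necessary because the condition $\sum_i e_i = 1$ is a priori a condition on $\mathcal{S}$ only, and without continuity of the $\widetilde{e_i}$ (which is built into the definition of $\widetilde{\mathcal{E}}$) we could not conclude that the extensions also sum to $1$ pointwise on the larger set.
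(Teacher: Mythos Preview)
Your proof is correct and follows essentially the same approach as the paper. The paper is slightly more terse: it declares the bijectivity trivial and, for the one nontrivial point $\sum_i \widetilde{e_i}=1$, invokes the uniqueness clause of Lemma~\ref{lem:effect_correspondence} (both $\sum_i\widetilde{e_i}$ and the constant $1$ are continuous affine extensions of $\sum_i e_i = 1$), whereas you unpack that uniqueness as the density-plus-continuity argument that underlies it.
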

\begin{proof}
Only the nontrivial part is to show that $\sum_i \widetilde{e_i} = 1$ for any $\mathbf{O} = (e_i)_i \in \mathcal{O}_N$.
This follows from the uniqueness property in Lemma \ref{lem:effect_correspondence}, since both $\sum_i \widetilde{e_i}$ and $1$ are continuous affine extensions of the effect $\sum_i e_i = 1$ to $\widetilde{\mathcal{S}}$.
\end{proof}
By virtue of Lemma \ref{lem:observable_correspondence}, the output probabilities for virtual observables at virtual states can be derived (at least in principle) from information on real observables at real states.
On the other hand, for any finite collection of measurements with non-ideal accuracy, virtual states are indistinguishable from real states (in the sense mentioned in Sect.\ \ref{subsec:intro_contribution}).

Note that our framework presented above does in fact not concern every feature of quantum theory, e.g., transformations of states possibly caused by measuring observables.
However, our framework is still enough for our current purpose of studying state discrimination problems.

Obviously, two fundamental examples of general probabilistic theories are given by classical and quantum theories, as follows (taken from \cite{KMI08}):
\begin{example}
\label{exmp:classical}
A finite classical system described by a finite probability theory with finite sample space $\{\omega_1,\dots,\omega_n\}$ is formulated in our model as the ($n-1$)-dimensional standard simplex $\mathcal{S} = \{p = (p_1,\dots,p_n) \in \mathbb{R}^n \mid p_i \geq 0, \sum_i p_i = 1\}$.
Namely, each state is a probability distribution over the sample space, and it can be seen as a probabilistic ensemble of \lq\lq pure states'' $p^{(i)}$, $i = 1,\dots,n$, with only one possible output $\omega_i$, that are extremal points of $\mathcal{S}$ in usual sense.
Note that in this example $\mathcal{S}$ itself is compact, hence all states are real.
This example can be naturally extended to infinite-dimensional classical systems.
\end{example}
\begin{example}
\label{exmp:quantum}
In quantum theory, a quantum state is described by a density operator $\rho$, that is a positive operator on a Hilbert space $\mathcal{H}$ with unit trace.
Thus the state space is a convex subset of the vector space of all linear operators on $\mathcal{H}$.
Moreover, an effect $e$ is described \cite{Oza80} by a positive bounded operator $B$ such that $0 \leq B \leq I_{\mathcal{H}}$ via the relation $e(\rho) = \mathrm{tr} B \rho$, that is an element of positive operator valued measure (POVM).
\end{example}

In the last of this section, we give two remarks on relations with preceding works.
Before starting the remarks, note that assumptions on compactness of the state space $\mathcal{S}$ and on completeness of $\mathcal{S}$ are equivalent to each other, since each of the two implies that $\mathcal{S}$ is closed in $\widetilde{V}$ and hence $\widetilde{\mathcal{S}} = \mathcal{S}$.
\begin{remark}
\label{rem:pure_states}
In a recent work by Barnum et al.\ \cite{BBLW07}, their finite-dimensional state space is assumed to be compact to guarantee that the state space is the closed convex hull of the set of \lq\lq pure states'' (i.e., extremal points of the state space).
Owing to Krein-Milman's Theorem (see e.g., Theorem 10.4 in \cite[Chapter II]{SW_book}), the same property is possessed by our (possibly infinite-dimensional) virtual state space $\widetilde{\mathcal{S}}$.
Thus it is very attractive to start our argument by choosing the compact set $\widetilde{\mathcal{S}}$ as a new \lq\lq state space'' instead of $\mathcal{S}$.
However, such a modification \emph{does} decrease the generality of our framework.
Namely, it is \emph{not} guaranteed in general that every $e \in \mathcal{E}(\widetilde{\mathcal{S}})$, that should be a new \lq\lq effect'' in the above modification, is continuous with respect to the original topology of $\widetilde{\mathcal{S}}$.
Thus to ensure that every \lq\lq effect'' is continuous, we need a new topology stronger than the original, therefore the set $\widetilde{\mathcal{S}}$ may fail compactness with respect to the new topology.
Hence the advantage to choose $\widetilde{\mathcal{S}}$ as a state space disappears.
\end{remark}
\begin{remark}
\label{rem:Gudder_metric}
In another previous work by Gudder \cite{Gud_book79}, the following distance $d(s,s')$ of two states $s,s' \in \mathcal{S}$ was introduced to make $\mathcal{S}$ a metric space.
Namely, Gudder defined $d(s,s')$ to be the infimum of the values $0 < \lambda \leq 1$ such that $\lambda t + (1 - \lambda) s = \lambda t' + (1 - \lambda) s'$ for some states $t,t' \in \mathcal{S}$.
Since this relation implies that $(1 - \lambda) |e(s) - e(s')| = \lambda |e(t) - e(t')| \leq \lambda$ for any $e \in \mathcal{E}$, every effect is continuous with respect to the metric $d$ on $\mathcal{S}$.
However, unless $\mathcal{S}$ is finite-dimensional, the metric $d$ is \emph{not} necessarily continuous with respect to the topology of $\mathcal{S}$ specified in Theorem \ref{thm:state_space_embedded}.
This is roughly because, for a state $s \in \mathcal{S}$ and any collection of a finite number of effects $e_i$, the metric $d$ is not necessarily bounded by a sufficiently small value on the intersection of hyperplanes containing $s$ defined by the affine functionals $e_i$.
Thus our topology on $\mathcal{S}$ is weaker than (or equal to) the topology defined by the metric $d$.
Moreover, another relation of our results with Gudder's metric functions will be mentioned later (Remark \ref{rem:Gudder_distance_optimal_prob}).
\end{remark}

\section{State Discrimination Problems}
\label{sec:discrimination_problem}
In this section, we give a formulation of (minimum-error) state discrimination problems in general probabilistic theories based on the minimal framework introduced in Sect.\ \ref{sec:formulation_model}.
This formulation is a natural generalization of state discrimination problems for quantum systems, and in fact a naive translation of our preceding formulation \cite{KMI08} to the present more general setting.
 
In the state discrimination problem, we are given a finite number (say $N$) of real states $s_1,\dots,s_N \in \mathcal{S}$ and the corresponding a priori probabilities $p_1,\dots,p_N$, $p_i \geq 0$, $\sum_i p_i = 1$.
To avoid inessential intricacy, we assume that each probability $p_i$ is positive.
Then for each $N$-valued observable $\mathbf{O} = (e_i)_i \in \mathcal{O}_N$, we define the \emph{success probability} $P_{\mathrm{succ}}(\mathbf{O})$ for the observable $\mathbf{O}$ by
\begin{equation}
\label{eq:success_prob}
P_{\mathrm{succ}}(\mathbf{O}) = \sum_{i = 1}^{N} p_i e_i(s_i) \enspace.
\end{equation}
Namely, when measuring the observable $\mathbf{O}$ at an unknown state that is chosen from $s_1,\dots,s_N$ with probabilities $p_1,\dots,p_N$ (thus the unknown state is regarded as the probabilistic ensemble $\sum_i p_i s_i \in \mathcal{S}$), $i$-th output for $\mathbf{O}$ corresponds to the guess that the chosen state was originally $s_i$.
(Without loss of generality, it suffices to consider $N$-valued observables when discriminating $N$ states.)
Our aim is to make the success probability as high as possible.
The \emph{optimal success probability} $P_{\mathrm{succ}}$ is obviously defined by
\begin{equation}
\label{eq:optimal_success_prob}
P_{\mathrm{succ}} = \sup_{\mathbf{O} \in \mathcal{O}_N} P_{\mathrm{succ}}(\mathbf{O}) \enspace,
\end{equation}
and an observable $\mathbf{O} \in \mathcal{O}_N$ is called \emph{optimal} if it attains the supremum, namely: $P_{\mathrm{succ}}(\mathbf{O}) = P_{\mathrm{succ}}$.
However, it is nontrivial whether or not an optimal observable exists in each case.
Ozawa \cite{Oza80} has proven existence of Bayes optimal measurements under somewhat different formulation.
The existence theorem also holds in our situation.
Here we present the theorem together with its proof that is significantly simpler than the one in \cite{Oza80}, as follows:
\begin{theorem}
\label{thm:optimal_observable_exist}
The supremum in the right-hand side of \eqref{eq:optimal_success_prob} is attained by an observable $\mathbf{O} \in \mathcal{O}_N$.
Hence an optimal observable always exists.
\end{theorem}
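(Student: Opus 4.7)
The plan is to realize $\mathcal{O}_N$ as a compact topological space on which the success probability is a continuous real-valued function, after which the existence of a maximizer follows at once from the extreme value theorem. The topology of choice is pointwise convergence on $\mathcal{S}$ applied componentwise.

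Concretely, I would embed $\mathcal{O}_N$ into the product space $([0,1]^{\mathcal{S}})^N$, where each factor $[0,1]^{\mathcal{S}}$ carries the product (pointwise) topology. By Tychonoff's theorem this ambient space is compact, so it remains to verify that $\mathcal{O}_N$ is closed in it. The defining conditions for membership in $\mathcal{O}_N$ are (i) each component $e_i$ is affine on $\mathcal{S}$, (ii) each $e_i$ takes values in $[0,1]$, and (iii) $\sum_{i=1}^{N} e_i \equiv 1$ on $\mathcal{S}$. Each of these is a closed condition: an equation of the form $e_i(\lambda s + \mu t) = \lambda e_i(s) + \mu e_i(t)$ or $\sum_i e_i(s) = 1$ involves only finitely many evaluations of the $e_i$ and therefore passes to the limit under nets converging pointwise, while $[0,1]$ is closed in $\mathbb{R}$. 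Hence $\mathcal{O}_N$ is closed in the compact product and is itself compact.

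For continuity of $P_{\mathrm{succ}}$ on $\mathcal{O}_N$, each evaluation $\mathrm{ev}_{s_i} : [0,1]^{\mathcal{S}} \to [0,1]$, $e \mapsto e(s_i)$, is continuous by the very definition of the product topology; hence $\mathbf{O} = (e_1,\dots,e_N) \mapsto \sum_{i=1}^{N} p_i \, e_i(s_i)$ is a continuous $\mathbb{R}$-valued function on $\mathcal{O}_N$ as a finite linear combination of continuous maps. The extreme value theorem then guarantees attainment of the supremum in \eqref{eq:optimal_success_prob}.

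The argument is notably robust: it invokes neither compactness nor any topological structure on $\mathcal{S}$ itself, relying solely on Tychonoff together with the preservation of the defining algebraic identities under pointwise limits. The only minor technical point to keep in mind is that $[0,1]^{\mathcal{S}}$ need not be metrizable when $\mathcal{S}$ is uncountable, so the closedness argument should be phrased in terms of nets rather than sequences; with that caveat the proof is entirely elementary, matching the authors' remark that nothing beyond the extreme value theorem is required.
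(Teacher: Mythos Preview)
Your proposal is correct and follows essentially the same approach as the paper: embed $\mathcal{O}_N$ as a closed subset of the Tychonoff-compact product $([0,1]^{\mathcal{S}})^N$ and observe that $P_{\mathrm{succ}}$ is continuous for the pointwise topology. The only cosmetic difference is that the paper first shows $\mathcal{E}$ is closed in $[0,1]^{\mathcal{S}}$ (hence compact), then passes to $\mathcal{E}^N$ and cuts out $\mathcal{O}_N$ by the summation condition, whereas you impose all three closed conditions simultaneously; the paper also phrases closedness via intersections of preimages of $\{0\}$ under continuous maps rather than invoking nets, but the content is identical.
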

The rest of this section is devoted to the proof of Theorem \ref{thm:optimal_observable_exist}; note that in this proof, virtual states do not appear at all.
The outline is the following: With respect to a certain topology, the set $\mathcal{O}_N$ of $N$-valued observables is compact and the map $\mathcal{O}_N \to \mathbb{R}$, $\mathbf{O} \mapsto P_{\mathrm{succ}}(\mathbf{O})$, is continuous, therefore this map takes the maximum value at some $\mathbf{O} \in \mathcal{O}_N$.
Now we introduce a map $\iota:\mathcal{E} \to \left[0,1\right]^{\mathcal{S}}$ from $\mathcal{E}$ to the direct product $\left[0,1\right]^{\mathcal{S}} = \prod_{s \in \mathcal{S}} \left[0,1\right]_s$ of copies $\left[0,1\right]_s$ of the unit interval $\left[0,1\right]$ over all $s \in \mathcal{S}$ by $\iota(e) = (e(s))_{s \in \mathcal{S}}$ for any $e \in \mathcal{E}$.
Then $\iota$ is injective, therefore $\mathcal{E}$ is identified with the topological subspace $\iota(\mathcal{E})$ of the product space $\left[0,1\right]^{\mathcal{S}}$.
By the definition of product topology, $\mathcal{T}(\left[0,1\right]^{\mathcal{S}})$ is the weakest topology to make every projection $\pi_s:\left[0,1\right]^{\mathcal{S}} \to \left[0,1\right]_s$ ($s \in \mathcal{S}$) continuous.
Thus the topology on $\mathcal{E}$ induced by the identification is the weakest to make every \lq\lq evaluation map'' $\mathsf{ev}_s:\mathcal{E} \to \left[0,1\right]$, $\mathsf{ev}_s(e) = e(s)$ ($s \in \mathcal{S}$) continuous.
Now the following holds:
\begin{lemma}
\label{lem:effect_closed}
$\iota(\mathcal{E})$ is a closed subset of $\left[0,1\right]^{\mathcal{S}}$.
\end{lemma}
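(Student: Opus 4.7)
The plan is to show that membership in $\iota(\mathcal{E})$ is characterized by a family of closed conditions in the product topology, so that $\iota(\mathcal{E})$ is an intersection of closed sets.

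Concretely, I would take an arbitrary point $F = (f_s)_{s \in \mathcal{S}}$ in the closure $\mathrm{cl}_{[0,1]^{\mathcal{S}}}(\iota(\mathcal{E}))$ and produce an effect $e \in \mathcal{E}$ with $\iota(e) = F$. The natural candidate is to define $e : \mathcal{S} \to [0,1]$ by $e(s) = f_s$; since $f_s \in [0,1]$ automatically, the only thing to verify is that $e$ is affine. Fix $s,t \in \mathcal{S}$ and weights $\lambda,\mu \geq 0$ with $\lambda + \mu = 1$, and set $u = \lambda s + \mu t \in \mathcal{S}$ (using convexity of $\mathcal{S}$). Consider the continuous map
\begin{equation*}
\pi_{s,t,u} = (\mathsf{ev}_s,\mathsf{ev}_t,\mathsf{ev}_u) : [0,1]^{\mathcal{S}} \to [0,1]^3,
\end{equation*}
which is continuous since it is built from the coordinate projections. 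The subset $C_{\lambda,\mu} = \{(x,y,z) \in [0,1]^3 : z = \lambda x + \mu y\}$ is closed in $[0,1]^3$, so its preimage $\pi_{s,t,u}^{-1}(C_{\lambda,\mu})$ is closed in $[0,1]^{\mathcal{S}}$. Every $e' \in \mathcal{E}$ is affine, hence $\iota(e') \in \pi_{s,t,u}^{-1}(C_{\lambda,\mu})$; therefore so does every point in the closure of $\iota(\mathcal{E})$, and in particular $F$. This gives $f_u = \lambda f_s + \mu f_t$, i.e., $e(u) = \lambda e(s) + \mu e(t)$.

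Since $s,t,\lambda,\mu$ were arbitrary, $e$ is affine; together with $e(\mathcal{S}) \subset [0,1]$ this yields $e \in \mathcal{E}$, and by construction $\iota(e) = F$. Thus $\mathrm{cl}_{[0,1]^{\mathcal{S}}}(\iota(\mathcal{E})) \subset \iota(\mathcal{E})$, proving closedness. I do not expect any real obstacle here: the argument is entirely formal, and the only point that must not be overlooked is that the convex combination $u = \lambda s + \mu t$ itself lies in $\mathcal{S}$, so that the coordinate $f_u$ is available in the tuple $F$; this is exactly what lets the affine identity be phrased as a closed condition inside $[0,1]^{\mathcal{S}}$ rather than requiring an extension step.
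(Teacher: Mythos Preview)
Your proof is correct and takes essentially the same approach as the paper: both show that $\iota(\mathcal{E})$ is the intersection, over all $s,t \in \mathcal{S}$ and $0 \leq \lambda \leq 1$, of the closed sets $\{f \in [0,1]^{\mathcal{S}} \mid \pi_{\lambda s + (1-\lambda)t}(f) = \lambda\,\pi_s(f) + (1-\lambda)\,\pi_t(f)\}$, the only difference being that the paper states this intersection directly while you phrase it as a closure argument. One small notational point: what you call $\mathsf{ev}_s$ on $[0,1]^{\mathcal{S}}$ is what the paper denotes $\pi_s$ (the coordinate projection); the symbol $\mathsf{ev}_s$ was reserved for the map $\mathcal{E} \to [0,1]$.
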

\begin{proof}
For each $s,t \in \mathcal{S}$ and $0 \leq \lambda \leq 1$, put $s' = \lambda s + (1 - \lambda) t \in \mathcal{S}$, and let
\begin{displaymath}
\mathcal{A}_{s,t,\lambda} = \{f \in \left[0,1\right]^{\mathcal{S}} \mid \pi_{s'}(f) - \lambda \pi_{s}(f) - (1-\lambda) \pi_{t}(f) = 0\} \enspace.
\end{displaymath}
Then $\mathcal{A}_{s,t,\lambda}$ is a closed subset of $\left[0,1\right]^{\mathcal{S}}$, since the function $\pi_{s'} - \lambda \pi_{s} - (1-\lambda) \pi_{t}$ on $\left[0,1\right]^{\mathcal{S}}$ is continuous.
Moreover, the affine property of the effects implies that $\iota(\mathcal{E})$ is the intersection of all the subsets $\mathcal{A}_{s,t,\lambda}$.
Hence $\iota(\mathcal{E})$ is also closed in $\left[0,1\right]^{\mathcal{S}}$, therefore Lemma \ref{lem:effect_closed} holds.
\end{proof}
By Tychonoff's Theorem, the product space $\left[0,1\right]^{\mathcal{S}}$ is compact, therefore $\mathcal{E}$ is also compact with respect to the above topology by Lemma \ref{lem:effect_closed}.
Thus the product space $\mathcal{E}^N$ is also compact owing to Tychonoff's Theorem again.
Moreover, a similar argument implies that the subset $\mathcal{O}_N$ of $\mathcal{E}^N$ is closed in $\mathcal{E}^N$, since the map $\mathcal{E}^N \to \mathbb{R}$, $(e_i)_{i = 1}^{N} \mapsto \sum_{i = 1}^{N} e_i(s)$, is continuous for every $s \in \mathcal{S}$.
Thus $\mathcal{O}_N$ is also compact.
Finally, with respect to the topology on $\mathcal{O}_N$, the above function $\mathbf{O} \mapsto P_{\mathrm{succ}}(\mathbf{O})$ on $\mathcal{O}_N$ is continuous.
Hence the proof of Theorem \ref{thm:optimal_observable_exist} is concluded.

\section{Helstrom Families}
\label{sec:Helstrom}
In Sect.\ \ref{sec:discrimination_problem}, we have seen that an optimal observable to discriminate given states always exists in general probabilistic theories.
In the quantum cases, the state discrimination problem has been intently investigated (e.g., \cite{BKMH97,Hel_book,JRF02,YKL75}), but strategies for attaining optimal solutions have been well established only in restricted cases such as two-state cases (cf., \cite{Hel_book}) and some symmetric cases (cf., \cite{BKMH97}).
To study this problem in general probabilistic theories, our preceding work \cite{KMI08} introduced and studied the notion of \lq\lq (weak) Helstrom families'' from a geometric viewpoint.
In this section, we give a translation of the preceding formulation to our minimal framework.

Recall that we are given $N$ states $s_i \in \mathcal{S}$ with a priori probabilities $p_i > 0$, $\sum_i p_i = 1$.
Then the definition of weak Helstrom families is the following (cf., Definition 1 in \cite{KMI08}):
\begin{definition}
\label{defn:weakHF}
We call a family of $N$ ensembles $(\widetilde{p}_i,s_i; 1 - \widetilde{p}_i,t_i)$, $i = 1,\dots,N$, a \emph{weak Helstrom family}, if there exist a quantity $p \geq \max_i p_i$ called a \emph{Helstrom ratio}, $N$ real or virtual states $t_i \in \widetilde{\mathcal{S}}$, $i = 1,\dots,N$ called \emph{conjugate states} to $s_i$, and a real or virtual state $s \in \widetilde{\mathcal{S}}$ called a \emph{reference state}, such that 
\begin{equation}
\label{eq:reference_state}
\widetilde{p}_i s_i + (1 - \widetilde{p}_i) t_i = s \enspace,\enspace  \mbox{ with } 0 < \widetilde{p}_i = \frac{p_i}{p} \leq 1
\end{equation}
for every $i$.
We call a weak Helstrom family \emph{trivial} when $p \geq 1$, and \emph{nontrivial} when $p < 1$.
\end{definition}
\begin{example}
\label{exmp:WHF}
In Fig.\ \ref{fig:WHF}, we consider the case $N = 3$ and $p_i = 1/3$ ($i = 1,2,3$).
The three states $t_1,t_2,t_3$ are in such positions that their configuration is similar to that of $s_1,s_2,s_3$ with respect to the center $s$ of similarity, with similarity ratio $\overline{t_i s} / \overline{s_i s} = 2 / 1$.
Now these form a weak Helstrom family with $\widetilde{p}_i = 2/3$, therefore the Helstrom ratio is $p = p_i / \widetilde{p}_i = 1/2$.
Note that any other similar configuration with a larger similarity ratio gives a weak Helstrom family with larger $\widetilde{p}_i$, hence with a smaller Helstrom ratio.
\end{example}
\begin{figure}[hbt]
\centering
\includegraphics[scale=0.25]{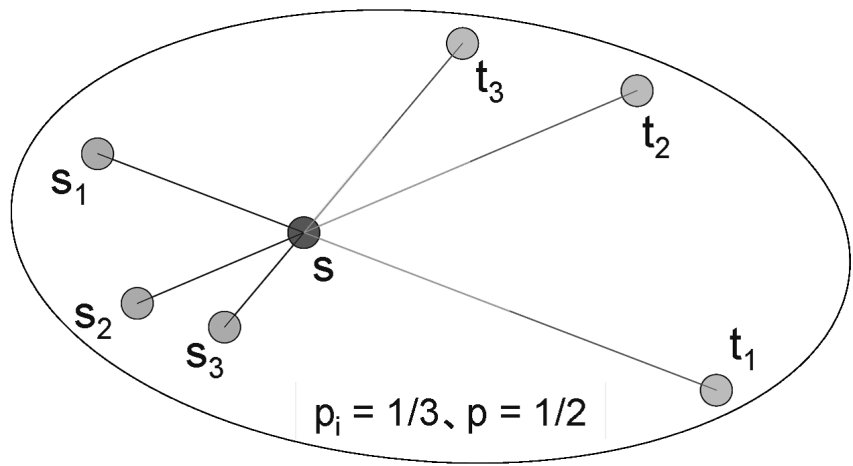}
\caption{Example of weak Helstrom family}
\label{fig:WHF}
\end{figure}
In the original paper \cite{KMI08}, a weak Helstrom family was required to satisfy an additional condition $p \leq 1$, but here we relax this condition to simplify the argument.
Note that a trivial weak Helstrom family with Helstrom ratio $p = 1$ always exists, by taking conjugate states $t_i = (1 - p_i)^{-1} \sum_{j \neq i} p_j s_j$ and a reference state $s = \sum_{i} p_i s_i$.
Example \ref{exmp:WHF} suggests that, intuitively, some nontrivial weak Helstrom families can be found as well by taking the states $t_i$ with larger configuration (cf., \cite{KMI08}).
An importance of weak Helstrom families in a study of state discrimination problems is implied by the following property that has been proven in \cite{KMI08} under the framework there:
\begin{proposition}
[cf., Proposition 1 in {\cite{KMI08}}]
\label{prop:weakHF_bound}
For any weak Helstrom family with Helstrom ratio $p$, we have $P_{\mathrm{succ}} \leq p$ for the optimal success probability.
\end{proposition}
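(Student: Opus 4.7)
My plan is to reduce the inequality to a direct algebraic manipulation exploiting three ingredients: the affinity of virtual effects, the normalization of the virtual observable, and the non-negativity of effect values. The only delicate point is that the defining identity \eqref{eq:reference_state} of a weak Helstrom family takes place in $\widetilde{\mathcal{S}}$, so the effects $e_i$ of the given observable $\mathbf{O}=(e_i)_i\in\mathcal{O}_N$ must first be lifted to the virtual state space before we can apply the identity.

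Concretely, I fix an arbitrary $\mathbf{O}=(e_i)_i\in\mathcal{O}_N$ and aim to show $P_{\mathrm{succ}}(\mathbf{O})\leq p$; taking the supremum then yields the claim. By Lemma \ref{lem:effect_correspondence} each $e_i$ extends uniquely to a virtual effect $\widetilde{e_i}\in\widetilde{\mathcal{E}}$, and by Lemma \ref{lem:observable_correspondence} the family $\widetilde{\mathbf{O}}=(\widetilde{e_i})_i$ is a virtual observable, so in particular $\sum_{i=1}^{N}\widetilde{e_i}\equiv 1$ on $\widetilde{\mathcal{S}}$. Since $s_i\in\mathcal{S}$, we have $\widetilde{e_i}(s_i)=e_i(s_i)$, and writing $p_i=p\,\widetilde{p}_i$ gives
\begin{equation*}
P_{\mathrm{succ}}(\mathbf{O})=\sum_{i=1}^{N}p_i e_i(s_i)=p\sum_{i=1}^{N}\widetilde{p}_i\,\widetilde{e_i}(s_i).
\end{equation*}

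Next I apply \eqref{eq:reference_state} in the rearranged form $\widetilde{p}_i s_i=s-(1-\widetilde{p}_i)t_i$, which is an equality in $\widetilde{V}$ between elements of $\widetilde{\mathcal{S}}$. Since $\widetilde{e_i}$ is affine on $\widetilde{\mathcal{S}}$, it follows that $\widetilde{p}_i\,\widetilde{e_i}(s_i)=\widetilde{e_i}(s)-(1-\widetilde{p}_i)\widetilde{e_i}(t_i)$. Summing over $i$ and using $\sum_i\widetilde{e_i}(s)=1$,
\begin{equation*}
\sum_{i=1}^{N}\widetilde{p}_i\,\widetilde{e_i}(s_i)=1-\sum_{i=1}^{N}(1-\widetilde{p}_i)\widetilde{e_i}(t_i).
\end{equation*}
Because $0<\widetilde{p}_i\leq 1$ and $\widetilde{e_i}(t_i)\in[0,1]$, every term in the remaining sum is non-negative, so the expression is at most $1$. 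Multiplying by $p$ yields $P_{\mathrm{succ}}(\mathbf{O})\leq p$.

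I do not anticipate any real obstacle: the argument is essentially a one-line computation once the extension machinery is in place. The only point that needs care is ensuring that the identity \eqref{eq:reference_state}, which may involve virtual states $s,t_i\in\widetilde{\mathcal{S}}\setminus\mathcal{S}$, is compatible with the effects $e_i$ originally defined only on $\mathcal{S}$; this is precisely what Lemmas \ref{lem:effect_correspondence} and \ref{lem:observable_correspondence} guarantee, and it is the one place where the minimal framework (rather than just $\mathcal{S}$) is genuinely used in the proof.
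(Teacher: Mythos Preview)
Your proof is correct and follows essentially the same approach as the paper: both lift the observable to the virtual state space via Lemmas \ref{lem:effect_correspondence} and \ref{lem:observable_correspondence}, apply the defining identity \eqref{eq:reference_state} together with affinity of the $\widetilde{e_i}$ and the normalization $\sum_i\widetilde{e_i}(s)=1$, and then use nonnegativity of $(1-\widetilde{p}_i)\widetilde{e_i}(t_i)$ to obtain $P_{\mathrm{succ}}(\mathbf{O})\leq p$. The only difference is cosmetic---the paper starts from $1=\sum_i\widetilde{e_i}(s)$ and expands to reach $1=P_{\mathrm{succ}}(\mathbf{O})/p+\sum_i(1-\widetilde{p}_i)\widetilde{e_i}(t_i)$, whereas you start from $P_{\mathrm{succ}}(\mathbf{O})$ and work forward---but the underlying computation is identical.
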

\begin{proof}
The idea of proof is essentially the same as \cite{KMI08}.
For any observable $\mathbf{O} = (e_i)_i \in \mathcal{O}_N$ with the corresponding virtual observable $\widetilde{\mathbf{O}} = (\widetilde{e_i})_i \in \widetilde{\mathcal{O}}_N$ (see Lemma \ref{lem:observable_correspondence}), we have
\begin{eqnarray*}
1
&=& \sum_{i = 1}^{N} \widetilde{e_i}(s) \qquad \mbox{ (using $\sum_i \widetilde{e_i} = 1$)} \\
&=& \sum_i \widetilde{e_i}( \widetilde{p}_i s_i + (1 - \widetilde{p}_i) t_i ) \qquad \mbox{ (using \eqref{eq:reference_state})} \\
&=& \sum_i \widetilde{p}_i e_i(s_i) + \sum_i (1 - \widetilde{p}_i) \widetilde{e_i}(t_i) \qquad \mbox{ (using $s_i \in \mathcal{S}$)} \\
&=& \sum_i \frac{ p_i }{ p } e_i(s_i) + \sum_i (1 - \widetilde{p}_i) \widetilde{e_i}(t_i) \qquad \mbox{ (using \eqref{eq:reference_state})} \\
&=& \frac{ P_{\mathrm{succ}}(\mathbf{O}) }{ p } + \sum_i (1 - \widetilde{p}_i) \widetilde{e_i}(t_i) \qquad \mbox{ (using \eqref{eq:success_prob}).}
\end{eqnarray*}
Since $\widetilde{p}_i \leq 1$, the second term of the last row is nonnegative, therefore we have $P_{\mathrm{succ}}(\mathbf{O}) \leq p$ for any $\mathbf{O} \in \mathcal{O}_N$.
Hence Proposition \ref{prop:weakHF_bound} holds.
\end{proof}
Note that the bound $P_{\mathrm{succ}} \leq p$ given by Proposition \ref{prop:weakHF_bound} is meaningless when the weak Helstrom family is trivial.
Thus only the weak Helstrom families that are significant for our purpose are the nontrivial ones.
Now it was mentioned in Example \ref{exmp:WHF} that changing the configuration of conjugate states to larger one makes the Helstrom ratio smaller, hence makes the bound given by Proposition \ref{prop:weakHF_bound} closer to the tight one.
We are interested in whether or not the tight bound can be achieved just by this strategy.
Owing to the observation, a notion of \lq\lq Helstrom families'', that is a special subclass consisting of \lq\lq optimal'' weak Helstrom families, was introduced in \cite{KMI08}:
\begin{definition}
[{cf., Definition 2 in \cite{KMI08}}]
\label{defn:HF}
We call a weak Helstrom family a \emph{Helstrom family} if the Helstrom ratio $p$ attains the tight bound: $p = P_{\mathrm{succ}}$.
\end{definition}
If a Helstrom family exists, then we can determine (at least in principle) the optimal success probability by only searching (weak) Helstrom families by a certain (for example, geometric) method.
However, existence of Helstrom families has been proven in the original work \cite{KMI08} only for some restricted cases.
In this article, we investigate existence of Helstrom families in more general situations.

For this purpose, it is worthy to study conditions for a weak Helstrom family to be a Helstrom family.
For one direction, a sufficient condition has been given in \cite{KMI08}.
Here we prove the same result under the present framework:
\begin{proposition}
[{cf., Proposition 2 in \cite{KMI08}}]
\label{prop:sufficient_HF}
A sufficient condition for a weak Helstrom family $(\widetilde{p}_i,s_i; 1 - \widetilde{p}_i,t_i)$, $i = 1,\dots,N$, to be a Helstrom family is that there exists $\mathbf{O} = (e_i)_{i = 1}^{N} \in \mathcal{O}_N$ such that $\widetilde{e_i}(t_i) = 0$ for every $i$.
Moreover, such an observable $\mathbf{O}$ is optimal (if exists): $P_{\mathrm{succ}}(\mathbf{O}) = P_{\mathrm{succ}}$.
\end{proposition}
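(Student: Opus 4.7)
The plan is to reuse the identity established inside the proof of Proposition \ref{prop:weakHF_bound} and specialize it to the hypothesized observable $\mathbf{O}$. Recall that for any observable $\mathbf{O} = (e_i)_i \in \mathcal{O}_N$ with associated virtual observable $\widetilde{\mathbf{O}} = (\widetilde{e_i})_i$, the chain of equalities in that proof yields
\begin{equation*}
1 = \frac{P_{\mathrm{succ}}(\mathbf{O})}{p} + \sum_{i=1}^{N} (1 - \widetilde{p}_i)\, \widetilde{e_i}(t_i).
\end{equation*}
This is an identity that holds for \emph{every} $\mathbf{O} \in \mathcal{O}_N$, not just the minimizer of the error term, so it is the right starting point.

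Now I would plug in the observable $\mathbf{O} = (e_i)_i$ furnished by the hypothesis, for which $\widetilde{e_i}(t_i) = 0$ for each $i$. The sum on the right collapses to zero, and rearranging gives $P_{\mathrm{succ}}(\mathbf{O}) = p$. Combining this with the general upper bound $P_{\mathrm{succ}}(\mathbf{O}') \leq p$ from Proposition \ref{prop:weakHF_bound}, valid for every $\mathbf{O}' \in \mathcal{O}_N$, yields
\begin{equation*}
P_{\mathrm{succ}} = \sup_{\mathbf{O}' \in \mathcal{O}_N} P_{\mathrm{succ}}(\mathbf{O}') = p,
\end{equation*}
so the Helstrom ratio attains the tight bound and the family is a Helstrom family in the sense of Definition \ref{defn:HF}. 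The same chain of equalities then gives $P_{\mathrm{succ}}(\mathbf{O}) = p = P_{\mathrm{succ}}$, so $\mathbf{O}$ is optimal.

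There is essentially no obstacle here: the content of the proposition is that the inequality in the proof of Proposition \ref{prop:weakHF_bound} becomes an equality exactly when the ``error term'' $\sum_i (1 - \widetilde{p}_i)\widetilde{e_i}(t_i)$ vanishes, and the hypothesis $\widetilde{e_i}(t_i) = 0$ is precisely what forces each summand (which is already nonnegative since $\widetilde{p}_i \leq 1$ and $\widetilde{e_i} \geq 0$) to be zero. The only small bookkeeping point is to invoke Lemma \ref{lem:observable_correspondence} to pass from $\mathbf{O}$ to $\widetilde{\mathbf{O}}$ in order to evaluate $\widetilde{e_i}$ at the possibly virtual states $t_i \in \widetilde{\mathcal{S}}$; this is exactly the setup under which the identity above was derived.
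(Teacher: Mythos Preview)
Your proposal is correct and follows essentially the same approach as the paper's own proof: invoke the identity $1 = P_{\mathrm{succ}}(\mathbf{O})/p + \sum_i (1-\widetilde{p}_i)\widetilde{e_i}(t_i)$ from the proof of Proposition \ref{prop:weakHF_bound}, observe that the hypothesis kills the sum so that $P_{\mathrm{succ}}(\mathbf{O}) = p$, and then sandwich with the bound $P_{\mathrm{succ}} \leq p$ to conclude $P_{\mathrm{succ}}(\mathbf{O}) = P_{\mathrm{succ}} = p$.
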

\begin{proof}
The idea is again the same as \cite{KMI08}.
For such an observable $\mathbf{O}$, the argument in the proof of Proposition \ref{prop:weakHF_bound} implies that
\begin{displaymath}
1 = \frac{ P_{\mathrm{succ}}(\mathbf{O}) }{ p } + \sum_i (1 - \widetilde{p}_i) \widetilde{e_i}(t_i) = \frac{ P_{\mathrm{succ}}(\mathbf{O}) }{ p } \enspace,
\end{displaymath}
hence $P_{\mathrm{succ}}(\mathbf{O}) = p$.
Now we have $P_{\mathrm{succ}}(\mathbf{O}) \leq P_{\mathrm{succ}} \leq p = P_{\mathrm{succ}}(\mathbf{O})$ by Proposition \ref{prop:weakHF_bound}, therefore $P_{\mathrm{succ}}(\mathbf{O}) = P_{\mathrm{succ}} = p$.
Hence Proposition \ref{prop:sufficient_HF} holds.
\end{proof}
Again, Helstrom families are closely related to optimal state discrimination via Proposition \ref{prop:sufficient_HF}.
In the special case of two-state discrimination (i.e., $N = 2$), the above condition is rephrased as follows.
Here we use the following terminology:
\begin{definition}
\label{defn:distinguishable}
Two real or virtual states $t_1,t_2 \in \widetilde{\mathcal{S}}$ are said to be \emph{distinguishable} if there exists an $\widetilde{e} \in \widetilde{\mathcal{E}}$ such that $\widetilde{e}(t_1) = 1$ and $\widetilde{e}(t_2) = 0$, i.e., the virtual observable $(1 - \widetilde{e},\widetilde{e}) \in \widetilde{\mathcal{O}}_2$ discriminates $t_1$ and $t_2$ \emph{with certainty}. 
\end{definition}
Then the rephrased condition is the following:
\begin{corollary}
[cf., {Theorem 1 in \cite{KMI08}}]
\label{cor:sufficient_HF_two}
Let $(\widetilde{p}_i,s_i; 1 - \widetilde{p}_i,t_i)$, $i = 1,2$, be a weak Helstrom family for two states $s_1,s_2 \in \mathcal{S}$ with a priori probabilities $p_1,p_2$.
If the conjugate states $t_1$ and $t_2$ are distinguishable, then this weak Helstrom family is a Helstrom family.
Moreover, an optimal observable $\mathbf{O}$ is given by an effect $e$ with the corresponding virtual effect $\widetilde{e}$ distinguishing $t_1$ and $t_2$: $\mathbf{O} = (1-e,e)$.
\end{corollary}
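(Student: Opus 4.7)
The plan is to reduce this corollary directly to Proposition \ref{prop:sufficient_HF} by exhibiting an observable $\mathbf{O} \in \mathcal{O}_2$ whose associated virtual observable $\widetilde{\mathbf{O}}$ annihilates both conjugate states. The distinguishability hypothesis hands us exactly the right ingredient, and Lemmas \ref{lem:effect_correspondence} and \ref{lem:observable_correspondence} let us pull it back from the virtual world to the real one.

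First I would unpack the hypothesis: by Definition \ref{defn:distinguishable} there is a virtual effect $\widetilde{e} \in \widetilde{\mathcal{E}}$ with $\widetilde{e}(t_1) = 1$ and $\widetilde{e}(t_2) = 0$. Consider the virtual observable $\widetilde{\mathbf{O}} = (\widetilde{e_1}, \widetilde{e_2}) := (1 - \widetilde{e}, \widetilde{e})$; since $1 \in \widetilde{\mathcal{E}}$ and $\widetilde{\mathcal{E}}$ is closed under the affine operation $f \mapsto 1 - f$, this lies in $\widetilde{\mathcal{O}}_2$. Evaluating at the conjugate states gives $\widetilde{e_1}(t_1) = 1 - \widetilde{e}(t_1) = 0$ and $\widetilde{e_2}(t_2) = \widetilde{e}(t_2) = 0$, so $\widetilde{\mathbf{O}}$ satisfies the annihilation condition of Proposition \ref{prop:sufficient_HF} translated to virtual effects.

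Next, to apply Proposition \ref{prop:sufficient_HF} I need a genuine real observable $\mathbf{O} \in \mathcal{O}_2$ whose virtual extension is precisely $\widetilde{\mathbf{O}}$. By the bijectivity part of Lemma \ref{lem:effect_correspondence}, there is a unique $e \in \mathcal{E}$ with continuous affine extension $\widetilde{e}$; likewise $1 - e \in \mathcal{E}$ extends to $1 - \widetilde{e}$. By Lemma \ref{lem:observable_correspondence} the pair $\mathbf{O} = (1 - e, e)$ lies in $\mathcal{O}_2$ and its associated virtual observable is $\widetilde{\mathbf{O}} = (1 - \widetilde{e}, \widetilde{e})$. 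This $\mathbf{O}$ then meets the hypothesis of Proposition \ref{prop:sufficient_HF}, so the given weak Helstrom family is a Helstrom family and $\mathbf{O}$ attains $P_{\mathrm{succ}}(\mathbf{O}) = P_{\mathrm{succ}} = p$.

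I do not anticipate a genuine obstacle here: the proof is essentially a bookkeeping exercise that transports the distinguishing effect through the correspondences established in Section \ref{sec:formulation_model}. The only subtlety worth stating explicitly is that the annihilation condition of Proposition \ref{prop:sufficient_HF} is phrased in terms of the virtual extensions $\widetilde{e_i}$ evaluated at the (possibly virtual) states $t_i$, which is exactly why the distinguishing effect in Definition \ref{defn:distinguishable} is taken in $\widetilde{\mathcal{E}}$ rather than in $\mathcal{E}$; the real effect $e$ itself cannot evaluate on a virtual $t_i$, and it is the uniqueness of continuous affine extension that glues the two viewpoints together.
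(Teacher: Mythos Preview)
Your proposal is correct and is essentially the same as the paper's approach: the paper presents this corollary as an immediate rephrasing of Proposition~\ref{prop:sufficient_HF} in the two-state case using Definition~\ref{defn:distinguishable}, and you have simply spelled out the translation via Lemmas~\ref{lem:effect_correspondence} and~\ref{lem:observable_correspondence} in full detail.
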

Now owing to the existence of an optimal observable (Theorem \ref{thm:optimal_observable_exist}), we obtain a \lq\lq converse'' of the above facts.
To state the result precisely, we recall the following notion introduced in \cite{KMI08}:
\begin{definition}
[{\cite{KMI08}}]
\label{defn:generic}
By \emph{generic case} we signify any case in which the optimal success probability satisfies $P_{\mathrm{succ}} > \max_i p_i$, and by \emph{non-generic case} we signify any of the remaining cases, i.e., $P_{\mathrm{succ}} = \max_i p_i$.
\end{definition}
This definition means that, in non-generic cases, an optimal observable is always given by the trivial one that always returns $i$-th output with the index $i$ determined by $p_i = \max_j p_j$; namely, we always guess that a given state would be the most frequent $s_i$.
Hence the state discrimination problem is nontrivial only in generic cases.
Now we give the following result stating that the sufficient condition in Proposition \ref{prop:sufficient_HF} is also necessary in generic cases:
\begin{proposition}
\label{prop:necessary_HF}
Let $(\widetilde{p}_i,s_i; 1 - \widetilde{p}_i,t_i)$, $i = 1,\dots,N$, be a Helstrom family.
Then, in generic cases, an optimal observable $\mathbf{O} = (e_i)_i \in \mathcal{O}_N$ for discriminating given states satisfies $\widetilde{e_i}(t_i) = 0$ for every $i$.
\end{proposition}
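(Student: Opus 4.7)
The plan is to run the computation from the proof of Proposition \ref{prop:weakHF_bound} in reverse: that proof derived an identity
\begin{displaymath}
1 \;=\; \frac{P_{\mathrm{succ}}(\mathbf{O})}{p} \;+\; \sum_i (1 - \widetilde{p}_i)\,\widetilde{e_i}(t_i)
\end{displaymath}
for every $\mathbf{O} = (e_i)_i \in \mathcal{O}_N$ and its associated virtual observable. The bound $P_{\mathrm{succ}} \leq p$ came from discarding the nonnegative sum on the right. Under the present hypotheses the discarded sum must actually vanish, and this will force each of its summands to vanish.

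Concretely, I would proceed as follows. First, invoke Theorem \ref{thm:optimal_observable_exist} to justify that an optimal $\mathbf{O}$ exists; then substitute $\mathbf{O}$ into the identity above. Because the family is a Helstrom family, Definition \ref{defn:HF} gives $p = P_{\mathrm{succ}}$, and optimality of $\mathbf{O}$ gives $P_{\mathrm{succ}}(\mathbf{O}) = P_{\mathrm{succ}}$; hence $P_{\mathrm{succ}}(\mathbf{O})/p = 1$, so the identity collapses to
\begin{displaymath}
\sum_{i=1}^{N} (1 - \widetilde{p}_i)\,\widetilde{e_i}(t_i) \;=\; 0.
\end{displaymath}
Each factor $\widetilde{e_i}(t_i)$ lies in $[0,1]$ because $\widetilde{e_i} \in \widetilde{\mathcal{E}}$ and $t_i \in \widetilde{\mathcal{S}}$, and each coefficient $1 - \widetilde{p}_i$ is nonnegative by \eqref{eq:reference_state}. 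So all summands are nonnegative.

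It remains to check that every coefficient $1-\widetilde{p}_i$ is strictly positive, which is the one place the generic hypothesis enters. Since we are in a generic case, $p = P_{\mathrm{succ}} > \max_j p_j \geq p_i$, so $\widetilde{p}_i = p_i/p < 1$ for every $i$, and hence $1 - \widetilde{p}_i > 0$. Combined with the nonnegativity of each summand, the vanishing of the sum forces $\widetilde{e_i}(t_i) = 0$ for every $i$, which is the desired conclusion. There is no real obstacle here: the entire argument is a one-line bookkeeping consequence of the Helstrom identity together with the inequality $p > \max_j p_j$ granted by genericity, so the write-up should be very short.
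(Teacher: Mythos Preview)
Your proposal is correct and follows essentially the same approach as the paper's own proof: both substitute an optimal $\mathbf{O}$ into the identity from Proposition~\ref{prop:weakHF_bound}, use $P_{\mathrm{succ}}(\mathbf{O}) = P_{\mathrm{succ}} = p$ to collapse the sum $\sum_i (1-\widetilde{p}_i)\widetilde{e_i}(t_i)$ to zero, and then invoke genericity to rule out $\widetilde{p}_i = 1$. Your ordering (first showing every $\widetilde{p}_i < 1$ directly from $p > \max_j p_j$, then concluding each summand vanishes) is slightly cleaner than the paper's phrasing, but the content is the same.
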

\begin{proof}
For any optimal observable $\mathbf{O} = (e_i)_i$, since $P_{\mathrm{succ}}(\mathbf{O}) = P_{\mathrm{succ}} = p$, the argument in Proposition \ref{prop:weakHF_bound} implies that
\begin{displaymath}
1 = \frac{ P_{\mathrm{succ}}(\mathbf{O}) }{ p } + \sum_i (1 - \widetilde{p}_i) \widetilde{e_i}(t_i) = 1 + \sum_i (1 - \widetilde{p}_i) \widetilde{e_i}(t_i) \enspace,
\end{displaymath}
therefore $\sum_{i} (1 - \widetilde{p}_i) \widetilde{e_i}(t_i) = 0$.
Thus we have either $\widetilde{p}_i = 1$ for some $i$, or $\widetilde{e_i}(t_i) = 0$ for every $i$.
Now if $\widetilde{p}_i = 1$, then $P_{\mathrm{succ}} = p = p_i / \widetilde{p}_i = p_i$, contradicting the assumption that we are in a generic case.
Hence Proposition \ref{prop:necessary_HF} holds.
\end{proof}
\begin{corollary}
\label{cor:necessary_HF_two}
Let $(\widetilde{p}_i,s_i; 1 - \widetilde{p}_i,t_i)$, $i = 1,2$, be a Helstrom family for two states $s_1,s_2$ with a priori probabilities $p_1,p_2$.
Then, in generic cases, the conjugate states $t_1$ and $t_2$ are distinguishable by a virtual effect $\widetilde{e} \in \widetilde{\mathcal{E}}$ corresponding to an optimal observable $\mathbf{O} = (1-e,e) \in \mathcal{O}_2$ for discriminating the states $s_1$ and $s_2$.
\end{corollary}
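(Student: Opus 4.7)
The plan is to derive this corollary by specializing Proposition \ref{prop:necessary_HF} to the case $N = 2$ and combining it with the existence of an optimal observable guaranteed by Theorem \ref{thm:optimal_observable_exist}. First, I would invoke Theorem \ref{thm:optimal_observable_exist} to obtain an optimal observable $\mathbf{O} = (e_1, e_2) \in \mathcal{O}_2$, and write $e := e_2$ so that $\mathbf{O} = (1 - e, e)$ with $e \in \mathcal{E}$.

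Next, I would apply Proposition \ref{prop:necessary_HF}, whose hypothesis (a Helstrom family and a generic case) is satisfied by assumption, to conclude that $\widetilde{e_1}(t_1) = 0$ and $\widetilde{e_2}(t_2) = 0$. Setting $\widetilde{e} := \widetilde{e_2}$ immediately gives $\widetilde{e}(t_2) = 0$. For the other equation, I must identify $\widetilde{e_1}$ with $1 - \widetilde{e}$: this uses the uniqueness clause of Lemma \ref{lem:effect_correspondence}, since both $\widetilde{e_1}$ and the map $\widetilde{s} \mapsto 1 - \widetilde{e}(\widetilde{s})$ are continuous affine extensions of the effect $1 - e = e_1$ from $\mathcal{S}$ to $\widetilde{\mathcal{S}}$. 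Hence $\widetilde{e_1}(t_1) = 1 - \widetilde{e}(t_1) = 0$, i.e., $\widetilde{e}(t_1) = 1$.

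Combining the two values $\widetilde{e}(t_1) = 1$ and $\widetilde{e}(t_2) = 0$ matches Definition \ref{defn:distinguishable} verbatim, so $t_1$ and $t_2$ are distinguishable, witnessed by the virtual effect $\widetilde{e}$ corresponding to the optimal observable $\mathbf{O} = (1 - e, e)$. I do not expect any genuine obstacle here: the corollary is a transparent two-state reformulation of Proposition \ref{prop:necessary_HF}. The only step that requires a moment of care is the identification $\widetilde{1 - e} = 1 - \widetilde{e}$, which is purely a bookkeeping application of the uniqueness of continuous affine extensions.
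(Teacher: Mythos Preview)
Your proposal is correct and follows essentially the same approach as the paper's own proof, which also applies Proposition~\ref{prop:necessary_HF} to an optimal observable $\mathbf{O}=(1-e,e)$ to obtain $(1-\widetilde{e})(t_1)=0$ and $\widetilde{e}(t_2)=0$, hence $\widetilde{e}(t_1)=1$. You are simply a bit more explicit than the paper about invoking Theorem~\ref{thm:optimal_observable_exist} for existence and Lemma~\ref{lem:effect_correspondence} for the identification $\widetilde{1-e}=1-\widetilde{e}$, which the paper takes for granted.
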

\begin{proof}
By Proposition \ref{prop:necessary_HF}, an optimal observable $\mathbf{O} = (1-e,e) \in \mathcal{O}_2$ satisfies that $(1-\widetilde{e})(t_1) = 0$ and $\widetilde{e}(t_2) = 0$, therefore $\widetilde{e}(t_1) = 1$.
\end{proof}

\section{Existence of Helstrom Families for Two-State Cases}
\label{sec:existence}

In Sect.\ \ref{sec:Helstrom}, we have presented some properties of (weak) Helstrom families for $N$-state cases.
However, existence of Helstrom families has not been clarified so far.
In this section, we investigate existence of Helstrom families particularly in two-state cases, i.e., $N = 2$.
Note that our argument in this section works in a general setting, \emph{not} necessarily classical or quantum, and also is \emph{not} restricted to finite-dimensional cases.

Throughout this section, fix states $s_1,s_2 \in \mathcal{S}$ and a priori probabilities $p_1,p_2$.
For simplicity, we assume that $s_1 \neq s_2$ and $p_1 \geq p_2$ by symmetry.
Any (weak) Helstrom family in this section is for $s_1,s_2$ and $p_1,p_2$ unless otherwise specified.

\subsection{A condition for generic cases}
\label{subsec:HF_condition_generic}
In this subsection, we present a condition for generic cases for later use.
First we introduce an element $s^{\ast}$ of $V$ that plays a significant role in our following argument.
Recall that we have assumed $p_1 \geq p_2$.
If $p_1 > p_2$, then define
\begin{displaymath}
s^{\ast} = \frac{ p_1 s_1 - p_2 s_2 }{ p_1 - p_2 }
= s_1 + \frac{ p_1 }{ p_1 - p_2 } (s_1 - s_2) \enspace.
\end{displaymath}
Note that $s^{\ast} \in V$ since $s_1$ and $s_2$ are real states, therefore we have $s^{\ast} \in \widetilde{\mathcal{S}}$ if and only if $s^{\ast} \in \mathrm{cl}_V(\mathcal{S})$.
Then the aforementioned condition is the following:
\begin{lemma}
\label{lem:HFandGeneric}
\begin{enumerate}
\item If the following condition
\begin{equation}
\label{eq:assumption_generic}
\mbox{either } p_1 = p_2 \mbox{, or } p_1 > p_2 \mbox{ and } s^{\ast} \not\in \widetilde{\mathcal{S}}
\end{equation}
is satisfied and a Helstrom family exists, then it is a generic case.
\item If $p_1 > p_2$ and $s^{\ast} \in \widetilde{\mathcal{S}}$, then it is a non-generic case.
\end{enumerate}
\end{lemma}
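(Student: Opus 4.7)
Both parts of the lemma reduce to essentially the same affine calculation, used in opposite directions.

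For Part 2 the plan is to exhibit a weak Helstrom family with Helstrom ratio $p = p_1$. Set $\widetilde{p}_1 = p_1/p = 1$ together with any $t_1 \in \widetilde{\mathcal{S}}$ (say $t_1 = s_1$), and $\widetilde{p}_2 = p_2/p_1 < 1$. The ensemble equation $\widetilde{p}_i s_i + (1 - \widetilde{p}_i) t_i = s$ forces $s = s_1$ from the $i = 1$ row, and then solving the $i = 2$ row gives $t_2 = (p_1 s_1 - p_2 s_2)/(p_1 - p_2) = s^{\ast}$. Under the hypothesis $s^{\ast} \in \widetilde{\mathcal{S}}$, this defines a genuine weak Helstrom family, and the side conditions $p \geq \max_i p_i$ and $0 < \widetilde{p}_i \leq 1$ are immediate. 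Proposition \ref{prop:weakHF_bound} then yields $P_{\mathrm{succ}} \leq p_1$; the reverse bound $P_{\mathrm{succ}} \geq p_1$ comes from the observable $(1, 0) \in \mathcal{O}_2$ (always admissible, since the constant affine functions $1$ and $0$ are effects), whose success probability equals $p_1$. Thus $P_{\mathrm{succ}} = p_1 = \max_i p_i$, i.e., a non-generic case.

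For Part 1, I would invert this construction and derive a contradiction. Assume a Helstrom family exists, so that its Helstrom ratio satisfies $p = P_{\mathrm{succ}}$, and suppose for contradiction that the case is non-generic, i.e., $p = p_1$. Then $\widetilde{p}_1 = p_1/p = 1$ forces $s = s_1$ via the first ensemble equation. If the hypothesis is $p_1 = p_2$, then the same reasoning for $i = 2$ gives $s = s_2$, contradicting the standing assumption $s_1 \neq s_2$. If instead $p_1 > p_2$ and $s^{\ast} \notin \widetilde{\mathcal{S}}$, then $\widetilde{p}_2 = p_2/p_1 < 1$ and the second ensemble equation uniquely solves to $t_2 = (p_1 s_1 - p_2 s_2)/(p_1 - p_2) = s^{\ast}$. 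But by the definition of a conjugate state we have $t_2 \in \widetilde{\mathcal{S}}$, contradicting $s^{\ast} \notin \widetilde{\mathcal{S}}$. Both alternatives in \eqref{eq:assumption_generic} lead to contradictions, so the case must be generic.

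No substantive obstacle is anticipated. The only bookkeeping points that require mild care are verifying that the side constraints $p \geq \max_i p_i$ and $0 < \widetilde{p}_i \leq 1$ hold for the family constructed in Part 2, noting that $t_1 \in \widetilde{\mathcal{S}}$ can be chosen freely when $\widetilde{p}_1 = 1$ (the ensemble equation then becomes independent of $t_1$), and confirming that the trivial observable $(1, 0)$ is universally available to supply the lower bound $P_{\mathrm{succ}} \geq p_1$. The entire argument is driven by the single observation that in any weak Helstrom family with $p = p_1$, the reference state must coincide with $s_1$ and consequently the conjugate state $t_2$ is forced to be $s^{\ast}$.
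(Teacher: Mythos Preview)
Your proof is correct. Part~1 is essentially identical to the paper's argument: assume the Helstrom ratio equals $p_1$, deduce $\widetilde{p}_1 = 1$ and $s = s_1$, and obtain the contradiction $s_1 = s_2$ (when $p_1 = p_2$) or $t_2 = s^{\ast} \notin \widetilde{\mathcal{S}}$ (when $p_1 > p_2$).

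For Part~2 you take a genuinely different route. The paper argues directly: using $s_1 = (1 - p_2/p_1)\,s^{\ast} + (p_2/p_1)\,s_2$ it computes, for any $\mathbf{O} = (1-e,e) \in \mathcal{O}_2$, that $P_{\mathrm{succ}}(\mathbf{O}) = p_1 - (p_1 - p_2)\,\widetilde{e}(s^{\ast}) \leq p_1$. You instead exhibit the weak Helstrom family $(\widetilde{p}_1,\widetilde{p}_2) = (1,\,p_2/p_1)$, $s = s_1$, $t_2 = s^{\ast}$, with Helstrom ratio $p = p_1$, and invoke Proposition~\ref{prop:weakHF_bound}. Your argument is more modular and makes transparent the symmetry with Part~1 (the same affine identity is run forward to build a family, rather than backward to derive a contradiction); the paper's computation is self-contained and avoids checking the side conditions of Definition~\ref{defn:weakHF}. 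Unfolding Proposition~\ref{prop:weakHF_bound} on your family recovers exactly the paper's identity, so the two are equivalent at the level of calculations.
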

\begin{proof}
Note that for any Helstrom family $(\widetilde{p}_i,s_i; 1 - \widetilde{p}_i,t_i)$, $i = 1,2$, it is a non-generic case if and only if $\widetilde{p}_1 = 1$ (since $p_1 \geq p_2$).
Now if $p_1 = p_2$ and a Helstrom family exists, then $\widetilde{p}_1 = 1$ implies that $\widetilde{p}_2 = 1$ and $s = s_1 = s_2$ (see \eqref{eq:reference_state}), contradicting the assumption $s_1 \neq s_2$.
If $p_1 > p_2$, $s^{\ast} \not\in \widetilde{\mathcal{S}}$ and a Helstrom family exists, then $\widetilde{p}_1 = 1$ implies that $p_1 = p = p_2 / \widetilde{p}_2$, $s = s_1 = \widetilde{p}_2 s_2 + (1 - \widetilde{p}_2) t_2$ and
\begin{displaymath}
t_2 = \frac{ s_1 - \widetilde{p}_2 s_2 }{ 1 - \widetilde{p}_2 }
= \frac{ p_1 s_1 - p_1 \widetilde{p}_2 s_2 }{ p_1 - p_1 \widetilde{p}_2 }
= s^{\ast} \not\in \widetilde{\mathcal{S}} \enspace,
\end{displaymath}
a contradiction.
Thus the first part of the lemma hold.
For the second part, if $p_1 > p_2$ and $s^{\ast} \in \widetilde{\mathcal{S}}$, then we have $s_1 = (1 - p_2 / p_1) s^{\ast} + (p_2 / p_1) s_2$ by the definition of $s^{\ast}$, therefore for any $\mathbf{O} = (1-e,e) \in \mathcal{O}_2$ we have
\begin{eqnarray*}
P_{\mathrm{succ}}(\mathbf{O})
&=& p_1 (1 - e(s_1)) + p_2 e(s_2) \\
&=& p_1 - p_1 \left( \left( 1 - \frac{ p_2 }{ p_1 } \right) \widetilde{e}(s^{\ast}) + \frac{ p_2 }{ p_1 } e(s_2) \right) + p_2 e(s_2) \\
&=& p_1 - (p_1 - p_2) \widetilde{e}(s^{\ast}) \enspace.
\end{eqnarray*}
Since $s^{\ast} \in \widetilde{\mathcal{S}}$, we have $\widetilde{e}(s^{\ast}) \geq 0$, therefore $P_{\mathrm{succ}}(\mathbf{O}) \leq p_1$ for any $\mathbf{O} \in \mathcal{O}_2$.
This means that it is a non-generic case.
Hence Lemma \ref{lem:HFandGeneric} holds.
\end{proof}
Owing to this lemma, in what follows we assume that the condition \eqref{eq:assumption_generic} in Lemma \ref{lem:HFandGeneric} is satisfied unless otherwise specified, in order to focus on generic cases.
In the following subsections we will prove that a Helstrom family always exists under the assumption \eqref{eq:assumption_generic}, that is our main result in this article.

\subsection{Auxiliary results}
\label{subsec:HF_auxiliary_result}
In this subsection, for later use we summarize some known facts for topological vector spaces, together with some further properties.
Our main reference is the book \cite{SW_book}.
See also Sect.\ \ref{sec:formulation_model} for terminology.

First we list the following (special cases of the) facts presented in \cite{SW_book}:
\begin{theorem}
[{Theorem 3.2 in \cite[Chap.\ I]{SW_book}}]
\label{thm:fin_dim_is_Euclidean}
Any $n$-dimensional Hausdorff t.v.s.\ with $n < \infty$ is isomorphic to the $n$-dimensional Euclidean space $\mathbb{R}^n$.
\end{theorem}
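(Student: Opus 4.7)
The plan is to construct the canonical coordinate isomorphism and verify that in a Hausdorff t.v.s.\ it is automatically a homeomorphism. Let $W$ be an $n$-dimensional Hausdorff t.v.s., pick any algebraic basis $\{v_1,\dots,v_n\}$ of $W$, and define the linear bijection $\phi:\mathbb{R}^n \to W$ by $\phi(\xi_1,\dots,\xi_n) = \sum_{i=1}^n \xi_i v_i$. The only thing to check is that $\phi$ is a homeomorphism.

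Continuity of $\phi$ is immediate: scalar multiplication and vector addition are jointly continuous in any t.v.s., so the finite sum $(\xi_1,\dots,\xi_n)\mapsto \sum \xi_i v_i$ is continuous as a composition of such operations. The substantive task is continuity of $\phi^{-1}$; since $\phi$ is linear, it suffices to prove continuity at $0$, which amounts to exhibiting, for each $\varepsilon>0$, a $0$-neighborhood $U_\varepsilon \subset W$ with $\phi^{-1}(U_\varepsilon) \subset \varepsilon B^\circ$, where $B^\circ$ is the open Euclidean unit ball.

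For this I would exploit the compactness of the Euclidean unit sphere $S^{n-1}\subset\mathbb{R}^n$. Since $\phi$ is continuous, the image $K := \phi(S^{n-1})$ is compact in $W$, and since $W$ is Hausdorff, $K$ is closed. The injectivity of $\phi$ gives $0\notin K$, so $W\setminus K$ is an open neighborhood of $0$. Using the standard fact that in every t.v.s.\ the balanced open neighborhoods of $0$ form a base (Schaefer--Wolff, Chap.\ I), I choose a balanced open $U\subset W\setminus K$ with $0\in U$. I then claim $\phi^{-1}(U)\subset B^\circ$: if on the contrary some $\xi\in\phi^{-1}(U)$ satisfied $\|\xi\|\geq 1$, then $\xi/\|\xi\|\in S^{n-1}$ and hence $\phi(\xi)/\|\xi\| = \phi(\xi/\|\xi\|)\in K$, while balancedness of $U$ together with $1/\|\xi\|\leq 1$ gives $\phi(\xi)/\|\xi\|\in U$, contradicting $U\cap K=\emptyset$. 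Applying the same argument with $\varepsilon U$ (again balanced, open, disjoint from $\varepsilon K$) in place of $U$, or simply using linearity of $\phi^{-1}$, yields $\phi^{-1}(\varepsilon U)\subset \varepsilon B^\circ$, proving continuity of $\phi^{-1}$ at $0$.

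The main obstacle, and the only place where the Hausdorff hypothesis is used, is step producing the balanced $0$-neighborhood $U$ disjoint from the sphere's image $K$: both the closedness of $K$ (requiring Hausdorffness applied to the compact set $K$) and the existence of a balanced base at $0$ are genuinely t.v.s.-theoretic inputs. Once $U$ is in hand, the scaling argument using balancedness of $U$ and the radial structure of $\mathbb{R}^n$ is routine. Uniqueness of the Euclidean topology up to isomorphism then follows because any two such $\phi$ differ by a linear isomorphism of $\mathbb{R}^n$, which is automatically a homeomorphism.
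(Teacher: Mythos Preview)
Your argument is correct and is precisely the classical proof: build the coordinate map, get continuity for free from joint continuity of the vector-space operations, and for the inverse use that $\phi(S^{n-1})$ is compact, hence closed in the Hausdorff space $W$, so a balanced $0$-neighborhood missing it exists and forces $\phi^{-1}(U)\subset B^\circ$. Note, however, that the paper does not give its own proof of this statement at all---it is quoted verbatim as Theorem~3.2 of \cite[Chap.~I]{SW_book} and used as a black box---so there is nothing in the paper to compare your argument against; what you have written is essentially the proof found in that reference.
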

\begin{proposition}
[{Proposition 3.3 in \cite[Chap.\ I]{SW_book}}]
\label{prop:fin_dim_is_closed}
Let $W$ be a t.v.s.
If $W'$ is a linear subspace of $W$ that is closed in $W$, and $W''$ is a finite-dimensional linear subspace of $W$, then $W' + W''$ is closed in $W$.
\end{proposition}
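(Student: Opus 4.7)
The plan is to reduce the general situation to the finite-dimensional case covered by Theorem \ref{thm:fin_dim_is_Euclidean}, via the device of quotienting by the closed subspace $W'$.

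First I would introduce the canonical projection $\pi \colon W \to W/W'$. Since $W'$ is closed in $W$, a standard fact about topological vector spaces guarantees that the quotient topology turns $W/W'$ into a Hausdorff t.v.s., and that $\pi$ is a continuous (in fact open) linear map. This step uses no more than what is already in the background theory of t.v.s.\ recalled in Sect.\ \ref{sec:formulation_model}.

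Next I would analyze the image $U := \pi(W'')$. Because $\pi$ is linear and $W''$ is finite-dimensional, $U$ is a linear subspace of $W/W'$ of dimension at most $\dim W'' < \infty$. By Theorem \ref{thm:fin_dim_is_Euclidean} applied to $U$ regarded as a Hausdorff t.v.s.\ in its own right, $U$ is isomorphic as a t.v.s.\ to some Euclidean space $\mathbb{R}^k$; in particular $U$ is complete, and any complete subspace of a Hausdorff t.v.s.\ is closed. Hence $U$ is closed in $W/W'$.

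Finally I would conclude: since $\pi$ is continuous and the elementary identity $\pi^{-1}(U) = W' + W''$ holds, the set $W' + W''$ is closed in $W$ as the preimage of a closed set under a continuous map, which is precisely the claim. The main technical obstacle is making precise the two ``standard facts'' I invoked---that the quotient of a t.v.s.\ by a closed linear subspace is again a Hausdorff t.v.s., and that a finite-dimensional Hausdorff t.v.s.\ is automatically complete and hence closed in any ambient Hausdorff t.v.s. Both are classical results found in \cite{SW_book}, which is why the paper is content to quote the proposition directly rather than reproduce an argument.
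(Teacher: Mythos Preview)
Your argument is correct and is in fact the standard proof of this result (indeed, it is essentially the argument Schaefer--Wolff give for their Proposition~3.3). Note, however, that the paper does \emph{not} supply its own proof here: Proposition~\ref{prop:fin_dim_is_closed} is simply quoted from \cite{SW_book} as an auxiliary fact, exactly as you observe in your final paragraph. So there is no ``paper's proof'' to compare against; you have correctly reconstructed the classical quotient-space argument that the reference contains, and your identification of the two background ingredients (Hausdorffness of $W/W'$ when $W'$ is closed, and completeness/closedness of finite-dimensional subspaces via Theorem~\ref{thm:fin_dim_is_Euclidean}) is accurate.
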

\begin{proposition}
[{Proposition 3.4 in \cite[Chap.\ I]{SW_book}}]
\label{prop:fin_dim_continuous}
Every linear functional on a finite-dimensional Hausdorff t.v.s.\ is continuous.
\end{proposition}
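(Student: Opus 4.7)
The plan is to reduce immediately to the classical Euclidean case by invoking Theorem \ref{thm:fin_dim_is_Euclidean}. Suppose $W$ is a Hausdorff t.v.s.\ with $\dim W = n < \infty$. That theorem provides an isomorphism of t.v.s.\ $\Phi \colon W \to \mathbb{R}^n$, meaning $\Phi$ is both a linear bijection and a homeomorphism. Given any linear functional $f \colon W \to \mathbb{R}$, define $g := f \circ \Phi^{-1} \colon \mathbb{R}^n \to \mathbb{R}$, which is again linear. Since $f = g \circ \Phi$ is a composition with a homeomorphism, $f$ will be continuous as soon as $g$ is, so the problem is reduced to showing that every linear functional on the standard Euclidean space $\mathbb{R}^n$ is continuous.

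This latter fact I would prove by direct computation. Writing $a_i := g(e_i)$ for the values of $g$ on the standard basis $e_1,\dots,e_n$, linearity gives $g(x) = \sum_{i=1}^n a_i x_i$ for every $x = (x_1,\dots,x_n) \in \mathbb{R}^n$. The Cauchy--Schwarz inequality then yields the Lipschitz estimate
\[
|g(x) - g(y)| \;=\; \Bigl|\sum_{i=1}^n a_i (x_i - y_i)\Bigr| \;\le\; \Bigl(\sum_{i=1}^n a_i^2\Bigr)^{1/2} \|x - y\|,
\]
which proves continuity of $g$ with respect to the Euclidean topology on $\mathbb{R}^n$. Alternatively, one may simply observe that each coordinate projection $x \mapsto x_i$ is continuous and $g$ is a finite $\mathbb{R}$-linear combination of them.

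There is essentially no obstacle left at this stage: the entire substance of the proposition has been delegated to Theorem \ref{thm:fin_dim_is_Euclidean}, which promotes the bare algebraic hypothesis $\dim W = n$ together with the Hausdorff property into a genuine topological identification with $\mathbb{R}^n$. The only mild subtlety worth flagging is that one must use the full force of Theorem \ref{thm:fin_dim_is_Euclidean} — namely that the isomorphism is a \emph{homeomorphism} and not merely a linear bijection — in order to transport continuity back and forth between $W$ and $\mathbb{R}^n$; the Hausdorff hypothesis in the statement is precisely what makes this possible and cannot be dropped.
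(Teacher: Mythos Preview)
Your proof is correct and follows the standard route via Theorem~\ref{thm:fin_dim_is_Euclidean}. Note that the paper does not actually supply its own proof of this proposition: it is simply quoted as a background fact from \cite{SW_book}, so there is nothing to compare against beyond observing that your argument is the expected one.
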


The next theorem is a variant of Hahn-Banach's Theorem.
Here we use the following notion: A real-valued function $g$ on a vector space $W$ is called a \emph{semi-norm} if we have $g(x + y) \leq g(x) + g(y)$ for any $x,y \in W$ and we have $g(\lambda x) = |\lambda| g(x)$ for any $x \in W$ and $\lambda \in \mathbb{R}$.
Then we have the following theorem:
\begin{theorem}
[{Theorem 3.2 in \cite[Chap.\ II]{SW_book}}]
\label{thm:Hahn-Banach}
Let $W$ be a vector space, $g$ a semi-norm on $W$, and $W'$ a linear subspace of $W$.
If $f$ is a linear functional on $W'$ such that $|f(x)| \leq g(x)$ for all $x \in W'$, then $f$ extends to a linear functional $\overline{f}$ on $W$ such that $|\overline{f}(x)| \leq g(x)$ for all $x \in W$.
\end{theorem}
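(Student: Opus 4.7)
The plan is to prove this via the classical two-stage Hahn--Banach argument: a one-dimensional extension lemma, followed by a Zorn's lemma exhaustion that propagates the extension from $W'$ all the way to $W$. Both stages use only the semi-norm axioms (subadditivity and absolute homogeneity) together with linearity of $f$; no topology on $W$ is involved.

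For the one-dimensional step, I fix any $x_0 \in W \setminus W'$ and seek a linear extension $f_1$ of $f$ to $W_1 := W' + \mathbb{R} x_0$ satisfying $|f_1| \leq g$ on $W_1$. Since $x_0 \notin W'$, every element of $W_1$ has a unique decomposition $y + \lambda x_0$ with $y \in W'$ and $\lambda \in \mathbb{R}$, so $f_1$ is determined by the single real number $c := f_1(x_0)$ via $f_1(y + \lambda x_0) = f(y) + \lambda c$. Using absolute homogeneity $g(\lambda z) = |\lambda|\,g(z)$ and rescaling $y \mapsto \lambda y$, the constraint $|f(y) + \lambda c| \leq g(y + \lambda x_0)$ (for all $y \in W'$, $\lambda \in \mathbb{R}$) collapses to the family of two-sided inequalities
\[
-f(y) - g(y + x_0) \;\leq\; c \;\leq\; -f(y) + g(y + x_0) \qquad (y \in W').
\]
Such a $c$ exists iff $f(y_2 - y_1) \leq g(y_1 + x_0) + g(y_2 + x_0)$ for all $y_1, y_2 \in W'$, which follows from the hypothesis on $f|_{W'}$ together with subadditivity and symmetry of $g$:
\[
|f(y_2 - y_1)| \;\leq\; g(y_2 - y_1) \;=\; g\bigl((y_2 + x_0) - (y_1 + x_0)\bigr) \;\leq\; g(y_2 + x_0) + g(y_1 + x_0).
\]
Any $c$ in the resulting nonempty closed interval defines the desired $f_1$.

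For the global step, I consider the poset $\mathcal{P}$ of pairs $(U, h)$ with $U \supseteq W'$ a linear subspace of $W$ and $h : U \to \mathbb{R}$ a linear extension of $f$ satisfying $|h| \leq g|_U$, ordered by $(U, h) \preceq (U', h')$ iff $U \subseteq U'$ and $h'|_U = h$. Any chain $\{(U_\alpha, h_\alpha)\}$ has an upper bound: take $U := \bigcup_\alpha U_\alpha$ and define $h$ on $U$ by $h|_{U_\alpha} = h_\alpha$ (well-defined because the $h_\alpha$ form a chain). By Zorn's lemma $\mathcal{P}$ has a maximal element $(U^*, h^*)$. If $U^* \neq W$, applying the one-dimensional lemma to $(U^*, h^*)$ with any $x_0 \in W \setminus U^*$ strictly extends this element inside $\mathcal{P}$, contradicting maximality. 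Hence $U^* = W$ and $\overline{f} := h^*$ is the desired extension.

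The main obstacle is purely the algebraic bookkeeping inside the one-dimensional lemma, namely the reduction of the two-parameter bound $|f(y) + \lambda c| \leq g(y + \lambda x_0)$ to the single-parameter constraints on $c$: this requires handling $\lambda > 0$ and $\lambda < 0$ separately, each time renormalizing by $|\lambda|$ and invoking absolute homogeneity. Once that is in place, the feasibility of the interval for $c$ is the short telescoping estimate above, and the Zorn step is entirely routine.
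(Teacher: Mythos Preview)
Your proof is correct: it is the standard Hahn--Banach argument, and both the one-dimensional extension step and the Zorn exhaustion are carried out cleanly. Note, however, that the paper does not supply its own proof of this theorem; it is quoted verbatim from Schaefer--Wolff (Theorem~3.2 in Chapter~II) and used as a black box, so there is nothing in the paper to compare against beyond observing that your argument is the classical one the reference presumably gives.
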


A subset $C$ of a vector space $W$ is called \emph{circled} if $x \in C$ and $-1 \leq \lambda \leq 1$ imply $\lambda x \in C$; and called \emph{radial} if for any $x \in W$, there exists $\lambda_0 \in \mathbb{R}$ such that $x \in \lambda C$ whenever $|\lambda| \geq |\lambda_0|$.
If $C$ is convex, radial and circled, then the \emph{Minkowski functional} (or \emph{gauge}) $g_C:W \to \mathbb{R}$ of $C$ is defined by
\begin{equation}
\label{eq:Minkowski functional}
g_C(x) = \inf\{\lambda > 0 \mid x \in \lambda C\} \mbox{ for each } x \in W \enspace.
\end{equation}
\begin{proposition}
[{Proposition 1.4 in \cite[Chap.\ II]{SW_book}}]
\label{prop:Minkowski_functional}
The Minkowski functional $g_C$ of $C$ is a semi-norm on $W$.
\end{proposition}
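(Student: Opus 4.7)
The plan is to verify the two defining properties of a semi-norm, namely absolute homogeneity $g_C(\alpha x) = |\alpha| g_C(x)$ and subadditivity $g_C(x+y) \leq g_C(x) + g_C(y)$, directly from the infimum definition \eqref{eq:Minkowski functional}, invoking each of the three hypotheses on $C$ (convex, radial, circled) exactly where it is needed. Before those, I would first record that the radial hypothesis makes the set $\{\lambda > 0 \mid x \in \lambda C\}$ nonempty for every $x \in W$, so that $g_C$ is finite-valued on $W$; the circled hypothesis applied with scalar $0$ to any point of $C$ (which exists because $C$ is radial, hence nonempty) yields $0 \in C$, and therefore $g_C(0) = 0$.

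For absolute homogeneity, the case $\alpha = 0$ is immediate. For $\alpha \neq 0$, the key observation is that the circled hypothesis implies $\mu C = |\mu| C$ for every nonzero real $\mu$: indeed, $c \in C$ forces $-c \in C$ (take the scalar $-1$), so multiplication of $C$ by a negative scalar can be absorbed into a sign flip inside $C$. Consequently, $\alpha x \in \lambda C$ is equivalent to $x \in (\lambda/|\alpha|) C$, and the change of variable $\mu = \lambda/|\alpha|$ converts the infimum defining $g_C(\alpha x)$ into $|\alpha|$ times the infimum defining $g_C(x)$.

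For subadditivity, I would argue by $\varepsilon$-approximation. Given $\varepsilon > 0$, choose $\lambda, \mu > 0$ with $x \in \lambda C$, $y \in \mu C$ and $\lambda + \mu \leq g_C(x) + g_C(y) + \varepsilon$, and write $x = \lambda a$, $y = \mu b$ with $a,b \in C$. Then
\[
x + y \;=\; (\lambda + \mu)\!\left( \frac{\lambda}{\lambda+\mu}\, a + \frac{\mu}{\lambda+\mu}\, b \right),
\]
where the element in parentheses lies in $C$ by convexity applied to the weights $\lambda/(\lambda+\mu)$ and $\mu/(\lambda+\mu)$. Hence $x + y \in (\lambda+\mu) C$, so $g_C(x+y) \leq \lambda + \mu \leq g_C(x) + g_C(y) + \varepsilon$, and letting $\varepsilon \to 0$ concludes. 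I do not foresee a real obstacle: each of the three assumptions on $C$ enters once in an essentially transparent way, with the only mild subtlety being that the infima defining $g_C(x)$ and $g_C(y)$ may not be attained, which is handled cleanly by the $\varepsilon$-slack in the subadditivity step.
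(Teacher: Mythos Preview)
Your argument is correct and is the standard proof of this classical fact. Note, however, that the paper does not supply its own proof of this proposition: it is simply quoted from \cite[Chap.\ II, Proposition 1.4]{SW_book} as a known background result, so there is no proof in the paper to compare against.
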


From now, we present the following two properties of our minimal framework (see Theorem \ref{thm:state_space_embedded}) that are consequences of the above facts:
\begin{corollary}
\label{cor:fin_dim_is_Euclidean}
Every finite-dimensional affine subspace $W$ of the t.v.s.\ $\widetilde{V}$ is closed in $\widetilde{V}$ and is isomorphic to the Euclidean space $\mathbb{R}^n$ with $n = \dim W$.
Hence $\widetilde{\mathcal{S}} \cap W$ is a compact subset of $W$.
\end{corollary}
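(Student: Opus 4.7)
The plan is to reduce everything to the linear case by a translation, and then invoke the three cited facts (Theorem \ref{thm:fin_dim_is_Euclidean}, Proposition \ref{prop:fin_dim_is_closed}, and the compactness of $\widetilde{\mathcal{S}}$ from Theorem \ref{thm:state_space_embedded}) one at a time. So: pick any $w_0 \in W$ and set $W_0 = W - w_0$, which is an $n$-dimensional linear subspace of $\widetilde{V}$. The subspace topology on $W_0$ inherited from $\widetilde{V}$ makes it a t.v.s., and it is Hausdorff because $\widetilde{V}$ is. Hence by Theorem \ref{thm:fin_dim_is_Euclidean}, $W_0$ is isomorphic as a t.v.s.\ to $\mathbb{R}^n$. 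Since translation by $w_0$ is a homeomorphism of $\widetilde{V}$ onto itself that carries $W_0$ onto $W$, the affine subspace $W$ is (affinely) homeomorphic to $\mathbb{R}^n$, giving the second assertion.

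Next I would prove that $W$ is closed in $\widetilde{V}$. The single-point set $\{0\}$ is closed in the Hausdorff space $\widetilde{V}$, so in particular it is a closed linear subspace. Applying Proposition \ref{prop:fin_dim_is_closed} with $W' = \{0\}$ and $W'' = W_0$, the linear subspace $W_0 = \{0\} + W_0$ is closed in $\widetilde{V}$. Translating by $w_0$ (a homeomorphism of $\widetilde{V}$) preserves closedness, so $W = W_0 + w_0$ is closed in $\widetilde{V}$ as well.

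Finally, for the last assertion: by Theorem \ref{thm:state_space_embedded}, $\widetilde{\mathcal{S}}$ is compact in $\widetilde{V}$, and $W$ is closed in $\widetilde{V}$ by the step just completed. Hence $\widetilde{\mathcal{S}} \cap W$ is a closed subset of the compact space $\widetilde{\mathcal{S}}$, and therefore compact; compactness is intrinsic, so it remains compact when viewed as a subset of $W$ with its subspace topology.

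There is no real obstacle here; the only point requiring a little care is to remember that $W$ is only an \emph{affine} subspace, so one must translate to a linear subspace before applying Theorem \ref{thm:fin_dim_is_Euclidean} and Proposition \ref{prop:fin_dim_is_closed}, and then translate back using the fact that translations are homeomorphisms in any t.v.s.
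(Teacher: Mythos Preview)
Your proof is correct and follows essentially the same approach as the paper: translate $W$ to a linear subspace, apply Theorem~\ref{thm:fin_dim_is_Euclidean} and Proposition~\ref{prop:fin_dim_is_closed} (with $W'=\{0\}$), then translate back using that translations in a t.v.s.\ are homeomorphisms. The paper phrases the translation step as ``assume without loss of generality that $W$ is a linear subspace,'' but the content is identical.
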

\begin{proof}
The compactness of $\widetilde{\mathcal{S}} \cap W$ follows from the remaining parts.
Since the topology $\mathcal{T}(\widetilde{V})$ of $\widetilde{V}$ is invariant under any translation $x \mapsto x + x_0$, $x_0 \in \widetilde{V}$, we assume without loss of generality that $W$ is a linear subspace of $\widetilde{V}$.
Since $\widetilde{V}$ is Hausdorff, the assertion $W \simeq \mathbb{R}^n$ follows from Theorem \ref{thm:fin_dim_is_Euclidean}; while the null subspace $\{0\}$ of $\widetilde{V}$ is closed in $\widetilde{V}$, therefore $W = \{0\} + W$ is also closed by Proposition \ref{prop:fin_dim_is_closed}.
Hence Corollary \ref{cor:fin_dim_is_Euclidean} holds.
\end{proof}
\begin{corollary}
\label{cor:extension_from_fin_dim}
Let $W$ be a finite-dimensional affine subspace of $\widetilde{V}$.
Then any affine functional $f$ on $W$ extends to a continuous affine functional $\overline{f}$ on $\widetilde{V}$.
\end{corollary}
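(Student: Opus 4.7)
The plan is to reduce the assertion to the linear case and then apply Hahn--Banach (Theorem \ref{thm:Hahn-Banach}) with the Minkowski functional of a suitable neighborhood of $0$ as the bounding semi-norm. First, fix $w_0 \in W$ and put $W_0 = W - w_0$ and $f_0(x) = f(x+w_0) - f(w_0)$ for $x \in W_0$. Since translations are homeomorphisms of $\widetilde{V}$, $W_0$ is a finite-dimensional linear subspace of $\widetilde{V}$, carrying the subspace topology induced from the Hausdorff t.v.s.\ $\widetilde{V}$; moreover $f_0$ is linear on $W_0$, so by Theorem \ref{thm:fin_dim_is_Euclidean} together with Proposition \ref{prop:fin_dim_continuous}, $f_0$ is continuous on $W_0$. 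Once I produce a continuous linear extension $\overline{f_0}: \widetilde{V} \to \mathbb{R}$, the map $\overline{f}(x) = \overline{f_0}(x - w_0) + f(w_0)$ will be the desired continuous affine extension of $f$.

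To build $\overline{f_0}$ I use local convexity of $\widetilde{V}$. By continuity of $f_0$ on $W_0$ (subspace topology), the set $\{x \in W_0 \mid |f_0(x)| < 1\}$ is an open neighborhood of $0$ in $W_0$, hence equals $V' \cap W_0$ for some open neighborhood $V'$ of $0$ in $\widetilde{V}$. Since $\widetilde{V}$ is locally convex Hausdorff, I can find a convex, circled (balanced) open neighborhood $U$ of $0$ with $U \subset V'$; such a $U$ is automatically radial, so its Minkowski functional $g = g_U$ is a semi-norm on $\widetilde{V}$ by Proposition \ref{prop:Minkowski_functional}.

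Next, I verify $|f_0(x)| \leq g(x)$ for every $x \in W_0$. For any $\lambda > g(x)$ one has $x/\lambda \in U \cap W_0$ by definition of $g$, so $|f_0(x/\lambda)| < 1$, i.e., $|f_0(x)| < \lambda$; letting $\lambda \to g(x)^+$ gives the claimed bound (the case $g(x) = 0$ then forces $f_0(x) = 0$). Theorem \ref{thm:Hahn-Banach} now extends $f_0$ to a linear functional $\overline{f_0}$ on $\widetilde{V}$ with $|\overline{f_0}(x)| \leq g(x)$ everywhere. Since $g(x) \leq 1$ on $U$, the functional $\overline{f_0}$ is bounded on a neighborhood of $0$, which for linear functionals on a t.v.s.\ is equivalent to continuity. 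Translating back as above yields the required $\overline{f}$.

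The only delicate point is bridging the purely algebraic content of Theorem \ref{thm:Hahn-Banach} with the topological requirement of continuity: a generic Hahn--Banach extension need not be continuous. The trick that makes everything work in one stroke is choosing the bounding semi-norm to be the Minkowski functional of a convex circled neighborhood of $0$ on which the functional is already controlled; local convexity of $\widetilde{V}$ is exactly what guarantees such a neighborhood exists, and this is where the l.c.\ hypothesis in Theorem \ref{thm:state_space_embedded} is really used.
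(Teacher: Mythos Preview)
Your proof is correct and follows the same overall structure as the paper's: translate to reduce to a linear functional on a finite-dimensional linear subspace, invoke Proposition \ref{prop:fin_dim_continuous} for continuity there, extend to a continuous linear functional on $\widetilde{V}$, and translate back. The only difference is that the paper obtains the continuous extension in one line by citing Theorem \ref{thm:extension_continuous_to_continuous} (the l.c.\ Hahn--Banach extension theorem), whereas you unpack that step explicitly, choosing a convex circled $0$-neighborhood, bounding $f_0$ by its Minkowski functional, and applying the algebraic Hahn--Banach Theorem \ref{thm:Hahn-Banach} to get an extension that is automatically bounded near $0$ and hence continuous. Your route is more self-contained but longer; the paper's is shorter because the needed packaged result is already available.
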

\begin{proof}
Fix an element $x_0 \in W$ and put $\alpha = f(x_0)$.
Then the linear functional $g:x \mapsto f(x + x_0) - \alpha$ on a finite-dimensional linear subspace $W - x_0$ of $\widetilde{V}$ is continuous by Proposition \ref{prop:fin_dim_continuous} since $\widetilde{V}$ is Hausdorff.
Moreover, since $\widetilde{V}$ is l.c., a consequence of Hahn-Banach's Theorem (Theorem \ref{thm:extension_continuous_to_continuous}) implies that this $g$ extends to a $\overline{g} \in \mathcal{L}_c(\widetilde{V})$.
Now the map $\overline{f}$ defined by $\overline{f}(x) = \overline{g}(x - x_0) + \alpha$ is an affine extension of $f$, and $\overline{f}$ is continuous since the translation $x \mapsto x - x_0$ is an isomorphism from $\widetilde{V}$ to itself.
Hence Corollary \ref{cor:extension_from_fin_dim} holds.
\end{proof}

\subsection{Candidates of conjugate states for Helstrom families}
\label{subsec:HF_conjugate_state}

In this subsection, we investigate the candidates of conjugate states $t_1,t_2$ for Helstrom families.
For the purpose, we introduce some further notations.
Recall that we have assumed the condition \eqref{eq:assumption_generic}.
In the case $p_1 = p_2$, let $\mathcal{C}_{\mathrm{weak}}$ be the set of all pairs $(t_1,t_2)$ of distinct $t_1,t_2 \in \widetilde{\mathcal{S}}$ such that the vector $\overrightarrow{ t_1 t_2 }$ is proportional to $\overrightarrow{ s_2 s_1 }$ (i.e., $t_2 - t_1 = c (s_1 - s_2)$ for some $c > 0$).
On the other hand, in the case $p_1 > p_2$ and $s^{\ast} \not\in \widetilde{\mathcal{S}}$, let $\mathcal{C}_{\mathrm{weak}}$ be the set of all pairs $(t_1,t_2)$ of distinct $t_1,t_2 \in \widetilde{\mathcal{S}}$ such that $t_2$ lies in the line segment $\overline{t_1 s^{\ast}} = \mathrm{Conv}(\{t_1,s^{\ast}\})$ between $t_1$ and $s^{\ast}$.
Note that $\mathcal{C}_{\mathrm{weak}} \neq \emptyset$ since $(s_2,s_1) \in \mathcal{C}_{\mathrm{weak}}$.
The next lemma shows that $\mathcal{C}_{\mathrm{weak}}$ is the set of the pairs of conjugate states for weak Helstrom families:
\begin{lemma}
\label{lem:conjugate_state_weakHF}
If $(\widetilde{p}_i,s_i;1 - \widetilde{p}_i,t_i)$, $i = 1,2$, is a weak Helstrom family, then $(t_1,t_2) \in \mathcal{C}_{\mathrm{weak}}$.
Conversely, if $(t_1,t_2) \in \mathcal{C}_{\mathrm{weak}}$, then there exist $0 < \widetilde{p}_i \leq 1$, $i = 1,2$, such that $(\widetilde{p}_i,s_i;1 - \widetilde{p}_i,t_i)$, $i = 1,2$, is a weak Helstrom family.
\end{lemma}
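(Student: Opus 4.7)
\textbf{Proof proposal for Lemma \ref{lem:conjugate_state_weakHF}.}
My plan is to handle the two cases in the definition of $\mathcal{C}_{\mathrm{weak}}$ separately and, in each case, to read off the relation between $(t_1,t_2)$ and $(\widetilde{p}_1,\widetilde{p}_2)$ directly from the reference-state equation $\widetilde{p}_1 s_1 + (1-\widetilde{p}_1) t_1 = \widetilde{p}_2 s_2 + (1-\widetilde{p}_2) t_2$. Both directions of the lemma will then reduce to a one-parameter algebraic correspondence.

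For the forward direction, suppose $(\widetilde{p}_i,s_i;1-\widetilde{p}_i,t_i)$ is a weak Helstrom family, with $\widetilde{p}_i = p_i/p$. If $p_1 = p_2$, then $\widetilde{p}_1 = \widetilde{p}_2$ and the reference-state equation collapses to $(1-\widetilde{p}_1)(t_2-t_1) = \widetilde{p}_1 (s_1-s_2)$; since $s_1 \neq s_2$ forces $\widetilde{p}_1 < 1$, we get $t_2 - t_1 = \frac{\widetilde{p}_1}{1-\widetilde{p}_1}(s_1-s_2)$ with positive scalar, so $(t_1,t_2) \in \mathcal{C}_{\mathrm{weak}}$. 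If instead $p_1 > p_2$ (under the standing assumption $s^{\ast} \notin \widetilde{\mathcal{S}}$), I use the identity $\widetilde{p}_1 s_1 - \widetilde{p}_2 s_2 = (\widetilde{p}_1-\widetilde{p}_2)\, s^{\ast}$, obtained by dividing the numerator and denominator of $s^{\ast}$ by $p$. Solving the reference-state equation for $t_2$ yields
\begin{displaymath}
t_2 = \frac{1-\widetilde{p}_1}{1-\widetilde{p}_2}\, t_1 + \frac{\widetilde{p}_1-\widetilde{p}_2}{1-\widetilde{p}_2}\, s^{\ast},
\end{displaymath}
which exhibits $t_2$ as a convex combination of $t_1$ and $s^{\ast}$ (the two coefficients are nonnegative and sum to $1$ because $\widetilde{p}_1 > \widetilde{p}_2$ and $\widetilde{p}_1 \leq 1$). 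The case $\widetilde{p}_1 = 1$ would give $t_2 = s^{\ast}$, which is impossible since $t_2 \in \widetilde{\mathcal{S}}$ but $s^{\ast} \notin \widetilde{\mathcal{S}}$; hence the coefficient of $t_1$ is strictly positive. Similarly, $t_1 = t_2$ would force $t_1 = s^{\ast}$, again impossible. Thus $t_1 \neq t_2$ and $t_2$ lies on the segment $\overline{t_1 s^{\ast}}$, giving $(t_1,t_2) \in \mathcal{C}_{\mathrm{weak}}$.

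For the converse, given $(t_1,t_2) \in \mathcal{C}_{\mathrm{weak}}$ I invert the above formulas. In the equiprobable case with $t_2 - t_1 = c(s_1-s_2)$, $c > 0$, set $\widetilde{p}_1 = \widetilde{p}_2 = c/(1+c) \in (0,1)$ and $p = p_1(1+c)/c$; a direct substitution shows the reference-state equation holds, $p \geq p_1 = \max_i p_i$, and the reference state $s = \widetilde{p}_1 s_1 + (1-\widetilde{p}_1) t_1$ lies in the convex set $\widetilde{\mathcal{S}}$. In the case $p_1 > p_2$ with $t_2 = \alpha t_1 + (1-\alpha) s^{\ast}$, $\alpha \in (0,1)$ (where $\alpha = 0$ is excluded by $s^{\ast} \notin \widetilde{\mathcal{S}}$ and $\alpha = 1$ by $t_1 \neq t_2$), setting $r = p_1/p_2 > 1$ and solving $\alpha = (1-\widetilde{p}_1)/(1-\widetilde{p}_2)$ together with $\widetilde{p}_1/\widetilde{p}_2 = r$ gives the unique choice $\widetilde{p}_2 = (1-\alpha)/(r-\alpha)$, $\widetilde{p}_1 = r\widetilde{p}_2$; elementary inequalities using $r > 1$ and $\alpha \in (0,1)$ show $\widetilde{p}_i \in (0,1]$ and the Helstrom ratio $p = p_i/\widetilde{p}_i$ satisfies $p \geq p_1$.

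I do not expect any serious obstacle here: once the reference-state equation is rewritten using the identity $\widetilde{p}_1 s_1 - \widetilde{p}_2 s_2 = (\widetilde{p}_1-\widetilde{p}_2) s^{\ast}$, the geometry of $\mathcal{C}_{\mathrm{weak}}$ is built into the algebra and the whole lemma becomes a matter of bookkeeping. The only mild subtlety is checking that the boundary cases $\widetilde{p}_i = 1$ or $t_1 = t_2$ are excluded by the hypotheses $s_1 \neq s_2$ (in the equiprobable case) and $s^{\ast} \notin \widetilde{\mathcal{S}}$ (in the unequal case), which is precisely where the standing assumption \eqref{eq:assumption_generic} enters.
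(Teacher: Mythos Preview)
Your proposal is correct and follows essentially the same approach as the paper's proof: both split into the two cases of \eqref{eq:assumption_generic}, rewrite the reference-state equation to display $t_2$ as a convex combination of $t_1$ and $s^{\ast}$ (or as $t_1$ plus a positive multiple of $s_1-s_2$), and invert this one-parameter relation for the converse. Your explicit use of the identity $\widetilde{p}_1 s_1 - \widetilde{p}_2 s_2 = (\widetilde{p}_1 - \widetilde{p}_2)\,s^{\ast}$ and your slightly more careful handling of the boundary possibilities $\widetilde{p}_1 = 1$ and $t_1 = t_2$ match (and mildly sharpen) what the paper does implicitly in \eqref{eq:conjugate_state_weakHF}.
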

\begin{proof}
First, we consider the case $p_1 = p_2$.
Then any weak Helstrom family satisfies $\widetilde{p}_1 = \widetilde{p}_2$, therefore \eqref{eq:reference_state} implies that $\widetilde{p}_1 < 1$ (otherwise, we have $s_1 = s = s_2$, contradicting the fact $s_1 \neq s_2$) and $t_2 - t_1 = \widetilde{p}_1 ( s_1 - s_2 ) / (1 - \widetilde{p}_1)$.
Thus $(t_1,t_2) \in \mathcal{C}_{\mathrm{weak}}$.
Conversely, if $(t_1,t_2) \in \mathcal{C}_{\mathrm{weak}}$, then $t_2 - t_1 = c(s_1 - s_2)$ for some $c > 0$, while this $c$ can be written as $c = \widetilde{p} / (1 - \widetilde{p})$ with $0 < \widetilde{p} < 1$.
Now it follows that $(\widetilde{p},s_i; 1 - \widetilde{p},t_i)$, $i = 1,2$, is a weak Helstrom family.
Thus the lemma holds in this case.

Secondly, we consider the case $p_1 > p_2$ and $s^{\ast} \not\in \widetilde{\mathcal{S}}$.
Then by \eqref{eq:reference_state}, any weak Helstrom family satisfies $\widetilde{p}_2 = p_2 / p = \widetilde{p}_1 p_2 / p_1 < \widetilde{p}_1 \leq 1$, therefore
\begin{equation}
\label{eq:conjugate_state_weakHF}
t_2 = \frac{ \widetilde{p}_1 s_1 - \widetilde{p}_2 s_2 + ( 1 - \widetilde{p}_1 ) t_1 }{ 1 - \widetilde{p}_2 }
= \lambda t_1 + (1 - \lambda) s^{\ast} \enspace, \mbox{ where } \lambda = \frac{ 1 - \widetilde{p}_1 }{ 1 - \widetilde{p}_2 } \enspace.
\end{equation}
Now we have $0 \leq \lambda < 1$ since $\widetilde{p}_2 < \widetilde{p}_1 \leq 1$, therefore $t_2 \in \overline{ t_1 s^{\ast} }$.
Moreover, if $t_1 = t_2$, then \eqref{eq:conjugate_state_weakHF} implies that $t_1 = s^{\ast}$, contradicting $t_1 \in \widetilde{\mathcal{S}}$ and $s^{\ast} \not\in \widetilde{\mathcal{S}}$.
Thus $(t_1,t_2) \in \mathcal{C}_{\mathrm{weak}}$.
Conversely, if $(t_1,t_2) \in \mathcal{C}_{\mathrm{weak}}$, then we have $t_2 = \lambda t_1 + (1 - \lambda) s^{\ast}$ for some $0 \leq \lambda < 1$, and now $(\widetilde{p}_i,s_i; 1 - \widetilde{p}_i,t_i)$, $i = 1,2$, is a weak Helstrom family for $\widetilde{p}_1 = (p_1 - p_1 \lambda) / (p_1 - p_2 \lambda)$ and $\widetilde{p}_2 = \widetilde{p}_1 p_2 / p_1$.
Hence Lemma \ref{lem:conjugate_state_weakHF} holds.
\end{proof}
By the lemma and Corollary \ref{cor:sufficient_HF_two}, for finding a Helstrom family, it suffices to search a pair $(t_1,t_2) \in \mathcal{C}_{\mathrm{weak}}$ such that $t_1$ and $t_2$ are distinguishable by a virtual effect $\widetilde{e} \in \widetilde{\mathcal{E}}$ (see Definition \ref{defn:distinguishable} for terminology).
The outline to prove the existence of such a pair $(t_1,t_2)$ is the following:
\begin{enumerate}
\item Define a function $\ell:\mathcal{C}_{\mathrm{weak}}' \to \mathbb{R}$, where
\begin{displaymath}
\mathcal{C}_{\mathrm{weak}}' = \mathcal{C}_{\mathrm{weak}} \cup \{(t,t) \mid t \in \widetilde{\mathcal{S}}\} \subset \widetilde{\mathcal{S}} \times \widetilde{\mathcal{S}} \enspace,
\end{displaymath}
such that $\ell \geq 0$ and $\ell(t_1,t_2) = 0$ if and only if $t_1 = t_2$; hence $\ell > 0$ on $\mathcal{C}_{\mathrm{weak}}$.
\item Prove that $\mathcal{C}_{\mathrm{weak}}'$ is closed in $\widetilde{\mathcal{S}} \times \widetilde{\mathcal{S}}$; hence $\mathcal{C}_{\mathrm{weak}}'$ is compact since $\widetilde{\mathcal{S}} \times \widetilde{\mathcal{S}}$ is.
\item Prove that $\ell$ is continuous; hence $\ell$ takes the maximum value at some pair $(t_1,t_2)$ in $\mathcal{C}_{\mathrm{weak}}$ (see the first step).
\item Prove that $t_1$ and $t_2$ are distinguishable.
\end{enumerate}
From now, we proceed the program.
In what follows, for a t.v.s.\ $W$, let $\mathcal{L}(W)$, $\mathcal{L}_c(W)$, $\mathcal{A}(W)$, and $\mathcal{A}_c(W)$ denote, respectively, the sets of linear functionals on $W$, of continuous linear functionals on $W$, of affine functionals on $W$, and of continuous affine functionals on $W$.

For the first step of the program, we define the function $\ell:\mathcal{C}_{\mathrm{weak}}' \to \mathbb{R}$ as follows: In the case $p_1 = p_2$, define $\ell(t_1,t_2)$ by 
\begin{displaymath}
t_2 - t_1 = \ell(t_1,t_2) (s_1 - s_2) \mbox{ for } (t_1,t_2) \in \mathcal{C}_{\mathrm{weak}}' \enspace.
\end{displaymath}
On the other hand, in the case $p_1 > p_2$ and $s^{\ast} \not\in \widetilde{\mathcal{S}}$, define $\ell(t_1,t_2)$ by
\begin{displaymath}
t_2 = \ell(t_1,t_2) s^{\ast} + (1 - \ell(t_1,t_2)) t_1 \mbox{ for } (t_1,t_2) \in \mathcal{C}_{\mathrm{weak}}'
\end{displaymath}
(thus $0 \leq \ell < 1$; note that $\ell \neq 1$ since $t_2 \neq s^{\ast}$).
This $\ell$ has the properties specified in the first step.
Note that $\ell(t_1,t_2)$ becomes larger if and only if $t_1$ and $t_2$ become \lq\lq far'' from each other in the space $\widetilde{\mathcal{S}}$ (in an intuitive sense; this becomes a strict sense at least in finite-dimensional cases, since in such a case $\widetilde{\mathcal{S}}$ admits the Euclidean metric); hence our program to make the value $\ell(t_1,t_2)$ as large as possible also fits the strategy mentioned in Example \ref{exmp:WHF} for decreasing the Helstrom ratio.
Namely, in the case $p_1 = p_2$, the definition of $\ell$ intuitively implies that $\ell(t_1,t_2)$ is the \lq\lq distance'' between $t_1$ and $t_2$ normalized as the \lq\lq distance'' between $s_1$ and $s_2$ being $1$.
On the other hand, in the case $p_1 > p_2$ and $s^{\ast} \not\in \widetilde{\mathcal{S}}$, the definition of $\ell$ implies that
\begin{displaymath}
t_1 - t_2 = \frac{ \ell(t_1,t_2) }{ 1 - \ell(t_1,t_2) } (t_2 - s^{\ast}) \enspace,
\end{displaymath}
therefore $\ell(t_1,t_2) / (1 - \ell(t_1,t_2))$, that is increasing for $\ell(t_1,t_2)$, is the \lq\lq distance'' between $t_1$ and $t_2$ normalized as the \lq\lq distance'' between $t_2$ and $s^{\ast}$ being $1$.

For the second step, we have the following result:
\begin{lemma}
\label{lem:pairs_closed_set}
Let $(t_1,t_2) \in \widetilde{\mathcal{S}} \times \widetilde{\mathcal{S}}$.
\begin{enumerate}
\item If $p_1 = p_2$, then we have $(t_1,t_2) \in \mathcal{C}_{\mathrm{weak}}'$ if and only if $\widetilde{e}(t_1) \geq \widetilde{e}(t_2)$ for any $e \in \mathcal{E}$ such that $e(s_1) \leq e(s_2)$.
\item If $p_1 > p_2$, then we have $(t_1,t_2) \in \mathcal{C}_{\mathrm{weak}}'$ if and only if $f(t_1) \leq f(t_2) \leq f(s^{\ast})$ or $f(t_1) \geq f(t_2) \geq f(s^{\ast})$ for any $f \in \mathcal{A}_c(\widetilde{V})$ such that $f|_{\widetilde{\mathcal{S}}} \in \widetilde{\mathcal{E}}$.
\end{enumerate}
\end{lemma}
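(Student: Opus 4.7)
The plan is to prove each biconditional by contrapositive Hahn--Banach separation for the nontrivial direction. In both parts the forward implication is a direct computation from the parametric description of $\mathcal{C}_{\mathrm{weak}}'$: in case 1, $(t_1,t_2) \in \mathcal{C}_{\mathrm{weak}}'$ means $t_2 - t_1 = c(s_1 - s_2)$ for some $c \geq 0$, hence $\widetilde{e}(t_2) - \widetilde{e}(t_1) = c(e(s_1) - e(s_2)) \leq 0$ whenever $e(s_1) \leq e(s_2)$; in case 2, $t_2 = \lambda t_1 + (1-\lambda)s^{\ast}$ for some $\lambda \in [0,1]$, so $f(t_2) = \lambda f(t_1) + (1-\lambda)f(s^{\ast})$ lies in the closed interval between $f(t_1)$ and $f(s^{\ast})$.

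For the converse in case 1, I work with the ray $L_+ = \{c(s_1 - s_2) : c \geq 0\} \subset \widetilde{V}$, which sits inside the $1$-dimensional subspace $\mathbb{R}(s_1 - s_2)$. By Proposition \ref{prop:fin_dim_is_closed} this subspace is closed in $\widetilde{V}$, so $L_+$ is a closed convex subset of $\widetilde{V}$. Assume $(t_1,t_2) \notin \mathcal{C}_{\mathrm{weak}}'$, i.e., $t_2 - t_1 \notin L_+$. Strict Hahn--Banach separation in the l.c.\ Hausdorff t.v.s.\ $\widetilde{V}$, applied to the compact singleton $\{t_2 - t_1\}$ and the closed convex $L_+$, furnishes $f \in \mathcal{L}_c(\widetilde{V})$ and $\alpha \in \mathbb{R}$ with $f(L_+) \leq \alpha < f(t_2 - t_1)$. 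Because $L_+$ is a ray emanating from $0$, boundedness of $f$ on it forces $f(s_1 - s_2) \leq 0$ and thus $f(L_+) \leq 0$, while $f(t_2) > f(t_1)$. Since $\widetilde{\mathcal{S}}$ is compact we have $f(\widetilde{\mathcal{S}}) \subset [a,b]$, with $a < b$ because $t_1, t_2 \in \widetilde{\mathcal{S}}$ and $f(t_1) < f(t_2)$. The affine rescaling $(f - a)/(b - a)$ restricts to a virtual effect in $\widetilde{\mathcal{E}}$, and by Lemma \ref{lem:effect_correspondence} corresponds to an $e \in \mathcal{E}$ with $e(s_1) \leq e(s_2)$ yet $\widetilde{e}(t_1) < \widetilde{e}(t_2)$, contradicting the hypothesis.

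For the converse in case 2 the argument is parallel, with the ray $L_+$ replaced by the compact convex segment $\overline{t_1 s^{\ast}}$, which is closed in $\widetilde{V}$ because it lies in the finite-dimensional affine subspace $\mathrm{Aff}(\{t_1, s^{\ast}\})$ (Corollary \ref{cor:fin_dim_is_Euclidean}). If $t_2 \notin \overline{t_1 s^{\ast}}$, strict separation yields $f \in \mathcal{L}_c(\widetilde{V})$ such that $f(t_2)$ is strictly outside $[\min\{f(t_1),f(s^{\ast})\}, \max\{f(t_1),f(s^{\ast})\}]$ (replacing $f$ by $-f$ if necessary). Rescaling to a virtual effect via $(f - a)/(b - a)$ preserves the strict inequalities (the normalization is a positive affine transformation, and $b > a$ because $f(t_1) \neq f(t_2)$), so the resulting $\widetilde{e}$ violates both disjuncts of the stated condition, giving the desired contradiction.

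The main technical obstacle is securing closedness of $L_+$ and of $\overline{t_1 s^{\ast}}$, which is precisely what makes the strict form of Hahn--Banach separation from a compact singleton available; this is exactly where the \emph{closedness of finite-dimensional affine subspaces} in $\widetilde{V}$ (Proposition \ref{prop:fin_dim_is_closed} and Corollary \ref{cor:fin_dim_is_Euclidean}) is indispensable. The remaining steps---bounding linear functionals on the compact $\widetilde{\mathcal{S}}$ so that they can be renormalized into virtual effects, and lifting virtual effects to real effects through Lemma \ref{lem:effect_correspondence}---are routine.
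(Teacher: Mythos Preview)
Your proof is correct and somewhat more streamlined than the paper's. Both arguments handle the forward direction by direct computation from the parametric description of $\mathcal{C}_{\mathrm{weak}}'$. For the converse, the paper proceeds by explicit case analysis: in Part~1 it distinguishes whether $t_2-t_1$ is parallel to $s_1-s_2$ (with the wrong sign) or not, and in Part~2 whether $t_1\in\overline{t_2 s^\ast}$ or $t_2-t_1$ fails to lie on the line through $t_1$ and $s^\ast$; in each subcase it builds an affine functional on the finite-dimensional affine hull of the relevant points and extends it to $\widetilde{V}$ via Corollary~\ref{cor:extension_from_fin_dim}. You instead invoke the geometric (second) form of Hahn--Banach once per part, separating a compact singleton from the closed convex ray $L_+$ (Part~1) or from the compact segment $\overline{t_1 s^\ast}$ (Part~2), thereby handling all subcases uniformly. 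The key technical prerequisite you identify---closedness of finite-dimensional affine subspaces, hence of $L_+$ and $\overline{t_1 s^\ast}$---is exactly what legitimises the strict separation; this replaces the paper's explicit extension step. Your route is shorter and more conceptual, while the paper's is more self-contained in that it uses only the tools explicitly assembled in Sect.~\ref{subsec:HF_auxiliary_result} (the separation theorem for l.c.\ spaces is standard but is not among them).
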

\begin{proof}
Since the case $t_1 = t_2$ is trivial, we assume from now that $t_1 \neq t_2$.

For the first part, if $(t_1,t_2) \in \mathcal{C}_{\mathrm{weak}}$, then Lemma \ref{lem:conjugate_state_weakHF} implies that
\begin{displaymath}
s = \widetilde{p} s_1 + (1 - \widetilde{p}) t_1 = \widetilde{p} s_2 + (1 - \widetilde{p}) t_2 \mbox{ for some } 0 < \widetilde{p} < 1 \mbox{ and } s \in \widetilde{\mathcal{S}}
\end{displaymath}
(recall that $s_1 \neq s_2$).
Now for any $e \in \mathcal{E}$, we have
\begin{displaymath}
\widetilde{e}(s) = \widetilde{p} e(s_1) + (1 - \widetilde{p}) \widetilde{e}(t_1) = \widetilde{p} e(s_2) + (1 - \widetilde{p}) \widetilde{e}(t_2) \enspace,
\end{displaymath}
therefore $\widetilde{e}(t_1) \geq \widetilde{e}(t_2)$ whenever $e(s_1) \leq e(s_2)$.
On the other hand, if $(t_1,t_2) \not\in \mathcal{C}_{\mathrm{weak}}'$, then we have either $(t_2,t_1) \in \mathcal{C}_{\mathrm{weak}}$, or $\overrightarrow{ t_1 t_2 } = t_2 - t_1$ is not parallel to the line $\mathrm{Aff}(\{s_1,s_2\})$ containing $s_1$ and $s_2$.
In the former case, we have $e(s_1) < e(s_2)$ for some $e \in \mathcal{E}$ since $\mathcal{S}$ is separated (note that $1 - e \in \mathcal{E}$ and $1 - e(s_1) < 1 - e(s_2)$ whenever $e \in \mathcal{E}$ and $e(s_1) > e(e_2)$), therefore we have $\widetilde{e}(t_2) > \widetilde{e}(t_1)$ in the same way as above.
In the latter case, it is easy to show that $f(s_1) = f(s_2)$ and $f(t_1) < f(t_2)$ for an affine functional $f$ on the affine hull of $\{s_1,s_2,t_1,t_2\}$, and Corollary \ref{cor:extension_from_fin_dim} implies that this $f$ extends to an $\overline{f} \in \mathcal{A}_c(\widetilde{V})$.
Now $\overline{f}(\widetilde{\mathcal{S}})$ is bounded in $\mathbb{R}$ since $\widetilde{\mathcal{S}}$ is compact.
Thus by taking $\alpha > 0$ and $\beta \in \mathbb{R}$ appropriately, the continuous affine functional $g = \alpha \overline{f} + \beta$ on $\widetilde{V}$ satisfies that $g(s_1) = g(s_2)$, $g(t_1) < g(t_2)$ and $g(\widetilde{\mathcal{S}}) \subset \left[0,1\right]$, therefore $\widetilde{e} = g|_{\widetilde{\mathcal{S}}}$ is a virtual effect satisfying $e(s_1) = e(s_2)$ and $\widetilde{e}(t_1) < \widetilde{e}(t_2)$.
Thus the first part of Lemma \ref{lem:pairs_closed_set} holds.

For the second part, note that $s^{\ast} \not\in \widetilde{\mathcal{S}}$ by the assumption \eqref{eq:assumption_generic}.
The \lq\lq only if'' part is now trivial by the definition of $\mathcal{C}_{\mathrm{weak}}$.
To prove the \lq\lq if'' part, assume that $(t_1,t_2) \not\in \mathcal{C}_{\mathrm{weak}}'$.
Then $t_1 \neq t_2$, and we have either $t_1 \in \overline{ t_2 s^{\ast} }$, or $\overrightarrow{ t_1 t_2 }$ is not parallel to the line $\mathrm{Aff}(\{t_1,s^{\ast}\})$ (note that $s^{\ast} \not\in \overline{t_1 t_2}$ since $s^{\ast} \not\in \widetilde{\mathcal{S}}$).
In the former case, since $\widetilde{V}$ is Hausdorff, there exists an $f \in \mathcal{L}_c(\widetilde{V})$ such that $f(t_2) < f(t_1)$.
Now since $\widetilde{\mathcal{S}}$ is compact, an appropriate transformation $g = \alpha f + \beta$ with $\alpha,\beta \in \mathbb{R}$ satisfies that $g(\widetilde{\mathcal{S}}) \subset \left[0,1\right]$ (hence $g|_{\widetilde{\mathcal{S}}} \in \widetilde{\mathcal{E}}$) and $g(t_2) < g(t_1)$, therefore $g(t_1) < g(s^{\ast})$ since $t \in \overline{ t_2 s^{\ast} }$ and $t_1 \neq s^{\ast}$.
On the other hand, in the latter case, we have $f(t_1) = f(s^{\ast}) < f(t_2)$ for an affine functional $f$ on the affine hull of $\{t_1,t_2,s^{\ast}\}$, and Corollary \ref{cor:extension_from_fin_dim} implies that this $f$ extends to an $\overline{f} \in \mathcal{A}_c(\widetilde{V})$.
Now $\overline{f}(\widetilde{\mathcal{S}})$ is bounded in $\mathbb{R}$ since $\widetilde{\mathcal{S}}$ is compact.
Thus by taking an appropriate affine transformation of $\overline{f}$ in the same way as above, it follows that $g(t_1) = g(s^{\ast}) < g(t_2)$ for a $g \in \mathcal{A}_c(\widetilde{V})$ such that $g|_{\widetilde{\mathcal{S}}} \in \widetilde{\mathcal{E}}$.
Hence the second part of Lemma \ref{lem:pairs_closed_set} holds, concluding the proof of Lemma \ref{lem:pairs_closed_set}.
\end{proof}
By this lemma, $\mathcal{C}_{\mathrm{weak}}'$ is closed in $\widetilde{\mathcal{S}} \times \widetilde{\mathcal{S}}$ as desired, since the virtual effect $\widetilde{e} \in \widetilde{\mathcal{E}}$ corresponding to each $e \in \mathcal{E}$ is continuous on $\widetilde{\mathcal{S}}$.

For the third step, we have the following result:
\begin{lemma}
\label{lem:ell_continuous}
The function $\ell$ on $\mathcal{C}_{\mathrm{weak}}'$ is continuous.
\end{lemma}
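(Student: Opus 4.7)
The plan is to exploit the fact that $\mathcal{T}(\widetilde{V})$ is the weak topology generated by $\mathcal{L}_c(\widetilde{V})$ (Theorem \ref{thm:state_space_embedded}), so to prove continuity of $\ell$ it suffices to express $\ell(t_1,t_2)$ locally as a quotient of values of continuous (affine) functionals on $\widetilde{V}$. In each of the two cases defining $\ell$, the idea will be to pick a single continuous linear functional $f$ on $\widetilde{V}$ that does not vanish on the relevant direction vector, apply $f$ to the defining affine relation, and solve for $\ell$ as a continuous rational expression in $f(t_1)$ and $f(t_2)$.

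In the case $p_1 = p_2$, the defining equation $t_2 - t_1 = \ell(t_1,t_2)(s_1 - s_2)$ has the fixed direction $s_1 - s_2$. Since $s_1 \neq s_2$, Lemma \ref{lem:separated_generalized} (applied to $\mathcal{S}$) supplies an effect $e \in \mathcal{E}$ with $e(s_1) \neq e(s_2)$; its continuous affine extension $\widetilde{e}$ (Lemma \ref{lem:effect_correspondence}) applied to the defining relation yields
\[
\ell(t_1,t_2) = \frac{\widetilde{e}(t_2) - \widetilde{e}(t_1)}{e(s_1) - e(s_2)},
\]
which is manifestly continuous in $(t_1,t_2) \in \widetilde{\mathcal{S}} \times \widetilde{\mathcal{S}}$.

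In the case $p_1 > p_2$ with $s^{\ast} \notin \widetilde{\mathcal{S}}$, the direction vector $s^{\ast} - t_1$ varies with $t_1$, so the main obstacle is to find one continuous linear functional that witnesses non-degeneracy uniformly on $\widetilde{\mathcal{S}}$. I will use the Hahn-Banach separation theorem for the compact convex set $\widetilde{\mathcal{S}}$ and the point $s^{\ast} \notin \widetilde{\mathcal{S}}$ in the l.c.\ Hausdorff t.v.s.\ $\widetilde{V}$ (this is a standard consequence of Theorem \ref{thm:Hahn-Banach}): there exist $f \in \mathcal{L}_c(\widetilde{V})$ and real numbers $\alpha < \beta$ with $\sup_{t \in \widetilde{\mathcal{S}}} f(t) \leq \alpha$ and $f(s^{\ast}) \geq \beta$. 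Applying $f$ to $t_2 = \ell(t_1,t_2)\,s^{\ast} + (1 - \ell(t_1,t_2))\,t_1$ gives
\[
\ell(t_1,t_2) = \frac{f(t_2) - f(t_1)}{f(s^{\ast}) - f(t_1)},
\]
and the denominator satisfies $f(s^{\ast}) - f(t_1) \geq \beta - \alpha > 0$ uniformly for $t_1 \in \widetilde{\mathcal{S}}$, so the right-hand side is continuous on $\widetilde{\mathcal{S}} \times \widetilde{\mathcal{S}}$.

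The main obstacle is really just the second case, and concretely it is the need for a \emph{single} separating continuous linear functional whose value on $s^{\ast}$ strictly exceeds its supremum on $\widetilde{\mathcal{S}}$; compactness of $\widetilde{\mathcal{S}}$ and local convexity of $\widetilde{V}$ are exactly what make Hahn-Banach separation applicable. Once this functional is in hand, both cases collapse to the same simple rational-expression argument, and the lemma follows immediately.
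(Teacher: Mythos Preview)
Your treatment of the case $p_1 = p_2$ is essentially identical to the paper's.

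In the case $p_1 > p_2$ with $s^{\ast} \notin \widetilde{\mathcal{S}}$, your argument is correct but genuinely different from the paper's. The paper does \emph{not} use a single separating functional; instead it works with the whole family $\mathcal{F}$ of continuous affine functionals $f$ on $\widetilde{V}$ whose restriction to $\widetilde{\mathcal{S}}$ lies in $\widetilde{\mathcal{E}}$, defines for each $f$ the open set $A_f = \{(t_1,t_2) : f(t_1) \neq f(s^{\ast})\}$ and the local expression $g_f(t_1,t_2) = (f(t_2)-f(t_1))/(f(s^{\ast})-f(t_1))$, and then proves $\ell^{-1}(U) = \bigcup_{f \in \mathcal{F}} (\mathcal{C}_{\mathrm{weak}}' \cap g_f^{-1}(U))$ for every open $U \subset \mathbb{R}$. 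The nontrivial part of this covering argument is showing that every $t_1 \in \widetilde{\mathcal{S}}$ lies in some $A_f$, which the paper handles by a case split (according to whether $t_1$ lies on the line $\mathrm{Aff}(\{s_1,s_2\})$) using Corollary~\ref{cor:extension_from_fin_dim}. Your route via strict Hahn--Banach separation of the compact convex $\widetilde{\mathcal{S}}$ from the exterior point $s^{\ast}$ is shorter and yields a \emph{global} continuous formula for $\ell$ with a uniformly positive denominator, eliminating the covering argument entirely. The trade-off is that you import the strict separation theorem (admittedly standard in any l.c.\ Hausdorff t.v.s.), whereas the paper stays within the toolkit it has already set up in Sect.~\ref{subsec:HF_auxiliary_result}.
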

\begin{proof}
First, we consider the case $p_1 = p_2$.
Fix $e \in \mathcal{E}$ such that $e(s_1) > e(s_2)$ (this is possible since $\mathcal{S}$ is separated), and put $c = (e(s_1) - e(s_2))^{-1} > 0$.
For any $(t_1,t_2) \in \mathcal{C}_{\mathrm{weak}}'$, Lemma \ref{lem:conjugate_state_weakHF} implies that there exists a $\widetilde{p} \in \left[0,1\right)$ such that $\widetilde{p} s_1 + (1 - \widetilde{p}) t_1 = \widetilde{p} s_2 + (1 - \widetilde{p}) t_2 \in \widetilde{\mathcal{S}}$.
Now we have $\ell(t_1,t_2) = \widetilde{p} / (1 - \widetilde{p})$ and $\widetilde{p} e(s_1) + (1 - \widetilde{p}) \widetilde{e}(t_1) = \widetilde{p} e(s_2) + (1 - \widetilde{p}) \widetilde{e}(t_2)$, therefore 
\begin{displaymath}
\ell(t_1,t_2) = \frac{ \widetilde{e}(t_2) - \widetilde{e}(t_1) }{ e(s_1) - e(s_2) } = c( \widetilde{e}(t_2) - \widetilde{e}(t_1) ) \enspace.
\end{displaymath}
This implies that $\ell$ is continuous, since $\widetilde{e} \in \widetilde{\mathcal{E}}$ is continuous.

Secondly, we consider the case that $p_1 > p_2$ and $s^{\ast} \not\in \widetilde{\mathcal{S}}$.
Let $\mathcal{F}$ be the set of all $f \in \mathcal{A}_c(\widetilde{V})$ such that $f|_{\widetilde{\mathcal{S}}} \in \widetilde{\mathcal{E}}$.
Now for each $f \in \mathcal{F}$, put
\begin{displaymath}
A_f = \{(t_1,t_2) \in \widetilde{\mathcal{S}} \times \widetilde{\mathcal{S}} \mid f(t_1) \neq f(s^{\ast})\} \subset \widetilde{\mathcal{S}} \times \widetilde{\mathcal{S}}
\end{displaymath}
and define a function $g_f:A_f \to \mathbb{R}$ by
\begin{displaymath}
g_f(t_1,t_2) = \frac{ f(t_2) - f(t_1) }{ f(s^{\ast}) - f(t_1) } \mbox{ for } (t_1,t_2) \in A_f \enspace.
\end{displaymath}
Since $f$ is continuous, $A_f$ is open in $\widetilde{\mathcal{S}} \times \widetilde{\mathcal{S}}$ and $g_f$ is continuous.
Moreover, we have $\ell(t_1,t_2) = g_f(t_1,t_2)$ for any $(t_1,t_2) \in \mathcal{C}_{\mathrm{weak}}' \cap A_f$ by the definition of $\ell$.
Now we show that
\begin{displaymath}
\ell^{-1}(U) = \bigcup_{f \in \mathcal{F}} (\mathcal{C}_{\mathrm{weak}}' \cap g_f{}^{-1}(U)) \mbox{ for any open subset } U \subset \mathbb{R} \enspace.
\end{displaymath}
Once this is proven, $\ell^{-1}(U)$ is open in $\mathcal{C}_{\mathrm{weak}}'$ since each $g_f{}^{-1}(U) \subset A_f$ is an open subset of $\widetilde{\mathcal{S}} \times \widetilde{\mathcal{S}}$ (recall that $A_f$ is open in $\widetilde{\mathcal{S}} \times \widetilde{\mathcal{S}}$), therefore the continuity of $\ell$ follows.
Since $\ell$ and $g_f$ agree on $\mathcal{C}_{\mathrm{weak}}' \cap A_f$ as above, the inclusion $\supset$ holds immediately.
For the other inclusion, let $(t_1,t_2) \in \mathcal{C}_{\mathrm{weak}}'$ such that $\ell(t_1,t_2) \in U$.
Let $W$ denote the line $\mathrm{Aff}(\{s_1,s_2\})$.
Now if $t_1 \not\in W$, then an argument similar to Lemma \ref{lem:pairs_closed_set} (based on Corollary \ref{cor:extension_from_fin_dim}) implies existence of an $f \in \mathcal{F}$ such that $f$ is constant on $W$ and $f(t_1) \neq f(s_1)$, hence $f(s^{\ast}) = f(s_1) \neq f(t_1)$ (note that $s^{\ast} \in W$).
On the other hand, suppose that $t_1 \in W$.
Since $\widetilde{V}$ is Hausdorff, there exists an $f \in \mathcal{A}_c(\widetilde{V})$ such that $f(s_1) \neq f(s_2)$.
Now by a similar argument as above, this $f$ can be chosen from $\mathcal{F}$.
Since the four points $s_1$, $s_2$, $s^{\ast}$, and $t_1$ are all collinear and $t_1 \neq s^{\ast}$, the fact $f(s_1) \neq f(s_2)$ implies that $f(s^{\ast}) \neq f(t_1)$.
Thus $(t_1,t_2) \in A_f$ in any case, while $\ell$ and $g_f$ agree on $\mathcal{C}_{\mathrm{weak}}' \cap A_f$, therefore $g_f(t_1,t_2) = \ell(t_1,t_2) \in U$ by the above argument.
Hence the inclusion $\subset$ follows, therefore Lemma \ref{lem:ell_continuous} holds.
\end{proof}

For the final part, let $\mathcal{C}$ be the subset of $\mathcal{C}_{\mathrm{weak}}'$ that consists of all pairs in $\mathcal{C}_{\mathrm{weak}}'$ at which $\ell$ takes the maximum value:
\begin{displaymath}
\mathcal{C} = \{(t_1,t_2) \in \mathcal{C}_{\mathrm{weak}}' \mid \ell(t_1,t_2) = \max_{(t'_1,t'_2) \in \mathcal{C}_{\mathrm{weak}}'} \ell(t'_1,t'_2)\} \enspace.
\end{displaymath}
Note that $\emptyset \neq \mathcal{C} \subset \mathcal{C}_{\mathrm{weak}}$ by the above argument.
From now, we show that for any $(t_1,t_2) \in \mathcal{C}_{\mathrm{weak}}$, $t_1$ and $t_2$ are distinguishable if and only if $(t_1,t_2) \in \mathcal{C}$; in particular, a Helstrom family exists.
First, one direction of this assertion is proven as follows:
\begin{proposition}
\label{prop:HEisExtremal}
If $(t_1,t_2) \in \mathcal{C}_{\mathrm{weak}}$, and $t_1$ and $t_2$ are distinguishable, then $(t_1,t_2) \in \mathcal{C}$.
Hence the pair of conjugate states $t_1,t_2$ in any Helstrom family belongs to $\mathcal{C}$.
\end{proposition}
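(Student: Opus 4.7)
The plan is to translate the condition $(t_1,t_2) \in \mathcal{C}$ into an optimality statement about the Helstrom ratio, and then invoke Proposition \ref{prop:weakHF_bound} together with Corollary \ref{cor:sufficient_HF_two}. The key observation is that, via the bijection between $\mathcal{C}_{\mathrm{weak}}$ and the set of weak Helstrom families supplied by Lemma \ref{lem:conjugate_state_weakHF}, the quantity $\ell(t_1,t_2)$ becomes an explicit, strictly monotone function of the weight $\widetilde{p}_1 = p_1/p$, where $p$ is the corresponding Helstrom ratio. Thus making $\ell$ as large as possible is the same as making $p$ as small as possible, and minimality of $p$ among weak Helstrom families is exactly what distinguishability of the conjugate states buys, via Corollary \ref{cor:sufficient_HF_two}.

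Concretely, I would first fix a distinguishable pair $(t_1,t_2) \in \mathcal{C}_{\mathrm{weak}}$. Corollary \ref{cor:sufficient_HF_two} then promotes the associated weak Helstrom family to an honest Helstrom family, so its ratio satisfies $p = P_{\mathrm{succ}}$. Next I would take any competitor $(t_1',t_2') \in \mathcal{C}_{\mathrm{weak}}'$. If $t_1' = t_2'$, then $\ell(t_1',t_2') = 0$, while $\ell(t_1,t_2) > 0$ since $t_1 \neq t_2$, so the inequality is immediate. Otherwise $(t_1',t_2') \in \mathcal{C}_{\mathrm{weak}}$, and Lemma \ref{lem:conjugate_state_weakHF} supplies a weak Helstrom family with some Helstrom ratio $p' = p_1/\widetilde{p}_1'$; Proposition \ref{prop:weakHF_bound} then gives $p' \geq P_{\mathrm{succ}} = p$, equivalently $\widetilde{p}_1' \leq \widetilde{p}_1$.

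It then remains to verify that $\ell$ is strictly increasing in $\widetilde{p}_1$. In the case $p_1 = p_2$, the definition of $\ell$ together with Lemma \ref{lem:conjugate_state_weakHF} gives $\ell(t_1,t_2) = \widetilde{p}_1/(1-\widetilde{p}_1)$, which is obviously increasing on $[0,1)$. In the case $p_1 > p_2$ with $s^{\ast} \not\in \widetilde{\mathcal{S}}$, the same lemma yields
\[
\ell(t_1,t_2) \;=\; \frac{\widetilde{p}_1(p_1-p_2)}{p_1 - \widetilde{p}_1 p_2},
\]
whose derivative with respect to $\widetilde{p}_1$ equals $p_1(p_1-p_2)/(p_1 - \widetilde{p}_1 p_2)^2 > 0$. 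Applying monotonicity to $\widetilde{p}_1' \leq \widetilde{p}_1$ yields $\ell(t_1',t_2') \leq \ell(t_1,t_2)$, proving $(t_1,t_2) \in \mathcal{C}$. The concluding assertion of the proposition then follows by combining this with Corollary \ref{cor:necessary_HF_two}, which guarantees distinguishability of the conjugate states in any Helstrom family under the standing assumption \eqref{eq:assumption_generic}.

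I do not anticipate a serious obstacle here: the whole argument is a short translation between $\ell$ and the Helstrom ratio, and the only mild care needed is to match the two case definitions of $\ell$ with the corresponding formulas for $\widetilde{p}_1,\widetilde{p}_2$ coming from Lemma \ref{lem:conjugate_state_weakHF} and to check the strict monotonicity in the non-equiprobable case. The conceptual content — that distinguishability forces the minimal Helstrom ratio and hence the maximal $\ell$ — is already essentially packaged into Proposition \ref{prop:weakHF_bound} and Corollary \ref{cor:sufficient_HF_two}.
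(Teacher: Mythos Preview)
Your argument is correct and takes a genuinely different route from the paper's. The paper argues by contradiction with a direct geometric construction: assuming there is a competitor $(t_1',t_2')$ with larger $\ell$, it restricts to the finite-dimensional affine span $W = \mathrm{Aff}(\{t_1,t_2,t_1',t_2'\})$, uses the distinguishing virtual effect to produce two parallel supporting hyperplanes $H_1,H_2$ of $\widetilde{\mathcal{S}}\cap W$ at $t_1$ and $t_2$, and then derives a contradiction via elementary Euclidean geometry (a parallelogram argument when $p_1=p_2$, similar triangles with apex $s^{\ast}$ when $p_1>p_2$). Your approach instead translates everything back through Lemma~\ref{lem:conjugate_state_weakHF} into the Helstrom ratio $p=p_1/\widetilde{p}_1$, observes that $\ell$ is a strictly increasing function of $\widetilde{p}_1$ in both cases, and then lets Corollary~\ref{cor:sufficient_HF_two} and Proposition~\ref{prop:weakHF_bound} do the work: distinguishability forces $p=P_{\mathrm{succ}}$, while any competitor has $p'\geq P_{\mathrm{succ}}$, hence $\widetilde{p}_1'\leq\widetilde{p}_1$ and $\ell(t_1',t_2')\leq\ell(t_1,t_2)$. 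Your route is shorter and more conceptual, making transparent that Proposition~\ref{prop:HEisExtremal} is really just the statement ``minimal Helstrom ratio $\Leftrightarrow$ maximal $\ell$'' combined with facts already established in Section~\ref{sec:Helstrom}; the paper's route, by contrast, is self-contained and does not appeal to the optimal success probability at all, which keeps the geometric picture (Figure~\ref{fig:Menelaus}) in the foreground. For the concluding sentence, both you and the paper invoke Lemma~\ref{lem:HFandGeneric} (to ensure genericity under \eqref{eq:assumption_generic}) and Corollary~\ref{cor:necessary_HF_two}.
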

\begin{proof}
The latter part is derived from the combination of the former part, Lemma \ref{lem:conjugate_state_weakHF}, Lemma \ref{lem:HFandGeneric}, and Corollary \ref{cor:necessary_HF_two}.
To prove the former part, assume contrary that $t_1$ and $t_2$ in $\widetilde{\mathcal{S}}$ are distinguishable by a virtual effect $\widetilde{e} \in \widetilde{\mathcal{E}}$ and $(t_1,t_2) \in \mathcal{C}_{\mathrm{weak}}$ but $\ell(t_1,t_2) < \ell(t'_1,t'_2)$ for some $(t'_1,t'_2) \in \mathcal{C}_{\mathrm{weak}}$.
Since $\mathrm{Aff}(\widetilde{\mathcal{S}}) = \widetilde{V}$, this $\widetilde{e}$ extends to an $f \in \mathcal{A}(\widetilde{V})$.
Let $W = \mathrm{Aff}(\{t_1,t_2,t'_1,t'_2\})$.
Then by Corollary \ref{cor:fin_dim_is_Euclidean}, $W$ is isomorphic to a Euclidean space $\mathbb{R}^n$ with $n = \dim W$ and $\mathcal{S}' = \widetilde{\mathcal{S}} \cap W$ is a compact convex subset of $W$.
Since $(t_1,t_2),(t'_1,t'_2) \in \mathcal{C}_{\mathrm{weak}}$, we have $n \leq 2$ by the definition of $\mathcal{C}_{\mathrm{weak}}$.
Now $H_1 = W \cap f^{-1}(1)$ and $H_2 = W \cap f^{-1}(0)$ are parallel supporting hyperplanes of $\mathcal{S}'$ in $W$ at $t_1$ and at $t_2$, respectively, and $\mathcal{S}'$ lies between $H_1$ and $H_2$.
Note that $t_1,t_2,t'_1,t'_2 \in \mathcal{S}'$.

Now in the case $p_1 = p_2$, $\overrightarrow{ t'_1 t'_2 }$ is parallel to $\overrightarrow{ t_1 t_2 }$ since $(t_1,t_2),(t'_1,t'_2) \in \mathcal{C}_{\mathrm{weak}}$.
Thus it is geometrically obvious that $|t'_1 t'_2| \leq |t_1 t_2|$ (where $|x y|$ denotes the distance between $x$ and $y$ in the Euclidean metric on $\mathbb{R}^n$), since two intersecting points of the line $\mathrm{Aff}(\{t'_1,t'_2\})$ with $H_1$ and with $H_2$, respectively, and $t_1$ and $t_2$ form a parallelogram (see Fig.\ \ref{fig:Menelaus}(a)).
This contradicts the assumption $\ell(t'_1,t'_2) > \ell(t_1,t_2)$.

On the other hand, we consider the case that $p_1 > p_2$ and $s^{\ast} \not\in \widetilde{\mathcal{S}}$.
Note that $s^{\ast} \in W$ since $(t_1,t_2) \in \mathcal{C}_{\mathrm{weak}}$.
Then the assumption $\ell(t_1,t_2) < \ell(t'_1,t'_2)$ implies that $|t'_2 t'_1| / |s^{\ast} t'_2| > |t_2 t_1| / |s^{\ast} t_2|$; in particular, neither $t'_1$ nor $t'_2$ lies on the line segment $\overline{ t_1 t_2 }$.
Let $v_1$ and $v_2$ be the intersecting points of the line $\mathrm{Aff}(\{t'_1,t'_2\})$ with $H_1$ and with $H_2$, respectively (see Fig.\ \ref{fig:Menelaus}(b)).
Then we have
\begin{displaymath}
\frac{ |v_2 v_1| }{ |s^{\ast} v_2| } \geq \frac{ |t'_2 t'_1| }{ |s^{\ast} t'_2| } > \frac{ |t_2 t_1| }{ |s^{\ast} t_2| } \enspace.
\end{displaymath}
However, since $H_1$ and $H_2$ are parallel, two triangles $\triangle s^{\ast}v_1 t_1$ and $\triangle s^{\ast}v_2 t_2$ are similar, therefore we have $|v_2 v_1| / |s^{\ast} v_2| = |t_2 t_1| / |s^{\ast} t_2|$, a contradiction.

Thus a contradiction occurs in both cases.
Hence Proposition \ref{prop:HEisExtremal} holds.
\end{proof}
\begin{figure}[htb]
\centering
\begin{picture}(180,100)(0,-100)
\put(10,-20){\line(1,0){65}}
\put(10,-90){\line(1,0){65}}
\put(54,-90){\line(1,5){14}}
\put(54,-90){\circle*{3}}\put(68,-20){\circle*{3}}
\put(36,-65){\line(1,5){4}}
\put(36,-65){\circle*{3}}\put(40,-45){\circle*{3}}
\put(35,-70){\line(-1,-5){2}}\put(32,-85){\line(-1,-5){2}}
\put(41,-40){\line(1,5){2}}\put(44,-25){\line(1,5){2}}
\put(31,-90){\circle*{3}}\put(45,-20){\circle*{3}}
\put(0,-25){$H_2$}\put(0,-95){$H_1$}
\put(50,-100){$t_1$}\put(65,-15){$t_2$}
\put(40,-68){$t'_1$}\put(44,-45){$t'_2$}
\put(68,-20){\line(-4,-1){40}}
\put(28,-30){\line(-3,-5){15}}
\put(13,-55){\line(1,-2){15}}
\put(54,-90){\line(-5,1){26}}
\put(68,-20){\line(2,-1){10}}
\put(54,-90){\line(1,1){15}}
\put(18,-55){$\mathcal{S}'$}
\put(15,-10){(a)}
\put(100,-23){\line(1,0){70}}
\put(100,-90){\line(1,0){70}}
\put(158,-90){\line(0,1){67}}
\put(158,-90){\circle*{3}}\put(158,-23){\circle*{3}}
\put(158,-20){\line(0,1){8}}
\put(158,-8){\line(0,1){8}}
\put(158,-0){\line(-1,-2){7}}
\put(149,-18){\line(-1,-2){7}}
\put(140,-36){\line(-1,-2){14}}
\put(140,-36){\circle*{3}}\put(126,-64){\circle*{3}}
\put(124,-68){\line(-1,-2){6}}
\put(116,-84){\line(-1,-2){6}}
\put(146.5,-23){\circle*{3}}\put(113,-90){\circle*{3}}
\put(158,0){\circle*{3}}
\put(146,-4){$s^{\ast}$}
\put(160,-20){$t_2$}\put(155,-100){$t_1$}
\put(143,-41){$t'_2$}\put(129,-69){$t'_1$}
\put(136,-20){$v_2$}\put(113,-100){$v_1$}
\put(158,-23){\line(-4,-1){53}}
\put(112,-71){\line(-1,5){7}}
\put(112,-71){\line(3,-2){21}}
\put(158,-90){\line(-5,1){25}}
\put(158,-23){\line(2,-1){12}}
\put(158,-90){\line(1,1){12}}
\put(90,-28){$H_2$}\put(90,-95){$H_1$}
\put(110,-45){$\mathcal{S}'$}
\put(115,-10){(b)}
\end{picture}
\caption{The cases (a) $p_1 = p_2$ and (b) $p_1 > p_2$, $s^{\ast} \not\in \widetilde{\mathcal{S}}$ in Proposition \ref{prop:HEisExtremal}}
\label{fig:Menelaus}
\end{figure}
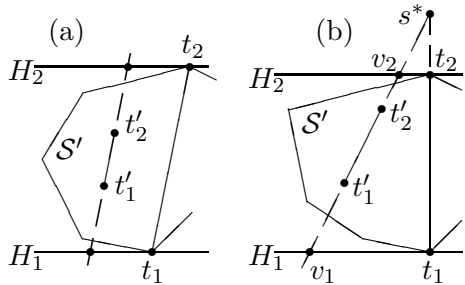

Now we are in a position to state our main theorem in this article, that will be proven in the next subsection:
\begin{theorem}
\label{thm:HE_exist}
If $(t_1,t_2) \in \mathcal{C}$, then $t_1$ and $t_2$ are distinguishable.
Hence, by the above argument, a Helstrom family always exists under the assumption \eqref{eq:assumption_generic}.
\end{theorem}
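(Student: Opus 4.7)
The plan is to combine the existence of an optimal observable (Theorem~\ref{thm:optimal_observable_exist}) with a Hahn--Banach separation argument to establish that
\[
\sup_{\widetilde{e} \in \widetilde{\mathcal{E}}} \widetilde{e}(s^{\ast}) = \frac{1}{\ell^{\ast}}
\]
in the case $p_1 > p_2$ and $s^{\ast} \not\in \widetilde{\mathcal{S}}$, where $\ell^{\ast} := \max_{\mathcal{C}_{\mathrm{weak}}'} \ell$; distinguishability of any pair in $\mathcal{C}$ will then follow by a tightness argument. The case $p_1 = p_2$ is entirely analogous, with $s^{\ast}$ replaced by the direction $u := s_1 - s_2$ and with $\widetilde{\mathcal{S}}_\mu$ below replaced by a translate of $\widetilde{\mathcal{S}}$ along $u$ (equivalently, working with the difference set $D := \widetilde{\mathcal{S}} - \widetilde{\mathcal{S}}$).

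I would first rewrite the success probability. Using the identity $(p_1 - p_2) s^{\ast} = p_1 s_1 - p_2 s_2$ together with the extension of $e$ to a continuous affine functional on $\widetilde{V}$ (Corollary~\ref{cor:extension_from_fin_dim}), a direct computation gives $P_{\mathrm{succ}}(\mathbf{O}) = p_1 - (p_1 - p_2)\widetilde{e}(s^{\ast})$ for $\mathbf{O} = (1-e, e) \in \mathcal{O}_2$, so that after $\widetilde{e} \leftrightarrow 1 - \widetilde{e}$ we have $P_{\mathrm{succ}} = p_2 + (p_1 - p_2)\sup_{\widetilde{e} \in \widetilde{\mathcal{E}}} \widetilde{e}(s^{\ast})$. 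For the upper bound on the supremum, apply any $\widetilde{e} \in \widetilde{\mathcal{E}}$ to the relation $t_2 = \ell^{\ast} s^{\ast} + (1 - \ell^{\ast}) t_1$ for $(t_1, t_2) \in \mathcal{C}$; together with $\widetilde{e}(t_i) \in [0,1]$ this yields $\widetilde{e}(s^{\ast}) \leq 1/\ell^{\ast}$. For the lower bound, observe that by maximality of $\ell^{\ast}$ the compact convex sets $\widetilde{\mathcal{S}}$ and $\widetilde{\mathcal{S}}_\mu := \mu\widetilde{\mathcal{S}} + (1-\mu)\{s^{\ast}\}$ are disjoint in $\widetilde{V}$ for every $\mu \in (0, 1 - \ell^{\ast})$. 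Hahn--Banach's separation theorem in the l.c.\ Hausdorff t.v.s.\ $\widetilde{V}$ produces a continuous linear functional strictly separating them, and after an affine normalization taking $\widetilde{\mathcal{S}}$ into $[0,1]$ (and, if necessary, replacing by its complement to $1$) one obtains a virtual effect $\widetilde{e}_\mu \in \widetilde{\mathcal{E}}$ with $\widetilde{e}_\mu(s^{\ast}) > 1/(1-\mu)$; letting $\mu \to (1 - \ell^{\ast})^-$ gives $\sup_{\widetilde{e} \in \widetilde{\mathcal{E}}} \widetilde{e}(s^{\ast}) \geq 1/\ell^{\ast}$.

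Combining the two bounds, the supremum equals $1/\ell^{\ast}$, and by Theorem~\ref{thm:optimal_observable_exist} it is attained at some $\widetilde{e}^{\ast} \in \widetilde{\mathcal{E}}$ with $\widetilde{e}^{\ast}(s^{\ast}) = 1/\ell^{\ast}$. For any $(t_1, t_2) \in \mathcal{C}$, applying $\widetilde{e}^{\ast}$ to $t_2 = \ell^{\ast} s^{\ast} + (1 - \ell^{\ast}) t_1$ and multiplying by $\ell^{\ast}$ yields $\widetilde{e}^{\ast}(t_2) - (1 - \ell^{\ast})\widetilde{e}^{\ast}(t_1) = 1$, which together with $\widetilde{e}^{\ast}(t_1) \geq 0$ and $\widetilde{e}^{\ast}(t_2) \leq 1$ forces $\widetilde{e}^{\ast}(t_1) = 0$ and $\widetilde{e}^{\ast}(t_2) = 1$; hence $1 - \widetilde{e}^{\ast}$ distinguishes $t_1$ and $t_2$, as required. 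The main technical obstacle is the careful use of Hahn--Banach's separation theorem in the locally convex Hausdorff setting and the verification, after affine normalization, that the separating functional becomes a bona fide virtual effect with the lower bound $\widetilde{e}_\mu(s^{\ast}) > 1/(1-\mu)$; the degenerate possibility that every such separating functional is constant on $\widetilde{\mathcal{S}}$ is excluded by Lemma~\ref{lem:separated_generalized} together with the standing assumption $s_1 \neq s_2$.
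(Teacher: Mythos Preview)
Your argument is correct and takes a genuinely different route from the paper's proof in Section~\ref{subsec:proof_main_theorem}. The paper works with a fixed pair $(t_1,t_2)\in\mathcal{C}$, translates so that $\overline{t_2}=-\overline{t_1}$, builds the circled convex set $\widetilde{C}=\mathrm{Conv}(\pm(\widetilde{\mathcal{S}}-v_0))$, and applies the \emph{analytic} Hahn--Banach theorem (Theorem~\ref{thm:Hahn-Banach}) with the Minkowski functional $g_{\widetilde{C}}$ as dominating semi-norm; the heart of the proof is Lemma~\ref{lem:t_i_extremal}, a somewhat intricate case analysis showing $g_{\widetilde{C}}(\overline{t_1})=1$, followed by a further continuity repair via Theorem~\ref{thm:extension_continuous_to_continuous}. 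By contrast, you use the \emph{geometric} Hahn--Banach separation theorem applied to the disjoint compact convex sets $\widetilde{\mathcal{S}}$ and $\widetilde{\mathcal{S}}_\mu$ (disjointness being an immediate reformulation of the maximality of $\ell^\ast$), and you close the gap between the two bounds on $\sup_{\widetilde{e}}\widetilde{e}(s^\ast)$ by invoking Theorem~\ref{thm:optimal_observable_exist} to guarantee attainment. Your approach is shorter and more transparent, and it yields the bonus that a single optimal effect $\widetilde{e}^\ast$ distinguishes \emph{every} pair in $\mathcal{C}$ simultaneously; the paper's approach, on the other hand, is more self-contained in that it does not lean on Theorem~\ref{thm:optimal_observable_exist} and produces the distinguishing effect by direct construction.

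Two minor points worth tightening: the extension of $e$ to a continuous affine functional on $\widetilde{V}$ that you invoke is Lemma~\ref{lem:effect_correspondence} (and its proof) rather than Corollary~\ref{cor:extension_from_fin_dim}; and in your normalization step after separation, you should verify $\widetilde{e}_\mu(s^\ast)>1/(1-\mu)$ by evaluating the separating inequality at the point $\mu t_0+(1-\mu)s^\ast\in\widetilde{\mathcal{S}}_\mu$ where $t_0$ attains the minimum of the normalized effect on $\widetilde{\mathcal{S}}$ (this is the step that actually delivers the factor $1/(1-\mu)$ rather than merely $>1$).
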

Before starting the proof we notice the following: Once Theorem \ref{thm:HE_exist} is proven, the hypothesis \lq\lq and a Helstrom family exists'' in the first part of Lemma \ref{lem:HFandGeneric} becomes redundant, therefore the following simple criterion for generic cases in two-state discrimination problems will be obtained that improves Lemma \ref{lem:HFandGeneric}:
\begin{theorem}
\label{thm:criterion_generic}
Under the assumption $p_1 \geq p_2$, the condition \eqref{eq:assumption_generic} is necessary and sufficient for the case to be generic.
\end{theorem}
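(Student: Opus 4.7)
The plan is to derive Theorem \ref{thm:criterion_generic} as an almost immediate consequence of Lemma \ref{lem:HFandGeneric} combined with the existence result in Theorem \ref{thm:HE_exist}. The key observation is that Lemma \ref{lem:HFandGeneric} already gives both implications of the desired equivalence, modulo a single hypothesis (the existence of a Helstrom family) which becomes automatic once Theorem \ref{thm:HE_exist} is in hand. So the work is purely logical bookkeeping: remove the redundant hypothesis and combine the two directions.

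For the sufficiency direction, I would assume condition \eqref{eq:assumption_generic} holds. Theorem \ref{thm:HE_exist} guarantees that a Helstrom family for $s_1,s_2$ with probabilities $p_1,p_2$ exists under this assumption. Then Lemma \ref{lem:HFandGeneric}(1) applies directly, yielding $P_{\mathrm{succ}} > \max_i p_i$, i.e., the case is generic.

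For the necessity direction, I would argue by contrapositive. Suppose condition \eqref{eq:assumption_generic} fails. Because we have assumed $p_1 \geq p_2$, the negation of \eqref{eq:assumption_generic} simplifies to: $p_1 > p_2$ and $s^{\ast} \in \widetilde{\mathcal{S}}$. But then Lemma \ref{lem:HFandGeneric}(2) immediately gives that the case is non-generic, i.e., $P_{\mathrm{succ}} = p_1 = \max_i p_i$. This is precisely what is needed.

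There is essentially no obstacle here: all the nontrivial content (the geometric construction of conjugate states, continuity of $\ell$, and the distinguishability argument establishing Theorem \ref{thm:HE_exist}) has already been absorbed in Sect.\ \ref{sec:existence}. The only subtlety worth mentioning explicitly in the writeup is the elementary propositional simplification of $\neg\eqref{eq:assumption_generic}$ under the standing hypothesis $p_1 \geq p_2$, so that it matches the hypothesis of Lemma \ref{lem:HFandGeneric}(2) verbatim.
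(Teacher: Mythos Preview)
Your proposal is correct and follows exactly the same approach as the paper: the paper explicitly remarks that once Theorem~\ref{thm:HE_exist} is available the extra hypothesis ``and a Helstrom family exists'' in Lemma~\ref{lem:HFandGeneric}(1) becomes redundant, so Theorem~\ref{thm:criterion_generic} is obtained directly from the two parts of Lemma~\ref{lem:HFandGeneric}. Your observation about the propositional simplification of $\neg\eqref{eq:assumption_generic}$ under $p_1 \geq p_2$ is precisely what is needed to match the hypothesis of Lemma~\ref{lem:HFandGeneric}(2).
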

In particular, an equiprobable case $p_1 = p_2 = 1/2$ is always a generic case, therefore in such a case we can always make a correct guess with probability strictly higher than $1/2$ by using an appropriate observable.

We also mention another nontrivial consequence of Theorem \ref{thm:HE_exist} that shows interesting relations between optimal success probabilities for equiprobable two-state discrimination problems and Gudder's metric functions on the state space (cf., Remark \ref{rem:Gudder_metric}):
\begin{remark}
\label{rem:Gudder_distance_optimal_prob}
First, we translate the definition of Gudder's metric function \cite{Gud_book79} on compact state spaces to our framework with not necessarily compact real state space $\mathcal{S}$.
For $s'_1,s'_2 \in \mathcal{S}$, define $d(s'_1,s'_2)$ to be the infimum of $0 < \lambda \leq 1/2$ such that
\begin{displaymath}
\lambda t_1 + (1 - \lambda) s'_1 = \lambda t_2 + (1 - \lambda) s'_2 \mbox{ for some } t_1,t_2 \in \widetilde{\mathcal{S}}
\end{displaymath}
(note that $\lambda = 1/2$, $t_1 = s'_2$ and $t_2 = s'_1$ always satisfy this condition).
This function $d$ is a metric on $\mathcal{S}$, and this definition coincides with Gudder's original definition in the case $\widetilde{\mathcal{S}} = \mathcal{S}$ (i.e., when $\mathcal{S}$ is compact).
Now the above condition is equivalent to that $(1 - \lambda,s'_i;\lambda,t_i)$, $i = 1,2$, is a weak Helstrom family for states $s'_1,s'_2$ and a priori probabilities $p_i = 1/2$, with Helstrom ratio given by $p = 1 / (2 - 2 \lambda)$.
Thus minimizing $\lambda$ is equivalent to minimizing $p$, and Theorem \ref{thm:HE_exist} implies that the infimum $d(s'_1,s'_2)$ of such $\lambda$ is attained by some Helstrom family, with Helstrom ratio $p = P_{\mathrm{succ}}(s'_1,s'_2)$ where $P_{\mathrm{succ}}(s'_1,s'_2)$ denotes the optimal success probability for discriminating $s'_1$ and $s'_2$ in the equiprobable case.
Thus we have a nontrivial relation
\begin{equation}
\label{eq:relation_prob_metric}
d(s'_1,s'_2) = 1 - \frac{ 1 }{ 2 P_{\mathrm{succ}}(s'_1,s'_2) } \mbox{ for any } s'_1,s'_2 \in \mathcal{S} \enspace.
\end{equation}
In particular, it follows that the function of $s'_1,s'_2$ in the right-hand side is a metric on $\mathcal{S}$.
It seems infeasible to derive the fact just from the intuitive meaning of \lq\lq optimal success probability of state discrimination''.

On the other hand, Gudder also defined another metric function on the same state space, called the \lq\lq intrinsic metric'', by using the former metric function $d$ as a building block.
According to Gudder's definition, we put
\begin{displaymath}
\widetilde{d}(s'_1,s'_2) = \frac{ d(s'_1,s'_2) }{ 1 - d(s'_1,s'_2) } \mbox{ for } s'_1,s'_2 \in \mathcal{S} \enspace.
\end{displaymath}
The concrete structure of the above metric $d$ implies that $\widetilde{d}$ is indeed a metric function and $0 \leq \widetilde{d} \leq 1$.
Moreover, it follows from \eqref{eq:relation_prob_metric} that
\begin{equation}
\label{eq:relation_prob_intrinsic_metric}
\widetilde{d}(s'_1,s'_2) = 2 P_{\mathrm{succ}}(s'_1,s'_2) - 1 \mbox{ for } s'_1,s'_2 \in \mathcal{S} \enspace.
\end{equation}
This shows an operational meaning of Gudder's intrinsic metric that has not been pointed out in the literature.
Moreover, by comparing \eqref{eq:relation_prob_intrinsic_metric} to the well-known formula $P_{\mathrm{succ}}(\rho_1,\rho_2) = 1/2 + D(\rho_1,\rho_2)/2$ for \emph{quantum} states $\rho_1,\rho_2$, where $D(\rho_1,\rho_2)$ denotes the trace distance, Gudder's intrinsic metric coincides with the trace distance for quantum cases.
Hence we have obtained an operationally natural generalization of the trace distance to general probabilistic theories.

Moreover, it is in fact possible to define the \lq\lq trace distance'' in general probabilistic theories directly through the classical trace distance:
\begin{equation}
\label{eq:TD}
D(s'_1,s'_2) = \sup_{\mathbf{O} = (e_i)_i \in \mathcal{O}} D_c(e_i(s'_1),e_i(s'_2)) \mbox{ for } s'_1,s'_2 \in \mathcal{S} \enspace,
\end{equation}
where $D_c(p_i,q_i)$ denotes the classical trace distance ($L_1$ distance or Kolmogorov distance) \cite{ref:NC} between probability distributions $p_i$ and $q_i$: 
\begin{displaymath}
D_c(p_i,q_i) = \frac{1}{2} \sum_i |p_i - q_i| \enspace,
\end{displaymath}
and $\mathcal{O} = \bigcup_{N \in \mathbb{N}} \mathcal{O}_N$ denotes the set of all discrete observables.
(Note that the argument below shows that the supremum in \eqref{eq:TD} is always attained by some observable and it can be chosen from two-valued observables.)
Since the classical trace distance is the maximal difference of probabilities between $p_i$ and $q_i$ among all events $S$, i.e., $D_c(p_i,q_i) = \max_S |p(S) - q(S)| = \max_S |\sum_{i \in S} p_i - \sum_{i \in S} q_i| $, it is considered as an operationally natural distance between probability distributions.
In order to distinguish states $s_1'$ and $s_2'$ in general probabilistic theories, what one can do best is to find the best observable $\mathbf{O} = (e_i)_i \in \mathcal{O}$ for catching the difference between $s_1'$ and $s_2'$ by comparing the probability distributions $e_i(s'_1)$ and $e_i(s'_2)$.
Thus we are lead to the definition \eqref{eq:TD} of the distance between states; namely, $D(s'_1,s'_2)$ has the same operational meaning as Kolmogorov distance that is optimal among all observables.
From now, we show that Gudder's intrinsic metric \eqref{eq:relation_prob_intrinsic_metric} is in fact the same as our trace distance \eqref{eq:TD}.

For the purpose, first we show that in our trace distance, it suffices to consider just two-valued observables $\mathbf{O} = (e_i)_i \in \mathcal{O}_2$, namely:
\begin{equation}
\label{eq:TDwith2}
D(s'_1,s'_2)= \sup_{\mathbf{O} = (e_i)_i \in \mathcal{O}_2} D_c(e_i(s'_1),e_i(s'_2)) \enspace.
\end{equation}
(Now the supremum is attained by some observable due to the compactness of $\mathcal{O}_2$ and the continuity of $D_c(e_i(s'_1),e_i(s'_2))$; see the proof of Theorem \ref{thm:optimal_observable_exist}.)
To prove \eqref{eq:TDwith2}, note that one can associate to any $\mathbf{O} = (e_i)_i \in \mathcal{O}$ a two-valued observable $(e'_+,e'_-) \in \mathcal{O}_2$ with $e'_+ = \sum_{i \in M_+} e_i$ and $e'_- = 1 - e_+$, where $M_+ = \{ i \mid e_i(s'_1) \geq e_i(s'_2) \}$.
By the definition, we have $D_c(e_i(s'_1),e_i(s'_2)) = D_c(e'_{\pm}(s'_1),e'_{\pm}(s'_2))$.
This implies that the right-hand side of \eqref{eq:TDwith2} is greater than or equal to the right-hand side of \eqref{eq:TD}, while the opposite inequality holds obviously (since $\mathcal{O}_2 \subset \mathcal{O}$).
Hence \eqref{eq:TDwith2} holds.
Note that this argument also provides another simple expression of our trace distance $D(s'_1,s'_2)$:
\begin{equation}
\label{eq:TDwith3}
D(s'_1,s'_2)= \sup_{e \in \mathcal{E}} \left[ e(s'_1) -e(s'_2) \right] \enspace,
\end{equation}
where the supremum is again attained by some effect due to the compactness of $\mathcal{E}$ (see the proof of Theorem \ref{thm:optimal_observable_exist}).

Now it is not difficult to see that Gudder's intrinsic metric \eqref{eq:relation_prob_intrinsic_metric} is indeed the same as our trace distance \eqref{eq:TD}: To see this, just observe that for $s'_1,s'_2 \in \mathcal{S}$ with a priori probabilities $p_1=p_2=1/2$, we have from \eqref{eq:success_prob} and \eqref{eq:optimal_success_prob}
\begin{displaymath}
P_{\mathrm{succ}}(s'_1,s'_2) = \frac{1}{2} (1 + \sup_{e\in \mathcal{E}} \left[ e(s'_1) -e(s'_2) \right]) \enspace.
\end{displaymath}
Substituting it into \eqref{eq:relation_prob_intrinsic_metric} and using \eqref{eq:TDwith3}, we obtain the desired relation:
\begin{equation}
\label{eq:GIMTD}
\widetilde{d}(s'_1,s'_2) = D(s'_1,s'_2) \enspace.
\end{equation}

The equivalence \eqref{eq:GIMTD} provides simple proofs for several properties of $\widetilde{d}(s'_1,s'_2)$ originally shown by Gudder \cite{Gud_book79}.
For instance, since the classical trace distance $D_c(p_i,q_i)$ is well known to be a metric, so is our trace distance $D(s'_1,s'_2)$ by the definition, therefore $\widetilde{d}(s'_1,s'_2)$ is indeed a metric as well (for positiveness of $D(s'_1,s'_2)$ with $s'_1 \neq s'_2$ we needed the fact that the state space is separated).
We also consider another important property, the \emph{monotonicity} of $\widetilde{d}(s'_1,s'_2)$:
\begin{theorem}
[Gudder \cite{Gud_book79}]
\label{thm:GMon}
For any state $s'_1,s'_2 \in \mathcal{S}$ and any affine map $F: \mathcal{S} \to \mathcal{S}$, we have
\begin{displaymath}
\widetilde{d}(F(s'_1),F(s'_2)) \leq \widetilde{d}(s'_1,s'_2) \enspace.
\end{displaymath}
\end{theorem}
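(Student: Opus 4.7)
The plan is to reduce the monotonicity statement to a simple monotonicity property of the trace distance $D$ via the identification $\widetilde{d}=D$ established in \eqref{eq:GIMTD}, and then exploit the supremum expression \eqref{eq:TDwith3}.

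First, by \eqref{eq:GIMTD}, it suffices to prove the analogous inequality
\begin{displaymath}
D(F(s'_1),F(s'_2)) \leq D(s'_1,s'_2)
\end{displaymath}
for the trace distance. I would use the expression \eqref{eq:TDwith3}, namely $D(s'_1,s'_2)=\sup_{e\in\mathcal{E}}\left[e(s'_1)-e(s'_2)\right]$, since it reduces the comparison to a supremum over a single family of effects.

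The key observation is then that pre-composition with an affine map preserves effects: if $F:\mathcal{S}\to\mathcal{S}$ is affine (in the sense of the convex-structure morphism $F(\langle\lambda,\mu;s,t\rangle)=\lambda F(s)+\mu F(t)$) and $e\in\mathcal{E}=\mathcal{E}(\mathcal{S})$, then $e\circ F:\mathcal{S}\to\left[0,1\right]$ is affine and takes values in $\left[0,1\right]$, so $e\circ F\in\mathcal{E}$. Consequently the set $\{e\circ F\mid e\in\mathcal{E}\}$ is a subset of $\mathcal{E}$, giving
\begin{displaymath}
D(F(s'_1),F(s'_2))=\sup_{e\in\mathcal{E}}\bigl[(e\circ F)(s'_1)-(e\circ F)(s'_2)\bigr]\leq\sup_{e'\in\mathcal{E}}\bigl[e'(s'_1)-e'(s'_2)\bigr]=D(s'_1,s'_2).
\end{displaymath}
Applying \eqref{eq:GIMTD} once more to the left-hand side recovers the desired inequality for $\widetilde{d}$.

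There is no genuine obstacle here: all the work has already been done in establishing \eqref{eq:GIMTD} and \eqref{eq:TDwith3}. The only thing to be careful about is the precise class of maps $F$ for which the argument is valid. Gudder's statement in Theorem \ref{thm:GMon} is for affine self-maps of $\mathcal{S}$, and the argument above uses only that $F$ pulls back effects to effects, which is immediate from affineness together with the codomain restriction $F(\mathcal{S})\subset\mathcal{S}$. (In particular, no continuity assumption on $F$ is needed, because the sole use of $F$ is to substitute it inside affine functionals, and every effect on $\mathcal{S}$ is automatically continuous by the choice of $\mathcal{T}(\mathcal{S})$ in Theorem \ref{thm:state_space_embedded}.)
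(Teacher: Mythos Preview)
Your proof is correct and follows essentially the same approach as the paper: reduce to monotonicity of $D$ via \eqref{eq:GIMTD}, then use that pre-composition with the affine map $F$ sends effects to effects so that the supremum can only decrease. The paper phrases the pullback argument in terms of observables and the definition \eqref{eq:TD} rather than single effects and \eqref{eq:TDwith3}, but this is a cosmetic difference.
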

Now this fact is an easy consequence of the equivalence \eqref{eq:GIMTD} and the fact that affine maps are closed under composition.
Namely, for any observable $(e_i)_i \in \mathcal{O}$, by putting $f_i = e_i \circ F:\mathcal{S} \to \left[0,1\right]$ we have
\begin{equation}
\label{eq:relation_for_monotonicity}
D_c(e_i(F(s'_1)),e_i(F(s'_2))) = D_c(f_i(s'_1),f_i(s'_2)) \enspace.
\end{equation}
Now $(f_i)_i$ is also an observable, therefore the supremum of the left-hand side of \eqref{eq:relation_for_monotonicity} over $(e_i)_i \in \mathcal{O}$ does not exceed the supremum of the right-hand side of \eqref{eq:relation_for_monotonicity} over \emph{all} observables $(f_i)_i$.
This implies the monotonicity of $D(s'_1,s'_2)$, hence of $\widetilde{d}(s'_1,s'_2)$.
(We note that the quantity in the right-hand side of \eqref{eq:TDwith3} was also investigated in \cite{ref:Dariano} in slightly different context; for instance, it was shown to be a metric, and the monotonicity was also proven there.)

Summarizing, we have shown that Gudder's intrinsic metric has two operational meanings; one is directly given through the classical trace distance \eqref{eq:GIMTD}; another is given by the optimal success probability to discriminate two states under a uniform distribution \eqref{eq:relation_prob_intrinsic_metric}.
\end{remark}
\begin{remark}
\label{rem:NoiseDisturbance}
As an application of Gudder's intrinsic metric, or the trace distance defined above, we have a simple (qualitative) version of information disturbance theorem in general probabilistic theories.
Before giving the theorem, we clarify the meaning of some terminology.
We say that a state $s$ is a \emph{pure state} if $s$ is an extremal point of the state space.
We say that two states are \emph{indistinguishable} if these are not distinguishable in the sense of Definition \ref{defn:distinguishable}.
Then the above-mentioned theorem is the following:
\begin{theorem}
\label{thm:ND}
In any general probabilistic theory, any attempt to distinguish two indistinguishable pure states causes a disturbance. 
\end{theorem}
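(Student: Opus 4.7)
The plan is to first formalize the informal terms ``attempt to distinguish'' and ``causes a disturbance,'' since the minimal framework does not pin these down on its own. The cleanest reading compatible with the machinery built in Sects.\ \ref{sec:formulation_model}--\ref{sec:existence} is: an \emph{attempt to distinguish $s_1,s_2$ with certainty and without disturbance} is an observable $\mathbf{O}=(e_i)_{i\in I}\in\mathcal{O}$ such that some partition $I=I_1\sqcup I_2$ satisfies $\sum_{i\in I_j}e_i(s_j)=1$ for $j=1,2$, so that the outcome lies in $I_j$ precisely when the input is $s_j$. ``No disturbance'' here is the absence of any extra state transformation beyond the measurement itself.

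Under this reading the proof is immediate: the effect $e=\sum_{i\in I_1}e_i\in\mathcal{E}$ satisfies $e(s_1)=1$ directly, while $e(s_2)\leq 1-\sum_{i\in I_2}e_i(s_2)=0$, so $e(s_2)=0$. Hence $(1-e,e)\in\mathcal{O}_2$ distinguishes $s_1$ and $s_2$ in the sense of Definition \ref{defn:distinguishable}, contradicting the hypothesis of indistinguishability. Consequently any procedure that distinguishes $s_1$ from $s_2$ with certainty must invoke something strictly beyond measurement by an observable, i.e., a genuine state transformation -- a ``disturbance.''

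The main obstacle, which is conceptual rather than technical, is that the conclusion is only as strong as the notion of ``disturbance'' one is willing to adopt. To give the theorem more physical teeth one should enrich the model, e.g., by treating a measurement as an affine instrument $F:\mathcal{S}\to\mathcal{S}_{\mathrm{joint}}$ into a system-plus-register state space. In such a model the purity hypothesis becomes essential: extremality of $s_j$ forces $F(s_j)$ to be a product $(s_j,r_j)$ whenever the $\mathcal{S}$-marginal of $F$ fixes $s_j$, and then the composition of $F$ with the $I$-marginal and the indicator of $I_1$ produces, by Corollary \ref{cor:extension_from_fin_dim} applied in the finite-dimensional affine span of $\{F(s_1),F(s_2)\}$, an effect in $\mathcal{E}$ that distinguishes $s_1$ from $s_2$ exactly as above. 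Thus any nontrivial difference between $r_1$ and $r_2$ must be paid for by a disturbance of $s_1$ or $s_2$, recovering the intended qualitative information--disturbance statement.
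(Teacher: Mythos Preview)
Your first reading of the theorem trivializes it: if ``attempt to distinguish'' means ``possesses an observable that perfectly separates $s_1$ and $s_2$,'' then the conclusion is immediate from Definition~\ref{defn:distinguishable} and purity plays no role. The paper's intended statement (see the proof there) is stronger: an affine map $F:\mathcal{S}\otimes\mathcal{S}'\to\mathcal{S}\otimes\mathcal{S}'$ is assumed to leave the first marginal equal to $s_i$ (no disturbance) while making the second marginals $t_1,t_2$ merely \emph{distinct} (some information is extracted, not necessarily perfect discrimination). Your third paragraph moves toward this model and correctly uses extremality to force $F(s_i\otimes s_0)=s_i\otimes t_i$, but then the argument breaks.

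The gap is here: from $t_1\neq t_2$ you cannot produce an effect $e\in\mathcal{E}$ with $e(s_1)=1$ and $e(s_2)=0$. Pulling back any register effect $e'$ along $F$ gives an effect on $\mathcal{S}$ with values $e'(t_1)$ and $e'(t_2)$ at $s_1,s_2$, but nothing forces these to be $1$ and $0$; $t_1$ and $t_2$ may themselves be indistinguishable. The appeal to Corollary~\ref{cor:extension_from_fin_dim} does not help, since that corollary only extends an affine functional continuously and says nothing about the range on $\widetilde{\mathcal{S}}$. The paper closes this gap with an idea you are missing: iterate the disturbance-free map $N$ times to obtain $s_i\otimes t_i^{\otimes N}$, so that the register copies can be discriminated with success probability tending to $1$ (e.g., by a Chernoff-type bound), and then invoke the monotonicity of the generalized trace distance (Theorem~\ref{thm:GMon}, via \eqref{eq:relation_prob_intrinsic_metric}) to bound that probability by $P_{\mathrm{succ}}(s_1,s_2)<1$, a contradiction. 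Without this amplification-plus-monotonicity step, the enriched-model argument does not go through.
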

This theorem is a generalization of the well-known corresponding theorem in quantum theory (see e.g., Proposition 12.18 in \cite{ref:NC}) to arbitrary general probabilistic theories.
It is known that a general probabilistic theory is non-classical if and only if there exist indistinguishable pure states \cite{BBLW07}.
Hence one can conclude that the information disturbance property inevitably holds for any non-classical general probabilistic theory, not only for quantum theory.

Before presenting the proof, notice that any dynamics on $\mathcal{S}$ should be described by an affine map $F: \mathcal{S} \to \mathcal{S}$ in order to preserve the probabilistic mixture, while the composition of state spaces $\mathcal{S}_1$ and $\mathcal{S}_2$ is given by a tensor product $\mathcal{S}_1 \otimes \mathcal{S}_2$ (see \cite{BBLW07} and references therein).
\begin{proof}
[Theorem \ref{thm:ND}]
Let $s_1,s_2 \in \mathcal{S}$ be two indistinguishable pure states (thus $1 > P_{\mathrm{succ}}(s_1,s_2)$). 
Let $s_i \otimes s_0$ ($i=1,2$) be the initial states on $\mathcal{S} \otimes \mathcal{S}'$, where $s_0 \in \mathcal{S}'$ is any fixed state to which the information of $s_1$ or $s_2$ is transferred.
Assume contrary that one can extract information with which one distinguishes $s_1$ and $s_2$ without causing any disturbance.
More precisely, we assume that there exists an information transfer machine described by an affine map $F:\mathcal{S} \otimes \mathcal{S}' \to \mathcal{S} \otimes \mathcal{S}'$ such that the reduced states of $F(s_i \otimes s_0)$ to the first system $\mathcal{S}$ remains to be $s_i$ (i.e., causing no disturbance) while the reduced states of $F(s_1 \otimes s_0)$ and $F(s_2 \otimes s_0)$ to the second system $\mathcal{S}'$ are distinct (i.e., enabling one to extract some information to distinguish $s_1$ and $s_2$).
Now it is easy to show that if a reduced state is in pure state, then the whole state should be a product state by showing that there exist no correlations between an arbitrary pair of observables (or effects).
Therefore, we have
\begin{displaymath}
F(s_1 \otimes s_0) = s_1 \otimes t_1 \enspace,\enspace F(s_2 \otimes s_0) = s_2 \otimes t_2 \enspace,
\end{displaymath}
with $t_1 \neq t_2 \in \mathcal{S}'$.
Using the machine $F$ $N$ times, one obtains an affine transformation $\widetilde{F}$ on $\mathcal{S} \otimes \mathcal{S}'^{\otimes N}$ such that
\begin{displaymath}
\widetilde{F}(s_i \otimes (s_0^{\otimes N})) = s_i \otimes t_i^{\otimes N} \enspace.
\end{displaymath}
Physically, this means that one obtains an arbitrary large number of ensembles for (distinct) state $t_1$ or $t_2$, and thereby can distinguish them with success probability arbitrarily close to $1$.
In other words, the optimal success probability to distinguish $\widetilde{F}(s_1 \otimes (s_0^{\otimes N}))$ and $\widetilde{F}(s_1 \otimes (s_2^{\otimes N}))$ can be exponentially close to $1$ with respect to $N$ (to see this formally, use Chernoff bound \cite{Fuchs} for instance).
On the other hand, we have
\begin{displaymath}
1 > P_{\mathrm{succ}}(s_1,s_2)
= P_{\mathrm{succ}}(s_1 \otimes s_0^{\otimes N}, s_2 \otimes s_0^{ \otimes N})
\geq P_{\mathrm{succ}}(\widetilde{F}(s_1 \otimes s_0^{\otimes N}),\widetilde{F}(s_2 \otimes s_0^{\otimes N}))
\end{displaymath}
for any $N$, where the last inequality follows from Theorem \ref{thm:GMon} and \eqref{eq:relation_prob_intrinsic_metric}.
This is a contradiction, since the last term converges to $1$ when $N \to \infty$ as mentioned above.
Hence the proof of Theorem \ref{thm:ND} is concluded.
\end{proof}
\end{remark}

\subsection{Proof of Theorem \ref{thm:HE_exist}}
\label{subsec:proof_main_theorem}

In this subsection, we give a proof of Theorem \ref{thm:HE_exist}, namely we prove that $t_1$ and $t_2$ in $\widetilde{\mathcal{S}}$ are distinguishable if $(t_1,t_2) \in \mathcal{C}$ (see Definition \ref{defn:distinguishable} for terminology).

First, we would like to reduce our argument to the special case $t_2 = - t_1$.
For the purpose, let $v_0 = (t_1 + t_2)/2 \in \widetilde{\mathcal{S}}$ and put $C = \widetilde{\mathcal{S}} - v_0$, that is also a convex subset of $\widetilde{V}$.
Moreover, put $\overline{t_1} = t_1 - v_0$ and $\overline{t_2} = t_2 - v_0$.
Then we have $\overline{t_1},\overline{t_2} \in C$ and $\overline{t_2} = - \overline{t_1}$.
Note that $\overline{t_1} \neq \overline{t_2}$ since $t_1 \neq t_2$.

The outline of our proof is the following.
First, note that the existence of an $\widetilde{e} \in \widetilde{\mathcal{E}}$ such that $\widetilde{e}(t_1) = 1$ and $\widetilde{e}(t_2) = 0$ (that is nothing but our goal) is obvious if $\widetilde{V}$ coincides with the $1$-dimensional linear subspace $W'$ spanned by $\overline{t_1}$ (hence by $\overline{t_2}$).
To construct such an $\widetilde{e}$ in more general case, we would like to extend a nonzero linear functional $f$ on $W'$ (note that $f$ is continuous on $W'$ and $f(C \cap W')$ is bounded in $\mathbb{R}$) to a continuous linear functional $\overline{f}$ on $\widetilde{V}$ such that $\overline{f}(C)$ is bounded in $\mathbb{R}$.
Then it will be shown that the restriction of an appropriate affine transformation $h = \alpha \overline{f} + \beta$ of $\overline{f}$ ($\alpha,\beta \in \mathbb{R}$) to $\widetilde{\mathcal{S}}$ is the desired virtual effect $\widetilde{e}$.
To construct such an extension $\overline{f}$ of $f$, first we use Theorem \ref{thm:Hahn-Banach} to obtain an extension $f'$ of $f$ to $W = \widetilde{V}$ (not yet necessarily continuous) such that $f'(C)$ is bounded in $\mathbb{R}$, and then we further modify the functional $f'$ by using Theorem \ref{thm:extension_continuous_to_continuous} to obtain $\overline{f}$.

To perform the program, we start with the linear functional $f$ on the $1$-dimensional subspace $W'$ such that $f(\lambda \overline{t_1}) = \lambda$ for each $\lambda \in \mathbb{R}$, therefore $f(\overline{t_1}) = 1$ and $f(\overline{t_2}) = -1$.
To apply Theorem \ref{thm:Hahn-Banach}, we would like to take an appropriate semi-norm $g$ on $\widetilde{V}$, more precisely, the Minkowski functional $g_{\widetilde{C}}$ of a certain subset $\widetilde{C}$ of $\widetilde{V}$ (see Proposition \ref{prop:Minkowski_functional}).
From now, we define the subset $\widetilde{C}$.
Note that the convex subset $C$ of $\widetilde{V}$ contains the origin of $\widetilde{V}$, therefore we have $\lambda x \in C$ for any $x \in C$ and $0 \leq \lambda \leq 1$.
Thus the subset $\pm C = C \cup -C$ of $\widetilde{V}$ is circled (see Sect.\ \ref{subsec:HF_auxiliary_result} for terminology).
Now define $\widetilde{C}$ to be the convex hull $\mathrm{Conv}(\pm C)$ of $\pm C$, which is also a circled subset of $\widetilde{V}$.
By the convexity of $C$, any element $v$ of $\widetilde{C}$ can be written as $v = \lambda x - \lambda' x'$ with $x,x' \in C$, $\lambda,\lambda' \geq 0$ and $\lambda + \lambda' = 1$.
This subset $\widetilde{C}$ has the following property:
\begin{lemma}
\label{lem:subset_radial}
$\widetilde{C}$ is a radial subset of $\widetilde{V}$ (see Sect.\ \ref{subsec:HF_auxiliary_result} for terminology).
\end{lemma}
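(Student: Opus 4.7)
The plan is to unpack the radial condition and verify it by a direct constructive argument. By definition I must show that for every $x \in \widetilde{V}$ there exists $\lambda_0 \in \mathbb{R}$ with $x \in \lambda \widetilde{C}$ whenever $|\lambda| \geq |\lambda_0|$. Since $\widetilde{C} = \mathrm{Conv}(\pm C)$ is circled (the convex hull of a circled set is circled) and contains $0$ (because $v_0 \in \widetilde{\mathcal{S}}$ forces $0 \in C \subset \widetilde{C}$), it will suffice to find, for each $x \in \widetilde{V}$, a single $M \geq 0$ with $x \in M \widetilde{C}$; the circled and convex properties then push this up to all $|\lambda| \geq M$.

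The first real step is to represent an arbitrary $x \in \widetilde{V}$ as a finite linear combination of elements of $C$. This is where I would use the hypothesis $\mathrm{Aff}(\widetilde{\mathcal{S}}) = \widetilde{V}$: translating by $-v_0$, the linear span of $C = \widetilde{\mathcal{S}} - v_0$ equals $\widetilde{V}$, so I can write $x = \sum_{i=1}^{n} \mu_i c_i$ for some $c_i \in C$ and $\mu_i \in \mathbb{R}$.

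Next I would regroup coefficients by sign. Set $p = \sum_{\mu_i > 0} \mu_i$ and $q = \sum_{\mu_i < 0} |\mu_i|$, and let $M = p+q$. Assuming for the moment $p, q > 0$, convexity of $C$ makes $y^{+} = p^{-1}\sum_{\mu_i > 0}\mu_i c_i \in C$ and $y^{-} = q^{-1}\sum_{\mu_i < 0}|\mu_i| c_i \in C$, so that $x = p\, y^{+} - q\, y^{-}$. The key observation is then
\begin{displaymath}
\frac{x}{M} \;=\; \frac{p}{M}\, y^{+} \;+\; \frac{q}{M}\,(-y^{-}) \;\in\; \mathrm{Conv}(\pm C) \;=\; \widetilde{C},
\end{displaymath}
since $y^{+} \in C$, $-y^{-} \in -C$, and the weights sum to $1$. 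The degenerate cases $p = 0$ or $q = 0$ are easier: $x$ is then already a positive multiple of an element of $C$ or $-C$, and $x/M \in \pm C \subset \widetilde{C}$. The case $x = 0$ is trivial because $0 \in \widetilde{C}$.

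Finally I would propagate from $M$ to all larger $|\lambda|$. For $\lambda \geq M > 0$, $x/\lambda = (M/\lambda)(x/M)$ with $|M/\lambda| \leq 1$, and since $\widetilde{C}$ is circled this gives $x/\lambda \in \widetilde{C}$, i.e.\ $x \in \lambda \widetilde{C}$. For $\lambda \leq -M$, one more application of circled-ness handles the sign. Taking $\lambda_0 = M$ therefore proves the radial property. I do not expect a serious obstacle here; the only point to be careful about is the bookkeeping in the sign-splitting step and handling the degenerate cases $p = 0$, $q = 0$, and $x = 0$, all of which collapse to something trivially in $\widetilde{C}$.
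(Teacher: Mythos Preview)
Your argument is correct and is essentially the same as the paper's: both use that $\widetilde C$ is convex and circled together with $\mathrm{Aff}(\widetilde{\mathcal{S}})=\widetilde V$ to absorb any finite linear combination of elements of $C$ into a dilate of $\widetilde C$. The only organizational difference is that the paper packages this as ``the set $W_0=\{v:\ v\in\lambda\widetilde C\text{ for some }\lambda>0\}$ is a linear subspace containing $C$, hence equals $\widetilde V$'' and verifies closure under \emph{binary} linear combinations, whereas you go directly to an $n$-term combination and sign-split it in one step; the underlying convexity computation is identical.
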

\begin{proof}
Let $W_0$ be the set of all $v \in \widetilde{V}$ such that $v \in \lambda \widetilde{C}$ for some $\lambda > 0$.
Then $W_0$ contains $\widetilde{C}$, hence $C$.
Moreover, if $v \in W_0$, $\lambda_0 > 0$ and $v \in \lambda_0 \widetilde{C}$, then we have $v \in \lambda \widetilde{C}$ whenever $|\lambda| \geq \lambda_0$ since $\widetilde{C}$ is circled.
Thus $\widetilde{C}$ is radial if $W_0 = \widetilde{V}$.
To prove $W_0 = \widetilde{V}$, it suffices to show that $W_0$ is a linear subspace of $\widetilde{V}$.
Indeed, once this is proven, $W_0 + v_0$ will be an affine subspace of $\widetilde{V}$ containing $\widetilde{\mathcal{S}}$ (recall that $W_0 \supset \widetilde{C} = \widetilde{\mathcal{S}} - v_0$), therefore $W_0 + v_0 = \widetilde{V}$ (hence $W_0 = \widetilde{V}$) since $\mathrm{Aff}(\widetilde{\mathcal{S}}) = \widetilde{V}$.

Let $v_1,v_2 \in W_0$.
Then for each $i$, we have $v_i \in \lambda_i x_i$ for some $\lambda_i > 0$ and $x_i \in \widetilde{C}$.
Moreover, let $\mu_1,\mu_2 \in \mathbb{R}$, and write $\mu_i = \varepsilon_i \nu_i$ with $\varepsilon_i \in \{\pm 1\}$ and $\nu_i \geq 0$ for each $i$.
We show that $\mu_1 v_1 + \mu_2 v_2 \in W_0$; since this is obvious when $\mu_1 = \mu_2 = 0$, we assume from now that $\nu_1 > 0$ or $\nu_2 > 0$.
Then by putting $x'_i = \varepsilon_i x_i \in \widetilde{C}$ for each $i$ (note that $\widetilde{C}$ is circled), we have
\begin{displaymath}
\mu_1 v_1 + \mu_2 v_2
= \lambda_1 \nu_1 x'_1 + \lambda_2 \nu_2 x'_2
= (\lambda_1 \nu_1 + \lambda_2 \nu_2) \frac{ \lambda_1 \nu_1 x'_1 + \lambda_2 \nu_2 x'_2 }{ \lambda_1 \nu_1 + \lambda_2 \nu_2 } \enspace,
\end{displaymath}
therefore $\mu_1 v_1 + \mu_2 v_2 \in (\lambda_1 \nu_1 + \lambda_2 \nu_2) \widetilde{C}$ by the convexity of $\widetilde{C}$.
Hence we have $\mu_1 v_1 + \mu_2 v_2 \in W_0$, therefore Lemma \ref{lem:subset_radial} holds.
\end{proof}
Owing to the above properties of $\widetilde{C}$, we define the semi-norm $g$ to be the Minkowski functional $g_{\widetilde{C}}$ of $\widetilde{C}$ (see Proposition \ref{prop:Minkowski_functional}).
Note that $g(v) \leq 1$ for any $v \in \widetilde{C}$ by the definition of $g = g_{\widetilde{C}}$.

From now, to apply Theorem \ref{thm:Hahn-Banach}, we show that $|f(v)| \leq g(v)$ for any $v \in W'$.
Since $W'$ is $1$-dimensional and $g$ is a semi-norm, it suffices to show that $g(\overline{t_1}) = 1 = f(\overline{t_1})$.
This is proven in the following lemma:
\begin{lemma}
\label{lem:t_i_extremal}
We have $g(\overline{t_1}) = 1$.
\end{lemma}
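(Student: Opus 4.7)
My plan is to prove the two inequalities $g(\overline{t_1})\le 1$ and $g(\overline{t_1})\ge 1$ separately. The upper bound is immediate from the chain $\overline{t_1}=t_1-v_0\in\widetilde{\mathcal{S}}-v_0=C\subseteq\widetilde{C}$, which places $\overline{t_1}$ in $1\cdot\widetilde{C}$ and so gives $g(\overline{t_1})\le 1$ by the very definition of the Minkowski functional in \eqref{eq:Minkowski functional}.

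The substantial direction $g(\overline{t_1})\ge 1$ I would attack by contradiction, exploiting the maximality of $\ell$ at $(t_1,t_2)\in\mathcal{C}$. Assuming $g(\overline{t_1})<1$, there exist $\alpha>1$, $\mu\in[0,1]$, and $c_1,c_2\in\widetilde{\mathcal{S}}$ satisfying $\alpha(t_1-v_0)=\mu(c_1-v_0)-(1-\mu)(c_2-v_0)$, or equivalently $\mu c_1-(1-\mu)c_2=\alpha\,\overline{t_1}+(2\mu-1)v_0$. My goal is to repackage this data into a pair $(T_1,T_2)\in\mathcal{C}_{\mathrm{weak}}$ whose $\ell$-value strictly exceeds $\ell(t_1,t_2)$, producing the contradiction.

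The construction I have in mind: set $\rho=\max(\mu,1-\mu)\in[1/2,1]$ and put
\[
T_1=\frac{\mu}{\rho}\,c_1+\Bigl(1-\frac{\mu}{\rho}\Bigr)t_1,\qquad T_2=\frac{1-\mu}{\rho}\,c_2+\Bigl(1-\frac{1-\mu}{\rho}\Bigr)t_2,
\]
both convex combinations of elements of $\widetilde{\mathcal{S}}$ (since $\mu/\rho,(1-\mu)/\rho\in[0,1]$) and hence in $\widetilde{\mathcal{S}}$. Substituting the displayed identity and using $\overline{t_2}=-\overline{t_1}$ (equivalently $t_1=v_0+\overline{t_1}$, $t_2=v_0-\overline{t_1}$), a routine computation produces the key identity $T_2-T_1=-\frac{\alpha-1}{\rho}\,\overline{t_1}+(t_2-t_1)$. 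In the case $p_1=p_2$, substituting $\overline{t_1}=-\ell_0(s_1-s_2)/2$ and $t_2-t_1=\ell_0(s_1-s_2)$ with $\ell_0=\ell(t_1,t_2)$ reduces this to $T_2-T_1=\ell_0\bigl[1+(\alpha-1)/(2\rho)\bigr](s_1-s_2)$, a positive multiple of $s_1-s_2$ whose scalar coefficient strictly exceeds $\ell_0$; hence $(T_1,T_2)\in\mathcal{C}_{\mathrm{weak}}$ with $\ell(T_1,T_2)>\ell_0$, contradicting $(t_1,t_2)\in\mathcal{C}$.

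The case $p_1>p_2$ with $s^\ast\notin\widetilde{\mathcal{S}}$ is where I expect the main obstacle. The same algebra delivers $T_2-T_1=\ell_0[1+(\alpha-1)/(2\rho)](s^\ast-t_1)$, parallel to $s^\ast-t_1$ with the right scaling, but to conclude $(T_1,T_2)\in\mathcal{C}_{\mathrm{weak}}$ in the sense of Sect.~\ref{subsec:HF_conjugate_state} one must additionally know that $T_1,T_2,s^\ast$ are collinear — which is not automatic from the bare construction, since $c_1,c_2$ are only guaranteed to lie in $\widetilde{\mathcal{S}}$. My plan to circumvent this is to restrict the whole argument to the finite-dimensional affine subspace $\mathrm{Aff}(\{s^\ast,t_1,c_1,c_2\})$, which is closed in $\widetilde{V}$ and isomorphic to a Euclidean space by Corollary~\ref{cor:fin_dim_is_Euclidean}, and then exploit the non-uniqueness of the representation $\alpha\overline{t_1}=\mu(c_1-v_0)-(1-\mu)(c_2-v_0)$ — by convex averaging of alternative representations, for instance — to replace $c_1,c_2$ by points on the line through $s^\ast,t_1,t_2$. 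Once collinearity is secured, the strict improvement $\ell(T_1,T_2)>\ell(t_1,t_2)$ carries through as in the equiprobable case, yielding the contradiction and therefore $g(\overline{t_1})\ge 1$, and the lemma follows.
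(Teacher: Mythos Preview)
Your argument for $g(\overline{t_1})\le 1$ and for the equiprobable case $p_1=p_2$ is correct; in fact your unified construction via $\rho=\max(\mu,1-\mu)$ is a bit cleaner than the paper's case split $\lambda\le 1/2$ versus $\lambda\ge 1/2$, though the underlying idea is the same.

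The genuine gap is in the case $p_1>p_2$, $s^\ast\notin\widetilde{\mathcal{S}}$. You correctly identify that your pair $(T_1,T_2)$ satisfies $T_2-T_1=\ell_0\bigl[1+(\alpha-1)/(2\rho)\bigr](s^\ast-t_1)$, but that membership in $\mathcal{C}_{\mathrm{weak}}$ requires $T_2-T_1$ to be a positive multiple of $s^\ast-T_1$, i.e.\ collinearity of $T_1,T_2,s^\ast$. Your proposed fix---replacing $c_1,c_2$ by points on the line $\mathrm{Aff}(\{t_1,s^\ast\})$ via ``non-uniqueness of the representation'' and convex averaging---does not work. The hypothesis $g(\overline{t_1})<1$ tells you only that \emph{some} representation $\alpha\overline{t_1}=\mu(c_1-v_0)-(1-\mu)(c_2-v_0)$ exists with $\alpha>1$; there is no reason whatsoever that a representation exists with $c_1,c_2$ on the line through $t_1$ and $s^\ast$. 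Indeed, the intersection of $\widetilde{\mathcal{S}}$ with that line could be exactly the segment $\overline{t_1t_2}$, in which case the Minkowski functional restricted to that line would already equal $1$ at $\overline{t_1}$; the strict inequality $g(\overline{t_1})<1$ comes precisely from $c_1,c_2$ lying \emph{off} the line.

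The paper resolves this differently: rather than trying to stay on the line $\mathrm{Aff}(\{t_1,s^\ast\})$, it algebraically rearranges the identity $\lambda s-(1-\lambda)s'+(1-2\lambda)v_0=c\,\overline{t_1}$ (your $c_1,c_2$ are the paper's $s,s'$) together with $\ell s^\ast=t_2-(1-\ell)t_1$ into a relation with all nonnegative coefficients, of the form
\[
\lambda(2-2\ell)s+\ell(c+2\lambda-1)s^\ast=(1-\lambda)(2-2\ell)s'+\mu t_2
\]
(or a symmetric variant when the auxiliary quantity $\mu=(2-\ell)(2\lambda-1)+c\ell$ is negative). Dividing through, this reads $(1-\alpha)s+\alpha s^\ast=s''$ with $s''$ a convex combination of $s'$ and $t_2$, hence $s''\in\widetilde{\mathcal{S}}$, and one checks $\alpha>\ell$. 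The resulting pair $(s,s'')$ lies on the line through $s$ and $s^\ast$---a \emph{new} line, not the original one through $t_1$---so it belongs to $\mathcal{C}_{\mathrm{weak}}$ automatically, with $\ell(s,s'')=\alpha>\ell(t_1,t_2)$. That is the missing idea: build the improved pair on the line $\mathrm{Aff}(\{s,s^\ast\})$ (or $\mathrm{Aff}(\{s',s^\ast\})$), not on $\mathrm{Aff}(\{t_1,s^\ast\})$.
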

\begin{proof}
First, note that $g(\overline{t_1}) \leq 1$ since $\overline{t_1} \in \widetilde{C}$.
We show that $g(\overline{t_1}) \geq 1$, or equivalently, there does not exist an element $v \in \widetilde{C}$ and $c > 1$ such that $v = c \overline{t_1}$.
Assume contrary that such a pair $(v,c)$ exists.
As mentioned before, this $v$ is of the form $v = \lambda x - (1 - \lambda) x'$ with $x,x' \in C$ and $0 \leq \lambda \leq 1$, therefore $\lambda x - (1 - \lambda) x' = c \overline{t_1} = -c \overline{t_2}$.
Moreover, by the definition of $C$, we have $x = s - v_0$ and $x' = s' - v_0$ for some $s,s' \in \widetilde{\mathcal{S}}$, therefore
\begin{displaymath}
v = \lambda s - (1 - \lambda) s' + (1 - 2 \lambda) v_0 = c \overline{t_1} = -c \overline{t_2} \enspace.
\end{displaymath}
Note also that $t_2 - t_1 = \overline{t_2} - \overline{t_1} = 2 \overline{t_2} = -2 \overline{t_1}$.
From now, we show that we can construct a pair $(t'_1,t'_2) \in \mathcal{C}_{\mathrm{weak}}'$ (by using the convexity of $\widetilde{\mathcal{S}}$) such that $\ell(t'_1,t'_2) > \ell(t_1,t_2)$, contradicting the assumption $(t_1,t_2) \in \mathcal{C}$.

First we consider the case that $p_1 = p_2$.
If $\lambda \leq 1/2$, then we have
\begin{displaymath}
(1 - \lambda) s' - \lambda s - (1 - 2\lambda) t_1 = -v - (1 - 2 \lambda) \overline{t_1} = (c + 1 - 2\lambda) \overline{t_2} \enspace,
\end{displaymath}
therefore $s' - s'' = \alpha (t_2 - t_1)$, where $s'' = (\lambda s + (1 - 2\lambda) t_1) / (1 - \lambda) \in \widetilde{\mathcal{S}}$ (note that $\widetilde{\mathcal{S}}$ is convex) and $\alpha = (c + 1 - 2\lambda) / (2 - 2\lambda)$.
Since $c > 1$, we have $\alpha > 1$, therefore $(s'',s') \in \mathcal{C}'_{\mathrm{weak}}$ and $\ell(s'',s') = \alpha \ell(t_1,t_2) > \ell(t_1,t_2)$, as desired.
Similarly, if $\lambda \geq 1/2$, then we have
\begin{displaymath}
(2\lambda - 1) t_2 + (1 - \lambda) s' - \lambda s = -v + (2\lambda - 1) \overline{t_2} = (c + 2\lambda - 1) \overline{t_2} \enspace,
\end{displaymath}
therefore $s'' - s = \alpha (t_2 - t_1)$ where $s'' = (2 - \lambda^{-1}) t_2 + (\lambda^{-1} - 1) s' \in \widetilde{\mathcal{S}}$ and $\alpha = (c + 2\lambda - 1) / (2\lambda) > 1$.
Thus we have $(s,s'') \in \mathcal{C}'_{\mathrm{weak}}$ and $\ell(s,s'') > \ell(t_1,t_2)$, as desired.

Secondly, we consider the case that $p_1 > p_2$ and $s^{\ast} \not\in \widetilde{\mathcal{S}}$.
Put $\ell = \ell(t_1,t_2)$ for simplicity.
Note that $0 < \ell < 1$ and
\begin{displaymath}
\ell s^{\ast} = t_2 - (1 - \ell) t_1 = 2 v_0 - (2 - \ell) t_1 = (2 - \ell) t_2 - (2 - 2 \ell) v_0 \enspace,
\end{displaymath}
while
\begin{displaymath}
v = \lambda s - (1 - \lambda) s' + (1 - 2 \lambda) v_0 = c t_1 - c v_0 = c v_0 - c t_2 \enspace.
\end{displaymath}
Put $\mu = (2 - \ell)(2\lambda - 1) + c \ell$.
If $\mu \geq 0$, then the above relations imply that
\begin{displaymath}
\lambda (2 - 2\ell) s + \ell(c + 2 \lambda - 1) s^{\ast}
= (1 - \lambda)(2 - 2\ell) s' + \mu t_2 \enspace.
\end{displaymath}
Now the coefficients of $s$, $s^{\ast}$, $s'$, and $t_2$ in this equality are all nonnegative, and the sums of the two coefficients in the left-hand side and in the right-hand side, respectively, are positive and equal to each other; namely,
\begin{displaymath}
\lambda (2 - 2\ell) + \ell(c + 2\lambda - 1) = (1 - \lambda)(2 - 2\ell) + \mu = c\ell + 2 \lambda - \ell > 0 \enspace.
\end{displaymath}
Thus by the convexity of $\widetilde{\mathcal{S}}$, we have $(1 - \alpha) s + \alpha s^{\ast} = s''$ for some $s'' \in \widetilde{\mathcal{S}}$, where
\begin{displaymath}
\alpha = \frac{ \ell( c + 2\lambda - 1) }{ c\ell + 2\lambda - \ell }
= 1 - \frac{ 2 \lambda (1 - \ell) }{ c\ell + 2\lambda - \ell } \in \left(\ell,1\right]
\end{displaymath}
(note that $0 < \ell < 1$ and $c > 1$).
Thus we have $(s,s'') \in \mathcal{C}'_{\mathrm{weak}}$ and $\ell(s,s'') = \alpha > \ell$, as desired.
Similarly, if $\mu < 0$, then we have
\begin{displaymath}
2 \lambda s + |\mu| t_1 + \ell(c + 1 - 2 \lambda) s^{\ast} = (2 - 2\lambda) s' \enspace.
\end{displaymath}
Since $c > 1$, all the four coefficients in this equality are nonnegative, and the sum of the three coefficients in the left-hand side is equal to the coefficient $2 - 2\lambda > 0$ in the right-hand side; namely, 
\begin{displaymath}
2 \lambda + |\mu| + \ell(c + 1 - 2 \lambda) = 2 - 2\lambda > 0 \enspace.
\end{displaymath}
Thus by the convexity of $\widetilde{\mathcal{S}}$, we have $(1 - \alpha) s'' + \alpha s^{\ast} = s'$ for some $s'' \in \widetilde{\mathcal{S}}$, where $\alpha = \ell(c + 1 - 2\lambda) / (2 - 2\lambda) \in \left(\ell,1\right]$ (note that $\ell > 0$ and $c > 1$).
Thus we have $(s'',s') \in \mathcal{C}'_{\mathrm{weak}}$ and $\ell(s'',s') = \alpha > \ell$, as desired.

Hence our claim holds in all cases, therefore Lemma \ref{lem:t_i_extremal} holds.
\end{proof}
Thus by Theorem \ref{thm:Hahn-Banach}, the functional $f$ on $W'$ extends to an $f' \in \mathcal{L}(\widetilde{V})$ such that $|f'(v)| \leq g(v)$ for any $v \in \widetilde{V}$.
Since $f'|_{W'} = f$, we have $f'(\overline{t_1}) = 1$, $f'(\overline{t_2}) = -1$ and $|f'(x)| \leq g(x) \leq 1$ for any $x \in \widetilde{C}$, therefore $f'(C) \subset \left[-1,1\right]$.
By putting $\alpha = f'(v_0)$, it follows that
\begin{displaymath}
f'(t_1) = \alpha + 1 \enspace,\enspace f'(t_2) = \alpha - 1 \enspace,\enspace f'(\widetilde{\mathcal{S}}) \subset \left[\alpha - 1,\alpha + 1\right] \enspace,
\end{displaymath}
therefore the restriction of $f'$ to $V$ is continuous.
Our desired virtual effect $\widetilde{e}$ can be constructed directly from this $f'$ if $f'$ is also continuous on $\widetilde{V}$; however, this is not guaranteed in general.

Thus, instead, by using Theorem \ref{thm:extension_continuous_to_continuous}, we take a continuous linear functional $\overline{f}$ on $\widetilde{V}$ such that $\overline{f}|_{V} = f'|_{V}$.
Note that $\overline{f}(\mathcal{S}) \subset \left[\alpha - 1,\alpha + 1\right]$ since $\mathcal{S} \subset \widetilde{\mathcal{S}} \cap V$, therefore we have $\overline{f}(\widetilde{\mathcal{S}}) \subset \left[\alpha - 1,\alpha + 1\right]$ since $\widetilde{\mathcal{S}} = \mathrm{cl}_{\widetilde{V}}(\mathcal{S})$.
From now, we show that $\overline{f}(t_1) = \alpha + 1$ and $\overline{f}(t_2) = \alpha - 1$.
First, we consider the case $p_1 = p_2$.
Then we have $t_1 - t_2 = c(s_2 - s_1)$ with $c = \ell(t_1,t_2) > 0$, while $s_2 - s_1 \in V$ since $s_1,s_2 \in \mathcal{S}$, therefore
\begin{displaymath}
\overline{f}(t_2 - t_1) = c \overline{f}(s_1 - s_2) = c f'(s_1 - s_2) = f'(t_2 - t_1) = -2 \enspace.
\end{displaymath}
Since $\overline{f}(\widetilde{\mathcal{S}}) \subset \left[\alpha - 1,\alpha + 1\right]$ as mentioned above, we have
\begin{displaymath}
\alpha - 1 \leq \overline{f}(t_2) = \overline{f}(t_1) - 2 \leq \alpha + 1 - 2 = \alpha - 1 \enspace,
\end{displaymath}
therefore $\overline{f}(t_2) = \alpha - 1$ and $\overline{f}(t_1) = \overline{f}(t_2) + 2 = \alpha + 1$.
Secondly, we consider the case $p_1 > p_2$ and $s^{\ast} \not\in \widetilde{\mathcal{S}}$.
Now $t_2 = c s^{\ast} + (1 - c) t_1$ with $0 < c = \ell(t_1,t_2) < 1$, while $s^{\ast} \in V$, therefore
\begin{displaymath}
\overline{f}(t_2) - (1-c) \overline{f}(t_1) = c \overline{f}(s^{\ast}) = c f'(s^{\ast}) = f'(t_2) - (1 - c) f'(t_1) \enspace.
\end{displaymath}
Now we have $\overline{f}(t_1) \leq \alpha + 1 = f'(t_1)$, therefore $\overline{f}(t_2) \leq f'(t_2) = \alpha - 1$ since $1 - c > 0$.
Thus we have $\overline{f}(t_2) = \alpha - 1$ since $\overline{f}(t_2) \geq \alpha - 1$, therefore $\overline{f}(t_1) = f'(t_1) = \alpha + 1$.
Hence we have $\overline{f}(t_1) = \alpha + 1$ and $\overline{f}(t_2) = \alpha - 1$ in any case.

Finally, by the above properties, the affine functional $h = (\overline{f} + 1 - \alpha) / 2$ on $\widetilde{V}$ is continuous and satisfies that $h(t_1) = 1$, $h(t_2) = 0$ and $h(\widetilde{\mathcal{S}}) \subset \left[0,1\right]$.
This implies that $\widetilde{e} = h|_{\widetilde{\mathcal{S}}}$ is a virtual effect that distinguishes $t_1$ and $t_2$.

Hence the proof of Theorem \ref{thm:HE_exist} is concluded.

\paragraph*{Acknowledgments.}
The authors would like to thank Dr.\ Manabu Hagiwara, Dr.\ Kentaro Imafuku, and Professor Hideki Imai, for their significant comments.
A part of this work was supported by Grant-in-Aid for Young Scientists (B) (20700017), The Ministry of Education, Culture, Sports, Science and Technology (MEXT).

\appendix
\section*{Appendix: Proof of Theorem \ref{thm:state_space_embedded}}
In the appendix, we give a proof of Theorem \ref{thm:state_space_embedded}.
In what follows, 
For any convex structure $C$, let $\mathcal{A}^b_{C'}(C)$ be the set of all $f \in \mathcal{A}(C)$ bounded on a subset $C'$ of $C$.
Moreover, for any convex subset $C$ of a t.v.s., let $\mathcal{A}_c(C)$ denote the set of all continuous $f \in \mathcal{A}(C)$.

\section{Construction of $\mathcal{S}$ and $V$}
\label{sec:appendix_S_V}
First, we describe construction of a vector space $V$ and its convex subset $\mathcal{S}$ such that $\mathcal{S}$ is isomorphic to the separated convex structure $\overline{\mathcal{S}_0}$ and $V = \mathrm{Aff}(\mathcal{S})$.
Here we abuse the notations $\mathcal{S}$ and $V$ though these $\mathcal{S}$ and $V$ are in fact not necessarily the same as (but isomorphic to) $\mathcal{S}$ and $V$ in Theorem \ref{thm:state_space_embedded}, respectively.
Although our argument is essentially the standard one (cf., \cite{BBLW07,Gud_book79,Hol_book,KMI08,Mac_book,Oza80}), we give the argument here for the sake of completeness.

Our argument is the following.
In what follows, let $\mathcal{L}(W)$ denote the set of all linear functionals on a vector space $W$; and for any convex structure $C$, let $\mathcal{A}(C)$ denote the set of all affine functionals on $C$.
Then the set $\mathcal{A}(\overline{\mathcal{S}_0})$ forms a vector space with natural addition and scalar multiplication, therefore its dual space $\mathcal{A}(\overline{\mathcal{S}_0})^{\ast} = \mathcal{L}(\mathcal{A}(\overline{\mathcal{S}_0}))$ is also a vector space.
We define an \lq\lq evaluation map'' $\mathsf{ev}_s:\mathcal{A}(\overline{\mathcal{S}_0}) \to \mathbb{R}$ for each $s \in \overline{\mathcal{S}_0}$ by $\mathsf{ev}_s(f) = f(s)$ for $f \in \mathcal{A}(\overline{\mathcal{S}_0})$.
Then a straightforward argument shows that $\mathsf{ev}_s \in \mathcal{A}(\overline{\mathcal{S}_0})^{\ast}$ for every $s \in \overline{\mathcal{S}_0}$, and the map $\psi:\overline{\mathcal{S}_0} \to \mathcal{A}(\overline{\mathcal{S}_0})^{\ast}$, $\psi(s) = \mathsf{ev}_s$, is a homomorphism of convex structures, i.e., $\psi(\langle \lambda,\mu;s,t \rangle) = \lambda \psi(s) + \mu \psi(t)$ for any $s,t \in \overline{\mathcal{S}_0}$.
The fact that $\overline{\mathcal{S}_0}$ is separated (Lemma \ref{lem:separated}) implies that $\psi$ is injective.
Moreover, by fixing an element $v \in \psi(\overline{\mathcal{S}_0})$, the map $\varphi:\overline{\mathcal{S}_0} \to \mathcal{A}(\overline{\mathcal{S}_0})^{\ast}$, $\varphi(s) = \psi(s) - v$, is also an injective homomorphism of convex structures.
Thus $\mathcal{S} = \varphi(\overline{\mathcal{S}_0})$ is a convex subset of the vector space $\mathcal{A}(\overline{\mathcal{S}_0})^{\ast}$ containing the origin of $\mathcal{A}(\overline{\mathcal{S}_0})^{\ast}$.
Now $V = \mathrm{Aff}(\mathcal{S})$ is a linear subspace of $\mathcal{A}(\overline{\mathcal{S}_0})^{\ast}$.
Thus $\mathcal{S}$ and $V$ are obtained.

\section{Topologies on $\mathcal{S}$ and $V$}
\label{sec:appendix_topology_S_V}
Secondly, we give the definition of topologies on $V$ and $\mathcal{S}$.
In what follows, for any vector space $W$, let $\mathcal{L}^b_{C}(W)$ denote the set of all $f \in \mathcal{L}(W)$ bounded on a given subset $C$ of $W$.
For any t.v.s.\ $W$, let $\mathcal{L}_c(W)$ denote the set of all continuous $f \in \mathcal{L}(W)$.
For a convex subset $C$ of a vector space $W$ and a subset $\mathcal{F}$ of $\mathcal{A}(C)$, let $\sigma(C,\mathcal{F})$ denote the weakest topology on $C$ to make every $f \in \mathcal{F}$ continuous.
For a topology $\mathcal{T}$ on a space $X$ and a subset $Y$ of $X$, let $\mathcal{T}|_Y$ denote the relative topology on $Y$ induced by $\mathcal{T}$.
For two topologies $\mathcal{T}$ and $\mathcal{T}'$ on the same set $X$, we write $\mathcal{T} \subset \mathcal{T}'$ to signify that $\mathcal{T}'$ is stronger than or equal to $\mathcal{T}$ (i.e., every $\mathcal{T}$-open subset of $X$ is $\mathcal{T}'$-open).
Moreover, let $\mathcal{E}$ denote the set of all $e \in \mathcal{A}(S)$ such that $e(\mathcal{S}) \subset \left[0,1\right]$.

Now we define the topology $\mathcal{T}(V)$ on $V$ by
\begin{displaymath}
\mathcal{T}(V) = \sigma(V,\mathcal{L}^b_{\mathcal{S}}(V)) \enspace.
\end{displaymath}
This topology makes $V$ a l.c.t.v.s.\ (see e.g., \cite[Chap.\ II, Sect.\ 5]{SW_book}).
Moreover, this $V$ satisfies the following property:
\begin{lemma}
\label{lem:Hausdorff_V}
The t.v.s.\ $V$ is Hausdorff.
\end{lemma}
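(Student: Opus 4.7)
The plan is to verify that the family $\mathcal{L}^b_{\mathcal{S}}(V)$ separates the points of $V$; given this, the Hausdorff property of the weak topology $\mathcal{T}(V) = \sigma(V, \mathcal{L}^b_{\mathcal{S}}(V))$ is automatic. The main idea is to isolate, inside $\mathcal{L}^b_{\mathcal{S}}(V)$, a separating subfamily coming from the effects on $\mathcal{S}$, and thereby reduce the point-separation problem on all of $V$ to the already-established separation of $\mathcal{S}$ (Lemma \ref{lem:separated}).

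First I would observe that each effect $e \in \mathcal{E}$ extends uniquely to an affine functional $\bar{e} : V \to \mathbb{R}$, since $\mathrm{Aff}(\mathcal{S}) = V$ and $0 \in \mathcal{S}$ by construction (so $V = \mathrm{span}(\mathcal{S})$). Its linear part $L_e := \bar{e} - \bar{e}(0)$ takes values in $[-\bar{e}(0),\, 1 - \bar{e}(0)] \subset [-1,1]$ on $\mathcal{S}$, so $L_e \in \mathcal{L}^b_{\mathcal{S}}(V)$. It thus suffices to prove that the subfamily $\{L_e\}_{e \in \mathcal{E}}$ already separates points of $V$.

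Next, given any $w \in V$, I would write $w$ as a finite linear combination of elements of $\mathcal{S}$, split the sum into positive and negative parts, and normalize, obtaining a decomposition
\[
w = A p_+ - B p_-, \qquad p_+, p_- \in \mathcal{S}, \quad A, B \geq 0
\]
(using the convexity of $\mathcal{S}$). A direct computation then gives $L_e(w) = A\, e(p_+) - B\, e(p_-) - (A - B)\, e(0)$ for every $e \in \mathcal{E}$. Assuming $L_e(w) = 0$ for all $e$, I aim to conclude $w = 0$. If either $A$ or $B$ vanishes, the identity collapses to $e(p_\pm) = e(0)$ for all effects, so Lemma \ref{lem:separated} yields $p_\pm = 0$ and hence $w = 0$. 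In the case $A \geq B > 0$ (the case $B > A > 0$ being symmetric), I would set $\lambda := B/A \in (0,1]$ and rearrange the identity into $e(p_+) = \lambda\, e(p_-) + (1 - \lambda)\, e(0)$. The right-hand side is precisely $e$ evaluated at the convex combination $q := \lambda p_- + (1 - \lambda) \cdot 0 \in \mathcal{S}$, so Lemma \ref{lem:separated} gives $p_+ = q = \lambda p_-$, which forces $A p_+ = B p_-$, i.e.\ $w = 0$.

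The main obstacle is this last sub-case $A \neq B$ with both positive: a priori, effects are only known to separate points inside the convex set $\mathcal{S}$, whereas the element $A p_+ - B p_- \in V$ generally leaves $\mathcal{S}$. The crucial trick is to exploit the basepoint $0 \in \mathcal{S}$ together with the convexity of $\mathcal{S}$ to recast the identity as an equality between two genuine elements of $\mathcal{S}$ (namely $p_+$ and $\lambda p_-$), thereby reducing everything to Lemma \ref{lem:separated}.
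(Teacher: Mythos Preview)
Your argument is correct. Both you and the paper reduce the Hausdorff property to the separation of $\mathcal{S}$ by effects (Lemma~\ref{lem:separated}), but the executions differ. The paper works directly with two distinct points $v,v' \in V$: it writes each as an affine combination $\lambda s - \lambda' s'$ (with $\lambda - \lambda' = 1$) and then chooses explicit weights $p,q,r$ so that the common affine map $x \mapsto p s' + q t' + r x$ carries both $v$ and $v'$ back into $\mathcal{S}$, where an effect can separate them. You instead reduce to the kernel statement $\bigcap_{e} \ker L_e = \{0\}$ and exploit the basepoint $0 \in \mathcal{S}$ to turn the vanishing condition $L_e(w)=0$ into an equality $e(p_+) = e(\lambda p_- + (1-\lambda)\cdot 0)$ of effects on two genuine states. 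Your route is a bit more conceptual (it isolates exactly which subfamily of $\mathcal{L}^b_{\mathcal{S}}(V)$ does the separating), while the paper's is more symmetric in $v,v'$ and does not rely explicitly on the normalisation $0 \in \mathcal{S}$; but the convexity trick at the heart of both is essentially the same.
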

\begin{proof}
First, since $\mathcal{S}$ is convex, an elementary argument shows that the affine hull $\mathrm{Aff}(\mathcal{S}) = V$ of $\mathcal{S}$ consists of all elements of the form $\lambda s - \lambda' s'$ with $s,s' \in \mathcal{S}$, $\lambda \geq 1$ and $\lambda - \lambda' = 1$.
Let $v = \lambda s - \lambda' s'$ and $v' = \mu t - \mu' t'$ be distinct elements of $V$ written in the above form.
Now put
\begin{displaymath}
p = \frac{ \lambda - 1 }{ \lambda + \mu - 1} \enspace,\enspace
q = \frac{ \mu - 1 }{ \lambda + \mu - 1 } \enspace,\enspace
r = \frac{ 1 }{ \lambda + \mu - 1 } \enspace,
\end{displaymath}
therefore $p,q \geq 0$, $r > 0$ and $p + q + r = 1$.
Moreover, put
\begin{displaymath}
w = p s' + q t' + r v \enspace,\enspace
w' = p s' + q t' + r v' \enspace.
\end{displaymath}
Then $w \neq w'$ since $v \neq v'$ and $r > 0$, while we have
\begin{displaymath}
w = r \lambda s + (p - r \lambda') s' + q t' = (1 - q) s + q t' \in \mathcal{S}
\end{displaymath}
since $\mathcal{S}$ is convex, and similarly $w' \in \mathcal{S}$.
Since $\mathcal{S} \simeq \overline{\mathcal{S}_0}$ is separated by Lemma \ref{lem:separated}, there exists an $e \in \mathcal{E}$ such that $e(w) \neq e(w')$.
Now by the definitions of $w$ and $w'$, the affine extension $f$ of $e$ to $V$ satisfies $f \in \mathcal{L}^b_{\mathcal{S}}(V)$ and $f(v) \neq f(v')$.
Thus $V$ is Hausdorff with respect to $\sigma(V,\mathcal{L}^b_{\mathcal{S}}(V))$.
Hence Lemma \ref{lem:Hausdorff_V} holds.
\end{proof}

On the other hand, the induced topology on $\mathcal{S}$ satisfies the following:
\begin{lemma}
\label{lem:induced_topology_S}
Two topologies $\mathcal{T}(V)|_{\mathcal{S}}$ and $\sigma(\mathcal{S},\mathcal{E})$ on $\mathcal{S}$ coincide.
\end{lemma}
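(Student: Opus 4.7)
The plan is to establish the two inclusions $\sigma(\mathcal{S},\mathcal{E}) \subset \mathcal{T}(V)|_{\mathcal{S}}$ and $\mathcal{T}(V)|_{\mathcal{S}} \subset \sigma(\mathcal{S},\mathcal{E})$ separately, by showing that the generating family of affine functionals on $\mathcal{S}$ (the effects) and the generating family of linear functionals on $V$ (those bounded on $\mathcal{S}$) differ only by affine normalization once restricted to $\mathcal{S}$. This should be essentially bookkeeping; the content is already encoded in the construction of $V$ in Appendix \ref{sec:appendix_S_V}, where by choice of the base point $v \in \psi(\overline{\mathcal{S}_0})$ the origin of $V$ lies inside $\mathcal{S}$.

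For the inclusion $\sigma(\mathcal{S},\mathcal{E}) \subset \mathcal{T}(V)|_{\mathcal{S}}$, I would fix $e \in \mathcal{E}$ and use $\mathrm{Aff}(\mathcal{S}) = V$ to extend $e$ uniquely to an affine functional on $V$. Setting $\alpha = e(0)$ (which makes sense since $0 \in \mathcal{S} \subset V$), this extension decomposes as $f + \alpha$ with $f \in \mathcal{L}(V)$. From $e(\mathcal{S}) \subset \left[0,1\right]$ I would read off $f(\mathcal{S}) \subset \left[-\alpha, 1-\alpha\right]$, so $f \in \mathcal{L}^b_{\mathcal{S}}(V)$. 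Hence $f$ is continuous with respect to $\mathcal{T}(V) = \sigma(V,\mathcal{L}^b_{\mathcal{S}}(V))$, and $e = (f+\alpha)|_{\mathcal{S}}$ is continuous with respect to $\mathcal{T}(V)|_{\mathcal{S}}$. Since $\sigma(\mathcal{S},\mathcal{E})$ is by definition the weakest topology making every effect continuous, the first inclusion follows.

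For the converse $\mathcal{T}(V)|_{\mathcal{S}} \subset \sigma(\mathcal{S},\mathcal{E})$, I would observe that the relative topology $\mathcal{T}(V)|_{\mathcal{S}}$ admits as a subbasis the sets $(f|_{\mathcal{S}})^{-1}(U)$ for $f \in \mathcal{L}^b_{\mathcal{S}}(V)$ and $U \subset \mathbb{R}$ open, so it is enough to show that every such restriction $f|_{\mathcal{S}}$ is continuous with respect to $\sigma(\mathcal{S},\mathcal{E})$. Given $f \in \mathcal{L}^b_{\mathcal{S}}(V)$, pick $M>0$ with $f(\mathcal{S}) \subset \left[-M, M\right]$ and put $e = (f + M)/(2M)$ on $\mathcal{S}$. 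Then $e$ is affine with image in $\left[0,1\right]$, hence $e \in \mathcal{E}$, so $e$ is continuous with respect to $\sigma(\mathcal{S},\mathcal{E})$; and $f|_{\mathcal{S}} = 2Me - M$ is an affine transformation of $e$, thus continuous as well.

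The main obstacle, such as it is, lies not in either inclusion but in making sure the decomposition into linear part plus constant is well defined, which is precisely why the construction in Appendix \ref{sec:appendix_S_V} was arranged to place the origin of $V$ inside $\mathcal{S}$. With that convention in hand, the argument above should complete the proof of Lemma \ref{lem:induced_topology_S}.
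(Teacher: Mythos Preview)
Your proof is correct and follows essentially the same approach as the paper: both directions proceed by showing that effects and linear functionals bounded on $\mathcal{S}$ differ only by an affine rescaling, so the topologies they generate on $\mathcal{S}$ coincide. The paper phrases the second inclusion in terms of open sets rather than continuity of the generators, but this is a cosmetic difference.
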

\begin{proof}
In the proof, put $\mathcal{T} = \mathcal{T}(V) = \sigma(V,\mathcal{L}^b_{\mathcal{S}}(V))$.
First, we show that each $e \in \mathcal{E}$ is ($\mathcal{T}|_{\mathcal{S}}$)-continuous.
Since $\mathrm{Aff}(\mathcal{S}) = V$, this $e$ extends to an affine functional $f$ on $V$ such that $f(\mathcal{S})$ is bounded, therefore $f + \alpha \in \mathcal{L}^b_{\mathcal{S}}(V)$ for some $\alpha \in \mathbb{R}$.
Thus $f + \alpha$ is $\mathcal{T}$-continuous by the definition of $\mathcal{T}$, therefore $f$ is also $\mathcal{T}$-continuous and $e = f|_{\mathcal{S}}$ is ($\mathcal{T}|_{\mathcal{S}}$)-continuous as desired.
This implies that $\sigma(\mathcal{S},\mathcal{E}) \subset \mathcal{T}|_{\mathcal{S}}$.

Now it suffices to show that each ($\mathcal{T}|_{\mathcal{S}}$)-open subset $U$ of $\mathcal{S}$ is $\sigma(\mathcal{S},\mathcal{E})$-open.
Take a $\mathcal{T}$-open subset $U'$ of $V$ such that $U = U' \cap \mathcal{S}$.
Then for each $s \in U \subset U'$, by the definition of $\mathcal{T}$, there exist a finite number of $f_i \in \mathcal{L}^b_{\mathcal{S}}(V)$ and the same number of open subsets $W_i \subset \mathbb{R}$ such that $s \in \bigcap_i f_i^{-1}(W_i) \subset U'$.
Since $s \in \mathcal{S}$, we have $s \in \bigcap_i (\mathcal{S} \cap f_i^{-1}(W_i)) \subset U$, therefore it suffices to show that each subset $\mathcal{S} \cap f_i^{-1}(W_i)$ of $\mathcal{S}$ is $\sigma(\mathcal{S},\mathcal{E})$-open.
Since $f_i(\mathcal{S})$ is bounded, there exist $\alpha_i,\beta_i \in \mathbb{R}$ such that $\alpha_i \neq 0$ and the functional $g_i = \alpha_i f_i + \beta_i$ satisfies $g_i(\mathcal{S}) \subset \left[0,1\right]$, therefore $e_i = g_i|_{\mathcal{S}} \in \mathcal{E}$.
Moreover, we have $f_i^{-1}(W_i) = g_i^{-1}(\alpha_i W_i + \beta_i)$ and $W'_i = \alpha_i W_i + \beta_i$ is also an open subset of $\mathbb{R}$.
Thus $\mathcal{S} \cap f_i^{-1}(W_i) = \mathcal{S} \cap g_i^{-1}(W'_i) = e_i^{-1}(W'_i)$, that is $\sigma(\mathcal{S},\mathcal{E})$-open by the definition of $\sigma(\mathcal{S},\mathcal{E})$.
Hence Lemma \ref{lem:induced_topology_S} holds.
\end{proof}

\section{The Completions of $\mathcal{S}$ and $V$}
\label{sec:appendix_completion_S_V}
To proceed the proof of Theorem \ref{thm:state_space_embedded} further, we recall the following notion: The \emph{completion} of a uniform space $X$ is a complete uniform space $\widetilde{X}$ such that $X$ is a dense subspace of $\widetilde{X}$.
(See e.g., \cite[Chap.\ II]{Bou_book} or \cite{SW_book} for properties of uniform spaces).
The completion $\widetilde{X}$ of such a space $X$ always exists, and $\widetilde{X}$ is Hausdorff if and only if $X$ is Hausdorff.
Since any t.v.s.\ is a uniform space (see e.g., Proposition 1.4 in \cite[Chap.\ I]{SW_book}), the completion $\widetilde{V}$ of the Hausdorff t.v.s.\ $V$ exists in the above sense.
Moreover, this $\widetilde{V}$ also admits a structure of a t.v.s., and now $\widetilde{V}$ is a complete Hausdorff t.v.s.\ and $V$ is a topological vector subspace of $\widetilde{V}$ (with the induced topology equal to $\sigma(V,\mathcal{L}^b_{\mathcal{S}}(V))$) that is dense in $\widetilde{V}$ (see e.g., Proposition 1.5 in \cite[Chap.\ I]{SW_book}).
Here we use the conventional notation $\widetilde{V}$ for the completion of $V$, though it is not necessarily the same as (but is closely related to) the $\widetilde{V}$ in Theorem \ref{thm:state_space_embedded}.

Since $\mathcal{S}$ is convex, the closure $\widetilde{\mathcal{S}} = \mathrm{cl}_{\widetilde{V}}(\mathcal{S})$ of $\mathcal{S}$ in $\widetilde{V}$ is also convex in $\widetilde{V}$ (see e.g., Proposition 1.2 in \cite[Chap.\ II]{SW_book}).
Again, note that this $\widetilde{\mathcal{S}}$ does not necessarily coincide with (but is closely related to) the $\widetilde{\mathcal{S}}$ in Theorem \ref{thm:state_space_embedded}.
Now the closed subset $\widetilde{\mathcal{S}}$ of the complete t.v.s.\ $\widetilde{V}$ is also complete (as a uniform subspace), therefore $\widetilde{\mathcal{S}}$ is the completion of $\mathcal{S}$ (as a uniform subspace of $V$) since $\mathcal{S}$ is dense in $\widetilde{\mathcal{S}}$.
We would like to show that $\widetilde{\mathcal{S}}$ is compact; we give a lemma for the purpose.
Here we use the following terminology.
A subset $B$ of a t.v.s.\ $W$ is called \emph{bounded} if for any $0$-neighborhood (i.e., neighborhood of the origin) $U$ of $W$, there exists a $\lambda \in \mathbb{R}$ such that $B \subset \lambda U$.
Then we have the following:
\begin{lemma}
\label{lem:state_space_bounded}
The convex subset $\mathcal{S}$ of $V$ is bounded in $V$.
\end{lemma}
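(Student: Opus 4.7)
The plan is to verify boundedness of $\mathcal{S}$ directly from the definitions, using the fact that the topology on $V$ is the weak topology $\sigma(V, \mathcal{L}^b_{\mathcal{S}}(V))$ constructed in Appendix \ref{sec:appendix_topology_S_V}. Since boundedness in a t.v.s.\ can be tested on any base of $0$-neighborhoods (if $\mathcal{S} \subset \lambda U_0$ for every basic $0$-neighborhood $U_0$, then for any $0$-neighborhood $U$ pick a basic $U_0 \subset U$ and conclude $\mathcal{S} \subset \lambda U_0 \subset \lambda U$), it suffices to restrict attention to the standard basic $0$-neighborhoods of the weak topology.

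First I would write down an arbitrary basic $0$-neighborhood of $V$ in the form
\begin{displaymath}
U = \bigcap_{i=1}^{n} f_i^{-1}\bigl((-\varepsilon_i,\varepsilon_i)\bigr),
\end{displaymath}
where $f_1,\dots,f_n \in \mathcal{L}^b_{\mathcal{S}}(V)$ and $\varepsilon_1,\dots,\varepsilon_n > 0$. The crucial observation is that by the very definition of $\mathcal{L}^b_{\mathcal{S}}(V)$, each $f_i$ is bounded on $\mathcal{S}$: there exists $M_i \geq 0$ with $|f_i(s)| \leq M_i$ for every $s \in \mathcal{S}$.

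Next I would choose $\lambda > 0$ large enough that $\lambda \varepsilon_i > M_i$ for every $i$, for instance $\lambda = 1 + \max_{1 \leq i \leq n} M_i/\varepsilon_i$. Then for any $s \in \mathcal{S}$ and any $i$, one has $|f_i(s/\lambda)| = |f_i(s)|/\lambda \leq M_i/\lambda < \varepsilon_i$, so $s/\lambda \in U$, i.e., $s \in \lambda U$. This proves $\mathcal{S} \subset \lambda U$, which is exactly the definition of boundedness.

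There is essentially no obstacle here: the lemma is a direct unpacking of the definitions, and its real content is simply that $V$ was given the coarsest topology for which all linear functionals bounded on $\mathcal{S}$ are continuous, so $\mathcal{S}$ is automatically bounded in this topology. The only point that requires a sentence of care is the reduction to basic $0$-neighborhoods at the start, which is a general fact about topological vector spaces.
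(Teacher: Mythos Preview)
Your proof is correct and follows essentially the same route as the paper's own argument: reduce to a basic $0$-neighborhood $\bigcap_i f_i^{-1}(U_i')$ coming from finitely many $f_i \in \mathcal{L}^b_{\mathcal{S}}(V)$, use boundedness of each $f_i(\mathcal{S})$ to find a single $\lambda$ that works for all $i$, and conclude $\mathcal{S} \subset \lambda U$. The only cosmetic difference is that you specialize the $U_i'$ to symmetric intervals $(-\varepsilon_i,\varepsilon_i)$, whereas the paper keeps them as arbitrary open neighborhoods of $0$ in $\mathbb{R}$; this changes nothing of substance.
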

\begin{proof}
By the definition of the topology on $V$, each $0$-neighborhood $U$ of $V$ contains an open $0$-neighborhood of the form $\bigcap_i f_i^{-1}(U'_i)$ with finitely many $f_i \in \mathcal{L}^b_{\mathcal{S}}(V)$ and the same number of open subsets $U'_i$ of $\mathbb{R}$ containing $0$.
Since each $f_i(\mathcal{S}) \subset \mathbb{R}$ is bounded, there is a $\lambda > 0$ such that $f_i(\mathcal{S}) \subset \lambda U'_i$ for every $i$.
Thus $\mathcal{S} \subset \lambda f_i^{-1}(U'_i)$ for every $i$, therefore $\mathcal{S} \subset \lambda U$.
Hence the lemma holds.
\end{proof}
Now note that the topology $\mathcal{T}(V) = \sigma(V,\mathcal{L}^b_{\mathcal{S}}(V))$ of $V$ is a weak topology, i.e., it coincides with $\sigma(V,\mathcal{L}_c(V))$ where continuity of each $f \in \mathcal{L}_c(V)$ is with respect to $\mathcal{T}(V)$ (namely, every member of $\mathcal{L}^b_{\mathcal{S}}(V)$ is continuous with respect to $\sigma(V,\mathcal{L}_c(V))$ and every member of $\mathcal{L}_c(V)$ is continuous with respect to $\mathcal{T}(V)$).
Since $\mathcal{S} \subset V$ is bounded by Lemma \ref{lem:state_space_bounded}, and $V$ is l.c., it follows that $\mathcal{S}$ is \emph{precompact}, i.e., the completion $\widetilde{\mathcal{S}}$ of $\mathcal{S}$ is compact (see e.g., Corollary 2 of Proposition 5.5 in \cite[Chapter IV]{SW_book}).
The current situation is summarized as follows:
\begin{itemize}
\item $\mathcal{S} \simeq \overline{\mathcal{S}_0}$ is a convex subset of a l.c.\ Hausdorff t.v.s.\ $V$ containing the origin, with $\mathrm{Aff}(\mathcal{S}) = V$, such that the induced topology on $\mathcal{S}$ is $\sigma(\mathcal{S},\mathcal{E})$;
\item the topology $\mathcal{T}(V)$ of $V$ is $\sigma(V,\mathcal{L}^b_{\mathcal{S}}(V)) = \sigma(V,\mathcal{L}_c(V))$;
\item $\widetilde{V}$ is a complete Hausdorff t.v.s.\ containing $V$ as a dense topological vector subspace;
\item $\widetilde{\mathcal{S}} = \mathrm{cl}_{\widetilde{V}}(\mathcal{S})$ is the completion of $\mathcal{S}$ that is compact and convex.
\end{itemize}

\section{Existence of the Objects in Theorem \ref{thm:state_space_embedded}}
\label{sec:appendix_modification_S_V}
From now, we modify the above objects to obtain the objects in Theorem \ref{thm:state_space_embedded}.
In what follows, for a t.v.s.\ $W$, let $\sigma(W)$ denote the weak topology $\sigma(W,\mathcal{L}_c(W))$ on $W$.
The following facts will be used in our argument:
\begin{proposition}
[{Corollary 2 of Theorem 4.1 in \cite[Chap.\ IV]{SW_book}}]
\label{prop:weak_top_quotient}
Let $W$ be a l.c.t.v.s.\ with topology $\mathcal{T}$, $W'$ a vector subspace of $W$, and $\overline{W} = W/W'$ the quotient space.
Then the weak topology $\sigma(W')$ on $W'$ with respect to $\mathcal{T}|_{W'}$ coincides with $\sigma(W)|_{W'}$, and the weak topology $\sigma(\overline{W})$ on $\overline{W}$ with respect to the quotient topology induced by $\mathcal{T}$ is the quotient topology induced by $\sigma(W)$.
\end{proposition}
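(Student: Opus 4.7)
For the subspace equality $\sigma(W)|_{W'}=\sigma(W')$, the plan reduces to a dual-space identification plus Hahn--Banach extension.  The inclusion $\sigma(W)|_{W'}\subseteq\sigma(W')$ is immediate, since each restriction $f|_{W'}$ with $f\in\mathcal{L}_c(W,\mathcal{T})$ automatically lies in $\mathcal{L}_c(W',\mathcal{T}|_{W'})$.  Conversely, given $g\in\mathcal{L}_c(W',\mathcal{T}|_{W'})$, the Hahn--Banach extension theorem in the locally convex setting (Theorem \ref{thm:extension_continuous_to_continuous}) provides an $f\in\mathcal{L}_c(W,\mathcal{T})$ with $f|_{W'}=g$, whence each subbasic $\sigma(W')$-open set $g^{-1}(U)=f^{-1}(U)\cap W'$ is already $\sigma(W)|_{W'}$-open.

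For the quotient equality, let $\pi:W\to\overline{W}$ denote the projection and $\mathcal{T}_q$ the quotient topology on $\overline{W}$ induced by $\sigma(W)$.  Under the standard identification $\bar f\mapsto\bar f\circ\pi$, $\mathcal{L}_c(\overline{W},\overline{\mathcal{T}})$ corresponds to the annihilator $\{f\in\mathcal{L}_c(W,\mathcal{T}):f|_{W'}=0\}$.  The easy inclusion $\sigma(\overline{W})\subseteq\mathcal{T}_q$ is a universal-property calculation: by the universal property of $\mathcal{T}_q$ it suffices that $\pi:(W,\sigma(W))\to(\overline{W},\sigma(\overline{W}))$ be continuous, and by the universal property of $\sigma(\overline{W})$ that reduces to continuity of each $\bar f\circ\pi$ on $(W,\sigma(W))$, which is automatic since $\bar f\circ\pi\in\mathcal{L}_c(W,\mathcal{T})=\mathcal{L}_c(W,\sigma(W))$.

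The reverse inclusion $\mathcal{T}_q\subseteq\sigma(\overline{W})$ is the main obstacle of the proof.  Since $\pi$ is the quotient of a topological vector space by a linear subspace it is an open map, so every $\mathcal{T}_q$-neighborhood of $0$ contains an image $\pi(U)$ of a basic $\sigma(W)$-neighborhood $U=\bigcap_{i=1}^{n}f_i^{-1}(U_i)$ with $f_i\in\mathcal{L}_c(W,\mathcal{T})$ and $U_i$ open in $\mathbb{R}$.  Setting $F=(f_1,\ldots,f_n):W\to\mathbb{R}^n$, $A=F(W')$, and letting $\bar F:\overline{W}\to\mathbb{R}^n/A$ denote the induced linear map together with the canonical quotient $q:\mathbb{R}^n\to\mathbb{R}^n/A$, one checks the identity
\begin{displaymath}
\pi(U)=\bar F^{-1}\bigl(q(U_1\times\cdots\times U_n)\bigr).
\end{displaymath}
Choosing a basis of a complement of $A$ inside $F(W)$ exhibits the components of $\bar F$ as linear combinations of the $f_i$ that vanish on $W'$; through the dual identification these lie in $\mathcal{L}_c(\overline{W},\overline{\mathcal{T}})$, so $\bar F$ is $\sigma(\overline{W})$-continuous, and since $q(U_1\times\cdots\times U_n)$ is open in the finite-dimensional space $\mathbb{R}^n/A$, the set $\pi(U)$ is $\sigma(\overline{W})$-open; translation to arbitrary base points finishes the job.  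The delicate point is exactly this factorization: the original $f_i$ need not vanish on $W'$, and one has to extract from them enough annihilator functionals to characterize $\pi(U)$ within the weak topology of the quotient.
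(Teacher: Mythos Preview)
The paper does not prove this proposition; it is simply quoted from Schaefer--Wolff (Corollary~2 of Theorem~4.1, Chapter~IV) as a background fact and used as a tool in Appendices~\ref{sec:appendix_modification_S_V}--\ref{sec:appendix_uniqueness}. So there is no ``paper's proof'' to compare against.

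Your argument is correct and is essentially the standard one. The subspace half is exactly Hahn--Banach (indeed this is Theorem~\ref{thm:extension_continuous_to_continuous} in the paper). For the quotient half, the identification of $\mathcal{L}_c(\overline{W},\overline{\mathcal{T}})$ with the annihilator $\{f\in\mathcal{L}_c(W,\mathcal{T}):f|_{W'}=0\}$ is the right starting point, and your finite-dimensional factorization
\[
\pi(U)=\bar F^{-1}\bigl(q(U_1\times\cdots\times U_n)\bigr)
\]
is the crux; it is easily verified since $\pi(x)\in\pi(U)$ iff $F(x)+a\in U_1\times\cdots\times U_n$ for some $a\in A=F(W')$, which is exactly $q(F(x))\in q(U_1\times\cdots\times U_n)$. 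The components of $\bar F$ are then linear combinations of the $f_i$ annihilating $W'$, hence lie in $\mathcal{L}_c(\overline{W},\overline{\mathcal{T}})$, and $q(U_1\times\cdots\times U_n)$ is open because quotient maps of finite-dimensional vector spaces are open.

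One minor remark: the phrase ``a complement of $A$ inside $F(W)$'' is more specific than necessary---any basis of $\mathbb{R}^n/A$ yields coordinate functionals whose pullbacks $\ell_j\circ q\circ F$ do the job---but this does not affect correctness.
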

\begin{theorem}
[{Theorem 4.2 in \cite[Chap.\ II]{SW_book}}]
\label{thm:extension_continuous_to_continuous}
Let $W$ be a l.c.t.v.s., $W'$ a vector subspace of $W$, and $f \in \mathcal{L}_c(W')$.
Then $f$ extends to an $\overline{f} \in \mathcal{L}_c(W)$.
\end{theorem}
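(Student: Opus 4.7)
The plan is to realize the continuity of $f$ as boundedness by a continuous semi-norm defined on all of $W$, then invoke the Hahn–Banach theorem (Theorem \ref{thm:Hahn-Banach}) to extend $f$ while preserving that bound, and finally read off continuity of the extension from the bound.

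First, I would exploit local convexity of $W$ to produce the semi-norm. Since $f \in \mathcal{L}_c(W')$, in particular $f$ is continuous at the origin of $W'$, so there is a $0$-neighborhood $N'$ in $W'$ with $|f(x)| \leq 1$ for every $x \in N'$. By definition of the subspace topology, $N' = U \cap W'$ for some $0$-neighborhood $U$ in $W$, and since $W$ is locally convex I may shrink $U$ to a convex, circled $0$-neighborhood while keeping $U \cap W' \subset N'$. Every $0$-neighborhood in a t.v.s.\ is absorbing (hence radial), so $U$ is convex, circled, and radial. Thus the Minkowski functional $g = g_U : W \to \mathbb{R}$ (see \eqref{eq:Minkowski functional}) is a well-defined semi-norm on $W$ by Proposition \ref{prop:Minkowski_functional}.

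Next I would verify the two properties needed: (i) $|f(x)| \leq g(x)$ for all $x \in W'$, and (ii) $g$ is continuous on $W$. For (i), if $g(x) < \lambda$ then $x \in \lambda U$, so $x/\lambda \in U \cap W' \subset N'$ (using $x \in W'$), whence $|f(x)| \leq \lambda$ by linearity and the choice of $N'$; taking the infimum over such $\lambda$ gives $|f(x)| \leq g(x)$. For (ii), a routine check using the circled and convex properties of $U$ shows $\lambda U \subset \mu U$ whenever $0 \leq \lambda \leq \mu$, so $g(x) \leq \varepsilon$ holds for every $x \in \varepsilon U$; hence $g$ is bounded by $\varepsilon$ on the $0$-neighborhood $\varepsilon U$, giving continuity of $g$ at $0$ and therefore (by subadditivity) on all of $W$.

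Now I would apply Theorem \ref{thm:Hahn-Banach} to the pair $(W, g)$, the subspace $W'$, and the functional $f$: this yields a linear extension $\overline{f} \in \mathcal{L}(W)$ with $|\overline{f}(x)| \leq g(x)$ for all $x \in W$. The continuity of $\overline{f}$ then follows for free from the continuity of $g$: for any $\varepsilon > 0$, on the $0$-neighborhood $\varepsilon U \subset W$ one has $|\overline{f}(x)| \leq g(x) \leq \varepsilon$, so $\overline{f}$ is continuous at $0$, and by linearity continuous everywhere. Therefore $\overline{f} \in \mathcal{L}_c(W)$ is the required continuous extension of $f$.

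The main subtlety, and the step where local convexity is essential, is producing the globally defined dominating semi-norm $g$ on $W$: without local convexity one could not guarantee a convex circled absorbing $0$-neighborhood $U$, and hence Proposition \ref{prop:Minkowski_functional} would not be available to convert the continuity of $f$ into a semi-norm bound suitable for Theorem \ref{thm:Hahn-Banach}. Once $g$ is in hand, the remaining steps—Hahn–Banach extension and the continuity check—are essentially formal.
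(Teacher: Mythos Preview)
Your argument is correct and is essentially the standard proof of this Hahn--Banach consequence. Note, however, that the paper does not supply its own proof of this statement: it is quoted verbatim as Theorem~4.2 in \cite[Chap.~II]{SW_book} and used as a black box throughout (e.g., in Lemma~\ref{lem:effect_correspondence}, Corollary~\ref{cor:extension_from_fin_dim}, and the appendices). So there is no ``paper's proof'' to compare against; your write-up simply fills in the textbook argument that the authors cite.

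One minor comment on presentation: in step~(i) you write ``if $g(x) < \lambda$ then $x \in \lambda U$''. Strictly, $g(x) < \lambda$ gives $x \in \mu U$ for some $\mu < \lambda$, and then $\mu U \subset \lambda U$ follows because $U$ is convex, circled, and contains $0$ (which you do note later). It would read more smoothly to make that containment explicit at the point of use rather than deferring it to step~(ii).
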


Note that the weak topology $\sigma(\widetilde{V})$ on $\widetilde{V}$ with respect to the original topology $\mathcal{T}$ of $\widetilde{V}$ is weaker than or equal to $\mathcal{T}$, therefore $\widetilde{\mathcal{S}}$ is also compact with respect to $\sigma(\widetilde{V})$.
Now we have the following property:
\begin{lemma}
\label{lem:induced_on_S_coincide}
We have $\sigma(\widetilde{V})|_V = \mathcal{T}(V) = \sigma(V,\mathcal{L}^b_{\mathcal{S}}(V))$.
\end{lemma}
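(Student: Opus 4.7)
The first equality $\mathcal{T}(V) = \sigma(V,\mathcal{L}^b_{\mathcal{S}}(V))$ holds by the very definition of $\mathcal{T}(V)$ given at the start of Appendix \ref{sec:appendix_topology_S_V}, so the substantive content of the lemma is the equality $\sigma(\widetilde{V})|_V = \mathcal{T}(V)$. My plan is to obtain this as a direct consequence of Proposition \ref{prop:weak_top_quotient} together with the observation that $\mathcal{T}(V)$ is already a weak topology (hence equal to the weak topology generated by \emph{its own} continuous dual).

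More concretely, I would first note that $V$ is a l.c.\ Hausdorff t.v.s., and its completion $\widetilde{V}$ (discussed in Appendix \ref{sec:appendix_completion_S_V}) is again a l.c.\ Hausdorff t.v.s.\ containing $V$ as a dense topological vector subspace; in particular, the relative topology $\mathcal{T}(\widetilde{V})|_V$ coincides with $\mathcal{T}(V)$ by the construction of the completion. Applying Proposition \ref{prop:weak_top_quotient} with $W = \widetilde{V}$ and $W' = V$, I conclude that
\begin{displaymath}
\sigma(\widetilde{V})\big|_V = \sigma(V,\mathcal{L}_c(V)) \enspace,
\end{displaymath}
where $\mathcal{L}_c(V)$ denotes the set of $\mathcal{T}(V)$-continuous linear functionals on $V$.

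It then remains to verify $\sigma(V,\mathcal{L}_c(V)) = \mathcal{T}(V)$. Since $\mathcal{T}(V) = \sigma(V,\mathcal{L}^b_{\mathcal{S}}(V))$ by definition, every element of $\mathcal{L}^b_{\mathcal{S}}(V)$ is by construction $\mathcal{T}(V)$-continuous, hence $\mathcal{L}^b_{\mathcal{S}}(V) \subset \mathcal{L}_c(V)$, which gives $\mathcal{T}(V) \subset \sigma(V,\mathcal{L}_c(V))$. The reverse inclusion $\sigma(V,\mathcal{L}_c(V)) \subset \mathcal{T}(V)$ is immediate from the definition of $\mathcal{L}_c(V)$, since a topology making a family of maps continuous is at least as strong as the weak topology induced by that family. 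Combining the two inclusions yields the desired equality.

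I do not anticipate any genuine obstacle here; the proof is essentially a bookkeeping exercise chaining the definition of $\mathcal{T}(V)$, the identification $\mathcal{T}(\widetilde{V})|_V = \mathcal{T}(V)$ furnished by the completion construction, and the cited Proposition \ref{prop:weak_top_quotient}. The only point that deserves explicit mention is the tautological fact that any weak topology equals the weak topology generated by its own continuous dual, which is used in the final step.
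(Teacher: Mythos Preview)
Your proof is correct and slightly more streamlined than the paper's. The paper argues directly: it observes $\sigma(\widetilde{V})|_V \subset \mathcal{T}(V)$ (from $\mathcal{T}|_V = \mathcal{T}(V)$), and for the reverse inclusion it takes each $f \in \mathcal{L}^b_{\mathcal{S}}(V)$, extends it via Theorem~\ref{thm:extension_continuous_to_continuous} (Hahn--Banach) to a $\mathcal{T}$-continuous linear functional $g$ on $\widetilde{V}$, and then notes that $g$ is $\sigma(\widetilde{V})$-continuous, so $f = g|_V$ is $\sigma(\widetilde{V})|_V$-continuous. Your route instead invokes Proposition~\ref{prop:weak_top_quotient} (which already packages this Hahn--Banach extension step) to obtain $\sigma(\widetilde{V})|_V = \sigma(V,\mathcal{L}_c(V))$, and then cites the already-noted fact (end of Appendix~\ref{sec:appendix_completion_S_V}) that $\mathcal{T}(V)$ coincides with its own weak topology. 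The two arguments are equivalent in content; yours simply delegates the extension step to the cited proposition, while the paper unwinds it explicitly. One small point worth making explicit in your write-up is that Proposition~\ref{prop:weak_top_quotient} requires $W = \widetilde{V}$ to be locally convex, which follows since the completion of a l.c.\ Hausdorff t.v.s.\ is again l.c.---the paper also uses this implicitly when it applies Theorem~\ref{thm:extension_continuous_to_continuous} to $\widetilde{V}$.
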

\begin{proof}
Note that $\sigma(\widetilde{V})|_V \subset \sigma(V,\mathcal{L}^b_{\mathcal{S}}(V))$ since $\mathcal{T}|_V = \mathcal{T}(V)$ by the definition of $\widetilde{V}$.
Thus it suffices to show that each $f \in \mathcal{L}^b_{\mathcal{S}}(V)$ is continuous with respect to $\sigma(\widetilde{V})|_V$.
Now this $f$ is ($\mathcal{T}|_V$)-continuous since $\mathcal{T}|_V = \mathcal{T}(V)$, therefore Theorem \ref{thm:extension_continuous_to_continuous} implies that $f$ extends to a $\mathcal{T}$-continuous $g \in \mathcal{L}(\widetilde{V})$.
This $g$ is also $\sigma(\widetilde{V})$-continuous by the definition of $\sigma(\widetilde{V})$, therefore $f = g|_V$ is continuous with respect to $\sigma(\widetilde{V})|_V$, as desired.
Hence the lemma holds.
\end{proof}

In what follows, continuity of a map from $\widetilde{V}$ is considered with respect to $\sigma(\widetilde{V})$ instead of $\mathcal{T}$ unless otherwise specified.
Let $\widetilde{V}_0$ denote the intersection of the kernels $\ker(f)$ of all $f \in \mathcal{L}_c(\widetilde{V})$.
Let $\pi$ denote the quotient map $\widetilde{V} \to \widetilde{V}/\widetilde{V}_0$, and let $\widetilde{\mathcal{T}} = \pi(\sigma(\widetilde{V}))$ denote the quotient topology on $\pi(\widetilde{V})$ induced by $\sigma(\widetilde{V})$.
Note that for any $f \in \mathcal{L}_c(\widetilde{V})$, there exists a unique $\overline{f} \in \mathcal{L}_c(\pi(\widetilde{V}))$ such that $f = \overline{f} \circ \pi$, and any element of $\mathcal{L}_c(\pi(\widetilde{V}))$ is obtained in this manner.
Thus by Proposition \ref{prop:weak_top_quotient}, the topology $\widetilde{\mathcal{T}}$ of $\pi(\widetilde{V})$ is a weak topology and coincides with $\sigma(\pi(\widetilde{V}),\mathcal{F})$ where $\mathcal{F} = \{\overline{f} \mid f \in \mathcal{L}_c(\widetilde{V})\}$, therefore $\pi(\widetilde{V})$ is a l.c.t.v.s.\ that is Hausdorff by the definition of $\pi(\widetilde{V})$.
Note that $\pi(V)$ is a linear subspace of $\pi(\widetilde{V})$ and $\pi(\mathcal{S})$ is convex in $\pi(V)$.
Similarly, $\pi(\widetilde{\mathcal{S}})$ is also convex in $\pi(\widetilde{V})$, and $\pi(\widetilde{\mathcal{S}})$ is compact since $\widetilde{\mathcal{S}}$ is compact and $\pi$ is continuous.
On the other hand, since $\sigma(\widetilde{V}) \subset \mathcal{T}$, $V$ is $\mathcal{T}$-dense in $\widetilde{V}$ and $\mathcal{S}$ is ($\mathcal{T}|_{\widetilde{\mathcal{S}}}$)-dense in $\widetilde{\mathcal{S}}$, it follows that $V$ is also $\sigma(\widetilde{V})$-dense in $\widetilde{V}$ and $\mathcal{S}$ is also ($\sigma(\widetilde{V})|_{\widetilde{\mathcal{S}}}$)-dense in $\widetilde{\mathcal{S}}$, therefore $\pi(V)$ is dense in $\pi(\widetilde{V})$ and $\pi(\mathcal{S})$ is dense in $\pi(\widetilde{\mathcal{S}})$ since $\pi$ is continuous.
Moreover, we have the following two properties:
\begin{lemma}
\label{lem:relative_topology_pi_V}
We have $\widetilde{\mathcal{T}}|_{\pi(V)} = \sigma(\pi(V),\mathcal{L}^b_{\pi(\mathcal{S})}(\pi(V)))$.
\end{lemma}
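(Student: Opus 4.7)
My plan is to treat both sides of the claimed equality as weak topologies on $\pi(V)$ and prove they have the same generating family of linear functionals. On the left, since $\widetilde{\mathcal{T}}$ is already a weak topology (as noted in the paragraph just before the lemma, using Proposition~\ref{prop:weak_top_quotient}), I can write
\begin{displaymath}
\widetilde{\mathcal{T}} = \sigma(\pi(\widetilde{V}),\mathcal{F}), \quad \mathcal{F} = \{\overline{f} \mid f \in \mathcal{L}_c(\widetilde{V})\}.
\end{displaymath}
Restricting to a subspace preserves this presentation, so $\widetilde{\mathcal{T}}|_{\pi(V)} = \sigma(\pi(V),\mathcal{F}|_{\pi(V)})$. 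Thus it suffices to prove the set equality
\begin{displaymath}
\mathcal{F}|_{\pi(V)} = \{\overline{f}|_{\pi(V)} \mid f \in \mathcal{L}_c(\widetilde{V})\} = \mathcal{L}^b_{\pi(\mathcal{S})}(\pi(V)).
\end{displaymath}

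For the forward inclusion, given $f \in \mathcal{L}_c(\widetilde{V})$, the pullback satisfies $\overline{f}|_{\pi(V)} \circ \pi|_V = f|_V$, which is continuous on $V$ with respect to $\sigma(\widetilde{V})|_V = \sigma(V,\mathcal{L}^b_{\mathcal{S}}(V))$ by Lemma~\ref{lem:induced_on_S_coincide}. A standard fact about weak topologies (the continuous dual of a space equipped with $\sigma(V,\mathcal{G})$ for a linear subspace $\mathcal{G} \subset \mathcal{L}(V)$ is $\mathcal{G}$ itself) then yields $f|_V \in \mathcal{L}^b_{\mathcal{S}}(V)$, so $f|_V$ is bounded on $\mathcal{S}$, so $\overline{f}|_{\pi(V)}$ is bounded on $\pi(\mathcal{S})$.

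For the reverse inclusion, take $h \in \mathcal{L}^b_{\pi(\mathcal{S})}(\pi(V))$. Then $h \circ \pi|_V$ is a linear functional on $V$ bounded on $\mathcal{S}$, hence an element of $\mathcal{L}^b_{\mathcal{S}}(V)$, hence continuous for $\sigma(\widetilde{V})|_V$ by Lemma~\ref{lem:induced_on_S_coincide} again. Since $\widetilde{V}$ with $\sigma(\widetilde{V})$ is a l.c.t.v.s.\ (it is a weak topology), Theorem~\ref{thm:extension_continuous_to_continuous} extends $h \circ \pi|_V$ to some $f \in \mathcal{L}_c(\widetilde{V})$. Now $\overline{f}|_{\pi(V)} \circ \pi|_V = f|_V = h \circ \pi|_V$, and since $\pi|_V : V \to \pi(V)$ is surjective, this forces $\overline{f}|_{\pi(V)} = h$, giving the desired representation.

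The main subtlety I expect is the temptation to characterize continuous functionals on $\pi(V)$ via the restricted quotient map $\pi|_V$, which need not itself be a quotient map onto $\pi(V)$. My plan avoids this by never going through an abstract ``continuous dual of $\pi(V)$'' argument; instead I work entirely at the level of generators of the weak topology, using that $\pi|_V$ is merely surjective (no openness needed) to transport the boundedness condition from $\mathcal{S}$ to $\pi(\mathcal{S})$ and back. The other delicate point is the use of the dual-of-a-weak-topology identification for $V$, which is applied only where Lemma~\ref{lem:induced_on_S_coincide} has already been established, and where $\mathcal{L}^b_{\mathcal{S}}(V)$ is visibly a linear subspace of $\mathcal{L}(V)$.
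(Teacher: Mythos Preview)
Your proof is correct and follows essentially the same strategy as the paper: both arguments reduce the claim to identifying the relevant set of linear functionals on $\pi(V)$, and both use the pull-back through $\pi|_V$ together with Lemma~\ref{lem:induced_on_S_coincide} and Theorem~\ref{thm:extension_continuous_to_continuous} for the direction ``bounded on $\pi(\mathcal{S})$ $\Rightarrow$ generator of $\widetilde{\mathcal{T}}|_{\pi(V)}$''. The only minor difference is in the other direction: where you invoke the standard identification of the continuous dual of $(V,\sigma(V,\mathcal{L}^b_{\mathcal{S}}(V)))$ with $\mathcal{L}^b_{\mathcal{S}}(V)$, the paper instead extends a $(\widetilde{\mathcal{T}}|_{\pi(V)})$-continuous functional to $\pi(\widetilde{V})$ via Theorem~\ref{thm:extension_continuous_to_continuous} and then uses compactness of $\widetilde{\mathcal{S}}$ to conclude boundedness; both routes are short and standard.
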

\begin{proof}
Since $\widetilde{\mathcal{T}}|_{\pi(V)}$ is a weak topology by Proposition \ref{prop:weak_top_quotient}, it suffices to show that an $f \in \mathcal{L}(\pi(V))$ is ($\widetilde{\mathcal{T}}|_{\pi(V)}$)-continuous if and only if $f \in \mathcal{L}^b_{\pi(\mathcal{S})}(\pi(V))$.
First, let $f \in \mathcal{L}^b_{\pi(\mathcal{S})}(\pi(V))$.
Then $f \circ \pi|_V \in \mathcal{L}^b_{\mathcal{S}}(V)$, therefore $f \circ \pi|_V \in \mathcal{L}_c(V)$ by the definition of the topology of $V$.
By Lemma \ref{lem:induced_on_S_coincide}, $f \circ \pi|_V$ is also ($\sigma(\widetilde{V})|_V$)-continuous.
Thus Theorem \ref{thm:extension_continuous_to_continuous} implies that $f \circ \pi|_V$ extends to a $g \in \mathcal{L}_c(\widetilde{V})$.
Take the $\overline{g} \in \mathcal{L}_c(\pi(\widetilde{V}))$ corresponding to $g$.
Then we have $\overline{g}(\pi(v)) = g(v) = f(\pi(v))$ for any $v \in V$, therefore $\overline{g}|_{\pi(V)} = f$.
Thus $f$ is ($\widetilde{\mathcal{T}}|_{\pi(V)}$)-continuous.

Secondly, let $f \in \mathcal{L}(\pi(V))$ that is ($\widetilde{\mathcal{T}}|_{\pi(V)}$)-continuous.
Then by Theorem \ref{thm:extension_continuous_to_continuous}, this $f$ extends to a $g \in \mathcal{L}_c(\pi(\widetilde{V}))$.
Now $g \circ \pi \in \mathcal{L}_c(\widetilde{V})$, therefore $B = g \circ \pi(\widetilde{\mathcal{S}})$ is bounded in $\mathbb{R}$ since $\widetilde{\mathcal{S}}$ is compact.
Moreover, we have $f(\pi(s)) = g(\pi(s)) \in B$ for each $s \in \mathcal{S}$, therefore $f(\pi(\mathcal{S})) \subset B$ is also bounded in $\mathbb{R}$.
Thus we have $f \in \mathcal{L}^b_{\pi(\mathcal{S})}(\pi(V))$.
Hence Lemma \ref{lem:relative_topology_pi_V} holds.
\end{proof}
\begin{lemma}
\label{lem:pi_V_bijective}
$\pi|_V$ is a bijection from $V$ to $\pi(V)$.
\end{lemma}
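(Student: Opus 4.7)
The plan is to prove the bijectivity by first observing surjectivity (which is immediate from $\pi(V)$ being by definition the image of $V$) and then reducing injectivity to a separation statement about continuous linear functionals on $V$ that we have already established.

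For injectivity, I would show $V \cap \widetilde{V}_0 = \{0\}$, where $\widetilde{V}_0 = \bigcap_{f \in \mathcal{L}_c(\widetilde{V})} \ker(f)$. So let $v \in V \cap \widetilde{V}_0$; then $g(v) = 0$ for every $g \in \mathcal{L}_c(\widetilde{V})$. I would next invoke the argument used in the proof of Lemma \ref{lem:induced_on_S_coincide}: given any $f \in \mathcal{L}^b_{\mathcal{S}}(V)$, the definition of $\mathcal{T}(V) = \sigma(V,\mathcal{L}^b_{\mathcal{S}}(V))$ makes $f$ continuous on $V$, hence also $(\sigma(\widetilde{V})|_V)$-continuous by Lemma \ref{lem:induced_on_S_coincide}, so by Theorem \ref{thm:extension_continuous_to_continuous} applied to the l.c.t.v.s.\ $(\widetilde{V},\sigma(\widetilde{V}))$ this $f$ extends to some $g \in \mathcal{L}_c(\widetilde{V})$. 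Consequently $f(v) = g(v) = 0$, and since $f \in \mathcal{L}^b_{\mathcal{S}}(V)$ was arbitrary, we have $f(v) = 0$ for every $f \in \mathcal{L}^b_{\mathcal{S}}(V)$.

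Finally, I would use Lemma \ref{lem:Hausdorff_V}: the topology $\sigma(V,\mathcal{L}^b_{\mathcal{S}}(V))$ on $V$ is Hausdorff, which is equivalent to saying that the family $\mathcal{L}^b_{\mathcal{S}}(V)$ separates the points of $V$. Applying this separation property to the pair $v$ and $0 \in V$ forces $v = 0$, completing the proof of injectivity and hence of the lemma.

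There is no real obstacle here: the lemma is essentially a bookkeeping consequence of (i) the Hahn--Banach extension step already packaged as Theorem \ref{thm:extension_continuous_to_continuous}, (ii) the identification of relative topologies in Lemma \ref{lem:induced_on_S_coincide}, and (iii) the Hausdorff property of $V$ from Lemma \ref{lem:Hausdorff_V}. The only point where one must be careful is to notice that the separating family needed for injectivity of $\pi|_V$ is precisely $\mathcal{L}^b_{\mathcal{S}}(V)$ rather than $\mathcal{L}_c(\widetilde{V})$ directly, which is why Lemma \ref{lem:induced_on_S_coincide} (identifying what continuity on $V$ means after restriction from $\widetilde{V}$) is the pivotal input.
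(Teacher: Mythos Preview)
Your proof is correct and follows essentially the same approach as the paper: both argue injectivity by producing, for a nonzero (or nonidentical pair of) element(s) of $V$, a continuous linear functional on $\widetilde{V}$ that does not vanish there, using Lemma~\ref{lem:Hausdorff_V} for separation in $V$, Lemma~\ref{lem:induced_on_S_coincide} to match the relevant topologies, and Theorem~\ref{thm:extension_continuous_to_continuous} for the extension to $\widetilde{V}$. The only cosmetic difference is that the paper works directly with two distinct points $v,v'$ rather than reducing to $V \cap \widetilde{V}_0 = \{0\}$.
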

\begin{proof}
Let $v$ and $v'$ be distinct elements of $V$.
Then, since $V$ is Hausdorff by Lemma \ref{lem:Hausdorff_V} and the topology of $V$ is a weak topology, there exists an $f \in \mathcal{L}_c(V)$ such that $f(v) \neq f(v')$.
Now Lemma \ref{lem:induced_on_S_coincide} and Theorem \ref{thm:extension_continuous_to_continuous} imply that this $f$ extends to a $g \in \mathcal{L}_c(\widetilde{V})$, and we have $g(v) \neq g(v')$.
Thus $v - v' \not\in \widetilde{V}_0$ and $\pi(v) \neq \pi(v')$.
Hence the lemma holds.
\end{proof}
By Lemma \ref{lem:relative_topology_pi_V}, Lemma \ref{lem:pi_V_bijective}, and the definition of $\mathcal{T}(V)$, the map $\pi|_V$ is an isomorphism of t.v.s.\ from $V$ to $\pi(V)$.
Moreover, $\pi|_{\mathcal{S}}:\mathcal{S} \to \pi(\mathcal{S})$ is also an isomorphism of convex structures.
The current situation is summarized as follows:
\begin{itemize}
\item $\pi(\widetilde{V})$ is a l.c.\ Hausdorff t.v.s.\ with a weak topology;
\item $\pi(V)$ is a topological vector subspace of $\pi(\widetilde{V})$, with induced topology equal to $\sigma(\pi(V),\mathcal{L}^b_{\pi(\mathcal{S})}(\pi(V)))$, that is dense in $\pi(\widetilde{V})$;
\item $\pi(\mathcal{S}) \simeq \overline{\mathcal{S}_0}$ is a convex subset of $\pi(V)$ that contains the origin of $\pi(V)$ and satisfies $\mathrm{Aff}(\pi(\mathcal{S})) = \pi(V)$, with the relative topology $\sigma(\pi(\mathcal{S}),\mathcal{E}(\pi(\mathcal{S})))$;
\item $\pi(\widetilde{\mathcal{S}})$ is the closure of $\pi(\mathcal{S})$ in $\pi(\widetilde{V})$ that is convex and compact.
\end{itemize}

Note that the above objects $\pi(\mathcal{S})$, $\pi(V)$, $\pi(\widetilde{\mathcal{S}})$, and $\pi(\widetilde{V})$ will be the desired objects in Theorem \ref{thm:state_space_embedded} if the affine hull of $\pi(\widetilde{\mathcal{S}})$ coincides with $\pi(\widetilde{V})$.
However, this is not necessarily guaranteed in general.
Instead, we take a linear subspace $W = \mathrm{Aff}(\pi(\widetilde{\mathcal{S}}))$ of $\pi(\widetilde{V})$ (note that $\pi(\widetilde{\mathcal{S}})$ contains the origin of $\pi(\widetilde{V})$).
Then $W$ is also a l.c.\ Hausdorff t.v.s., and the topology of $W$ is also a weak topology by Proposition \ref{prop:weak_top_quotient}.
This $W$ contains $\pi(V)$ since $\pi(V) = \mathrm{Aff}(\pi(\mathcal{S}))$, and $\pi(V)$ is dense in $W$ since it is dense in $\pi(\widetilde{V})$.
On the other hand, $\pi(\widetilde{\mathcal{S}})$ is also the compact closure of $\pi(\mathcal{S})$ in $W$ since $\pi(\widetilde{\mathcal{S}}) \subset W$.
Moreover, by taking the completion $X$ of the Hausdorff uniform space $\pi(\widetilde{\mathcal{S}})$, the compact subset $\pi(\widetilde{\mathcal{S}})$ of the Hausdorff space $X$ is closed in $X$, therefore $X = \mathrm{cl}_X(\pi(\widetilde{\mathcal{S}})) = \pi(\widetilde{\mathcal{S}})$ and $\pi(\widetilde{\mathcal{S}})$ itself is complete.
Thus the objects $\pi(\mathcal{S})$, $\pi(V)$, $\pi(\widetilde{\mathcal{S}})$, and $W$ play the roles of $\mathcal{S}$, $V$, $\widetilde{\mathcal{S}}$, and $\widetilde{V}$ in Theorem \ref{thm:state_space_embedded}, respectively.
Hence the existence of the objects in Theorem \ref{thm:state_space_embedded} is proven.

\section{Uniqueness of the Objects in Theorem \ref{thm:state_space_embedded}}
\label{sec:appendix_uniqueness}
Finally, we prove the uniqueness of the objects in Theorem \ref{thm:state_space_embedded} (in the sense specified in the statement).
Let $(\mathcal{S},V,\widetilde{\mathcal{S}},\widetilde{V})$ and $(\mathcal{S}',V',\widetilde{\mathcal{S}}',\widetilde{V}')$ be two collections of the objects as in the statement.
First, since $\mathcal{S} \simeq \overline{\mathcal{S}_0} \simeq \mathcal{S}'$, there exists an affine isomorphism $f:\mathcal{S} \to \mathcal{S}'$.
Since $V = \mathrm{Aff}(\mathcal{S})$ and $V' = \mathrm{Aff}(\mathcal{S}')$, this $f$ extends to an affine isomorphism $V \to V'$, denoted also by $f$ (thus $f(\mathcal{S}) = \mathcal{S}'$).
Now note that the topology $\mathcal{T}(V)$ of $V$ is also the weakest topology to make every \emph{affine} functional $g$ on $V$, such that $g(\mathcal{S})$ is bounded in $\mathbb{R}$, a continuous map.
The same also holds for $V'$.
Moreover, for each affine functional $g$ on $V$, $g(\mathcal{S})$ is bounded if and only if $g \circ f^{-1}(\mathcal{S}')$ is bounded.
Thus it follows from the above properties of $\mathcal{T}(V)$ and $\mathcal{T}(V')$ that the affine isomorphism $f:V \to V'$ is also a homeomorphism of topological spaces.

From now, we show that this $f:V \to V'$ extends to the map $\widetilde{V} \to \widetilde{V}'$ specified in Theorem \ref{thm:state_space_embedded}.
For the purpose, take the completions $W$ and $W'$ of $\widetilde{V}$ and of $\widetilde{V}'$, respectively (cf., Appendix \ref{sec:appendix_completion_S_V}).
Then $W$ is also a Hausdorff t.v.s.\ and contains $\widetilde{V}$ (hence $V$) as a dense topological vector subspace.
The same also holds for $W'$ and $\widetilde{V}'$.
Since $W$ and $W'$ are complete, $V$ is dense in $W$, and $V'$ is dense in $W'$, it follows that the above homeomorphism $f:V \to V'$ extends to a homeomorphism $W \to W'$, denoted also by $f$.
Now we have the following:
\begin{lemma}
\label{lem:isom_W_to_W'}
The above map $f:W \to W'$ is also an affine isomorphism.
\end{lemma}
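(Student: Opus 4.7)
The map $f\colon W\to W'$ is already known to be a homeomorphism, hence in particular a bijection, so the only thing left to verify is that $f$ preserves affine combinations on all of $W$. My plan is to transport the affine identity, which already holds on the dense subspace $V$ by construction, to the whole of $W$ by a standard density-plus-continuity argument, exploiting that $W$ and $W'$ are topological vector spaces (so addition and scalar multiplication are continuous), that $V$ is dense in $W$ and $V'$ in $W'$, and that $f$ and $f^{-1}$ are continuous.

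In detail, I would fix arbitrary $w_{1},w_{2}\in W$ and a scalar $\lambda\in\mathbb{R}$, and choose nets $(v_{1}^{\alpha})$, $(v_{2}^{\alpha})$ in $V$ with $v_{i}^{\alpha}\to w_{i}$ in $W$, which is possible by density. Continuity of the vector space operations of $W$ gives
\[
\lambda v_{1}^{\alpha}+(1-\lambda)v_{2}^{\alpha}\longrightarrow \lambda w_{1}+(1-\lambda)w_{2}\quad\text{in }W,
\]
and continuity of $f$ then yields $f(\lambda v_{1}^{\alpha}+(1-\lambda)v_{2}^{\alpha})\to f(\lambda w_{1}+(1-\lambda)w_{2})$ in $W'$. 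On the other hand, since $v_{i}^{\alpha}\in V$ and $f|_{V}$ is affine,
\[
f(\lambda v_{1}^{\alpha}+(1-\lambda)v_{2}^{\alpha})=\lambda f(v_{1}^{\alpha})+(1-\lambda)f(v_{2}^{\alpha}),
\]
and the right-hand side tends to $\lambda f(w_{1})+(1-\lambda)f(w_{2})$ by continuity of $f$ applied componentwise and continuity of the operations in $W'$. Since $W'$ is Hausdorff (as a subspace/completion related to the Hausdorff space $\widetilde{V}'$), the limit of the net $f(\lambda v_{1}^{\alpha}+(1-\lambda)v_{2}^{\alpha})$ is unique, so
\[
f(\lambda w_{1}+(1-\lambda)w_{2})=\lambda f(w_{1})+(1-\lambda)f(w_{2}),
\]
which is exactly affinity on $W$. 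Combined with bijectivity, $f$ is an affine isomorphism $W\to W'$.

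I do not expect this step to present any real obstacle: the argument is entirely standard continuity-and-density boilerplate. The two points that deserve a moment of care are (i) working with nets rather than sequences, because neither $W$ nor $W'$ is assumed first countable, and (ii) checking that the Hausdorff property is available on both sides so that the limits identified above are genuinely unique; both points are immediate from the setup of the preceding appendix. The same mechanism would also show that $f$ carries the \emph{linear} structure across (take $f-f(0)$), but for the statement of the lemma the affine identity above is all that is needed.
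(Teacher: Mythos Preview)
Your proof is correct and is essentially the same argument as the paper's: both transport the affine identity from the dense subspace $V$ to all of $W$ via continuity and the Hausdorff property of $W'$. The only cosmetic difference is packaging---you work directly with nets and uniqueness of limits, whereas the paper phrases the same step as ``the two continuous maps $g_1,g_2:V\times V\to W'$ agree, hence their (unique) continuous extensions to $W\times W$ agree''---but the content is identical.
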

\begin{proof}
It suffices to show that $f$ preserves the convex combination of two elements.
Let $\lambda,\mu \geq 0$ such that $\lambda + \mu = 1$.
Then for each $v,v' \in V$, we have $\lambda f(v) + \mu f(v') = f(\lambda v + \mu v')$ since $f|_V:V \to V'$ is affine.
This implies that the two maps $g_1(v,v') = \lambda f(v) + \mu f(v')$ and $g_2(v,v') = f(\lambda v + \mu v')$ from $V \times V$ to $W'$ coincide with each other.
Since $V \times V$ is dense in $W \times W$ and $W'$ is complete, the continuous map $g_1 = g_2:V \times V \to W'$ has a unique continuous extension $W \times W \to W'$.
On the other hand, both $\overline{g_1}(w,w') = \lambda f(w) + \mu f(w')$ and $\overline{g_2}(w,w') = f(\lambda w + \mu w')$ are continuous maps from $W \times W$ to $W'$ and satisfy that $\overline{g_1}|_{V \times V} = g_1$ and $\overline{g_2}|_{V \times V} = g_2$.
This implies that $\overline{g_1} = \overline{g_2}$, therefore $f(\lambda w + \mu w') = \lambda f(w) + \mu f(w')$ for any $w,w' \in W$.
Hence the lemma holds.
\end{proof}

Since $\widetilde{\mathcal{S}} = \mathrm{cl}_{\widetilde{V}}(\mathcal{S})$ is compact, $\widetilde{\mathcal{S}}$ is also closed in $W$, therefore $\mathrm{cl}_W(\mathcal{S}) = \widetilde{\mathcal{S}}$.
Similarly, we have $\mathrm{cl}_{W'}(\mathcal{S}') = \widetilde{\mathcal{S}'}$.
Since $f:W \to W'$ is a homeomorphism and $f(\mathcal{S}) = \mathcal{S}'$, we have $f(\widetilde{\mathcal{S}}) = \widetilde{\mathcal{S}}'$.
Moreover, since $f:W \to W'$ is an affine isomorphism, $\widetilde{V} = \mathrm{Aff}(\widetilde{\mathcal{S}})$, and $\widetilde{V}' = \mathrm{Aff}(\widetilde{\mathcal{S}}')$, we have $f(\widetilde{V}) = \widetilde{V}'$.
Thus $f|_{\widetilde{V}}:\widetilde{V} \to \widetilde{V}'$ is the desired map specified in Theorem \ref{thm:state_space_embedded}.
Hence the proof of Theorem \ref{thm:state_space_embedded} is concluded.

\end{document}